\documentclass[sigconf]{acmart}
\AtBeginDocument{%
  }

\setcopyright{acmlicensed}
\copyrightyear{2018}
\acmYear{2018}
\acmDOI{XXXXXXX.XXXXXXX}
\acmConference[Conference acronym 'XX]{Make sure to enter the correct
  conference title from your rights confirmation email}{June 03--05,
  2018}{Woodstock, NY}
\acmISBN{978-1-4503-XXXX-X/2018/06}

\definecolor{lipicsYellow}{rgb}{0.99,0.78,0.07}

\newtheorem{observation}{Observation}

\newtheoremstyle{noparentheses}
  {\topsep}   % ABOVESPACE
  {\topsep}   % BELOWSPACE
  {\itshape}  % BODYFONT
  {\parindent}       % INDENT (empty value is the same as 0pt)
  {\scshape} % HEADFONT
  {.}         % HEADPUNCT
  {5pt plus 1pt minus 1pt} % HEADSPACE
  {\thmname{#1} \thmnumber{#2} \thmnote{#3}}% CUSTOM-HEAD-SPEC
\theoremstyle{noparentheses}
\newtheorem*{claim*}{Claim}

\usepackage{multirow}
\usepackage{environ,tabularx}

\usepackage{listings}
\lstdefinelanguage{pseudocode}{
  morekeywords={
    algorithm,method,new,and,not,
    if,then,else,while,do,repeat,until,seq,
    seqdo,return,call,
    for,pardo,foreach,print,output,input,exit,
    break,loop,end,begin,goto,par,global,local,
    read,write,stop,idle,procedure,function,
    throw,catch
  },
  sensitive=true,
  morecomment=[l]{//},
  morestring=[b]",
  morestring=[s]{``}{''},
}
\lstdefinestyle{pseudocode}{
  language=pseudocode,
  basicstyle=\small\rmfamily,
  commentstyle=\upshape\color{black!50},
  keywordstyle=\bfseries\itshape,
  identifierstyle=\itshape,
  stringstyle=\rmfamily,
  columns=fullflexible,
  mathescape,
  literate={<-}{{$\gets$\ }}2,
  numbers=left,
  numberstyle=\scriptsize\sffamily,
}

\usepackage{booktabs}

\usepackage{mathtools}

\usepackage{multicol}

\newcommand{\undiradj}{-}
\DeclareMathOperator{\adj}{\sim}
\DeclareMathOperator{\nadj}{\not\sim}

\newcommand\Para{\mathrm{para\text-}}

\newcommand\Class[1]{%
  \mathchoice%
  {\text{\normalfont\small$\mathrm{#1}$}}%
  {\text{\normalfont\small$\mathrm{#1}$}}%
  {\text{\normalfont$\mathrm{#1}$}}%
  {\text{\normalfont$\mathrm{#1}$}}%
}
\newcommand\PClass{\mathrm{p\text{\normalfont-}}\Class}

\newcommand\EM[2]{\Class{RM}_{\mathrm{#1}}^{#2}}
\newcommand\PEM[2]{\PClass{RM}_{\mathrm{#1}}^{#2}}

\makeatletter
\newcommand{\problemtitle}[1]{\gdef\@problemtitle{#1}}% Store problem title
\newcommand{\probleminput}[1]{\gdef\@probleminput{#1}}% Store problem input
\newcommand{\problemparameter}[1]{\gdef\@problemparameter{#1}}% Store problem parameter
\newcommand{\problemquestion}[1]{\gdef\@problemquestion{#1}}% Store problem question

\NewEnviron{problem}{
  \problemtitle{}\probleminput{}\problemquestion{}% Default input is empty
  \BODY% Parse input
  \par\addvspace{.5\baselineskip}
  \noindent
  \begin{tabularx}{\linewidth}{@{\hspace{\parindent}} l X c}
    \multicolumn{2}{@{\hspace{\parindent}}l}{\@problemtitle} \\% Title
    \textbf{Input:} & \@probleminput \\% Input
    \textbf{Question:} & \@problemquestion% Question
  \end{tabularx}
  \par\addvspace{.5\baselineskip}
}

\NewEnviron{parameterizedproblem}{
  \problemtitle{}\probleminput{}\problemparameter{}\problemquestion{}% Default input is empty
  \BODY% Parse input
  \par\addvspace{.5\baselineskip}
  \noindent
  \begin{tabularx}{\linewidth}{@{\hspace{\parindent}} l X c}
    \multicolumn{2}{@{\hspace{\parindent}}l}{\@problemtitle} \\% Title
    \textbf{Input:} & \@probleminput \\% Input
    \textbf{Parameter:} & \@problemparameter \\% Parameter
    \textbf{Question:} & \@problemquestion% Question
  \end{tabularx}
  \par\addvspace{.5\baselineskip}
}
\makeatother

\newcommand{\Lang}[1]{\text{\normalfont\textsc{#1}}}
\newcommand{\PLang}[2][]{\mathrm{p}_{#1}\Lang{-#2}}

\usepackage{xfrac}
\usepackage{tikz}
\usetikzlibrary{graphs, arrows.meta, positioning, backgrounds, shapes, decorations.pathmorphing,}

\definecolor{ba.yellow}{RGB}{252,190,18}
\definecolor{ba.gray}{RGB}{153,153,156}
\colorlet{ba.blue}{blue!80!black}
\colorlet{ba.red}{red!90!black}
\definecolor{ba.orange}{RGB}{233,116,81}
\definecolor{ba.pine}{RGB}{67,154,134}
\colorlet{ba.green}{green!50!black}
\definecolor{ba.violet}{RGB}{88, 53, 94}

\definecolor{Bittersweet}{HTML}{C04F17}
\definecolor{VioletRed}{HTML}{EF58A0}

\tikzset{
  every picture/.style={semithick},%
  > = {Stealth[round,sep]},
  grayed/.style = { black!30 },
  node/.style={
    draw,
    circle,
    minimum size=3mm,
    inner sep=0.3pt,
    font=\scriptsize,
    fill=white
  },%
  vertex/.style={ % Florian's favorite alias...
    node 
  },
  small node/.style={
    node,
    minimum size=4.5pt,
    inner sep=0pt,
    outer sep=0pt,
    font=\tiny
  },%
  on/.style={fill=white,inner sep=-.4pt,circle},
  mapsto/.style={|[sep]->,blue!50!black},
  arbitrary/.style={dashed,draw=black!50},
  present/.style={black},
  missing/.style={black!10},
  virtual node/.style={node,draw=none,fill=none},
  white node/.style={node},
  black node/.style={node, fill=black, text=white},
  gray node/.style={node, arbitrary, fill=black!20},
  red node/.style={node, red!70!black, text=white},
  blue node/.style={node, blue!90!black, text=white},
  green node/.style={node, green!60!black, text=white},
  hilight/.style={red!80!black},
  new/.style={
    line width=.4pt,
    double distance=.8pt,
    draw=white,
    double=lipicsYellow,
  },
  new edge/.style={
    draw=lipicsYellow,
    thick
  },
  modify edge/.style={
    draw=lipicsYellow,
    ultra thick
  },
  delete edge/.style={
    modify edge
  },
  add edge/.style={
    modify edge,
    dotted
  },
  green edge node/.style={
    small node,
    pos=0.7,
    green!60!black,
  },
  red edge node/.style={
    small node,
    pos=0.3,
    red!70!black,
  },
  gate node/.style={
    small node,
    minimum size=7pt,
  },
  gate output/.style={
    double=lipicsYellow,
    line width=.5pt,
    double distance=.8pt,
    draw=white,
  },
  gate input/.style={
    double=Bittersweet,
    line width=.5pt,
    double distance=.8pt,
    draw=white,
  },
  circuit input/.style={
    minimum size=3mm,
    inner sep=0.5pt,
    font=\tiny,
  },
  circuit gate/.style={
    draw,
    minimum size=5mm,
    inner sep=0.5pt,
    font=\footnotesize,
    fill=white
  },
  circuit wire/.style={
    draw,
    -{Stealth[round,sep,length=2mm]},
    thick
  }
}

\def\AtMost{\textcolor{black!50}{p \rlap{\,$\preceq$}}}
\def\AtLeast{\textcolor{black!50}{\llap{$\preceq$\,} p\rlap.}}
\def\Between{\textcolor{black!50}{\llap{$\preceq$\,} p\rlap{\,$\preceq$}}}

\begin{document}

%%
%% The "title" command has an optional parameter,
%% allowing the author to define a "short title" to be used in page headers.
\title{The Descriptive~Complexity of Relation~Modification~Problems}

%%
%% The "author" command and its associated commands are used to define
%% the authors and their affiliations.
%% Of note is the shared affiliation of the first two authors, and the
%% "authornote" and "authornotemark" commands
%% used to denote shared contribution to the research.

\author{Florian Chudigiewitsch}
\email{fch@tcs.uni-luebeck.de}
\orcid{0000-0003-3237-1650}
\affiliation{%
  \institution{Institute for Theoretical Computer Science, Universität zu Lübeck}
  \city{Lübeck}
  \country{Germany}
}
\author{Marlene Gründel}
\email{mgruendel@ac.tuwien.ac.at}
\orcid{0000-0003-3470-1326}
\affiliation{%
  \institution{Algorithms and Complexity Group, TU Wien}
  \city{Wien}
  \country{Austria}
}
\author{Christian Komusiewicz}
\email{c.komusiewicz@uni-jena.de}
\orcid{0000-0003-0829-7032}
\affiliation{%
  \institution{Institute of Computer Science, Friedrich Schiller University Jena}
  \city{Jena}
  \country{Germany}
}
\author{Nils Morawietz}
\email{nils.morawietz@uni-jena.de}
\orcid{1234-5678-9012}
\affiliation{%
  \institution{Institute of Computer Science, Friedrich Schiller University Jena}
  \city{Jena}
  \country{Germany}
}
\author{Till Tantau}
\email{tantau@tcs.uni-luebeck.de}
\affiliation{%
  \institution{Institute for Theoretical Computer Science, Universität zu Lübeck}
  \city{Lübeck}
  \country{Germany}
}

%%
%% By default, the full list of authors will be used in the page
%% headers. Often, this list is too long, and will overlap
%% other information printed in the page headers. This command allows
%% the author to define a more concise list
%% of authors' names for this purpose.
\renewcommand{\shortauthors}{Chudigiewitsch et al.}

%%
%% The abstract is a short summary of the work to be presented in the
%% article.
\begin{abstract}
A relation modification problem gets a logical structure and a natural number~$k$ as input and asks whether $k$ modifications of the structure suffice to make it satisfy a predefined property.
We provide a complete classification of the classical and parameterized
complexity of relation modification problems -- the latter w.\,r.\,t.\ the modification budget~$k$ -- based on the descriptive complexity
of the respective target property. 
We consider different types of logical structures on which
modifications are performed: Whereas monadic structures and undirected graphs
without self-loops each yield their own complexity landscapes, we find that
modifying undirected graphs with self-loops, directed graphs, or arbitrary
logical structures is equally hard w.\,r.\,t.\ quantifier patterns. 

Moreover, we observe that all classes of problems considered in this paper are
subject to a strong dichotomy in the sense that they are either very easy to
solve (that is, they lie in $\Para\Class{AC}^{0\uparrow}$ or $\Class{TC}^0$) or
intractable (that is, they contain $\Class W[2]$-hard or $\Class{NP}$-hard
problems).

\end{abstract}

%%
%% The code below is generated by the tool at http://dl.acm.org/ccs.cfm.
%% Please copy and paste the code instead of the example below.
%%
\begin{CCSXML}
<ccs2012>
   <concept>
       <concept_id>10003752.10003790.10003799</concept_id>
       <concept_desc>Theory of computation~Finite Model Theory</concept_desc>
       <concept_significance>300</concept_significance>
       </concept>
   <concept>
       <concept_id>10003752.10003777.10003787</concept_id>
       <concept_desc>Theory of computation~Complexity theory and logic</concept_desc>
       <concept_significance>300</concept_significance>
       </concept>
   <concept>
       <concept_id>10003752.10003777.10003779</concept_id>
       <concept_desc>Theory of computation~Problems, reductions and completeness</concept_desc>
       <concept_significance>300</concept_significance>
       </concept>
 </ccs2012>
\end{CCSXML}

\ccsdesc[300]{Theory of computation~Finite Model Theory}
\ccsdesc[300]{Theory of computation~Complexity theory and logic}
\ccsdesc[300]{Theory of computation~Problems, reductions and completeness}

%%
%% Keywords. The author(s) should pick words that accurately describe
%% the work being presented. Separate the keywords with commas.
\keywords{graph problems, descriptive
  complexity, edge modification, parameterized complexity, circuit complexity}
%% A "teaser" image appears between the author and affiliation
%% information and the body of the document, and typically spans the
%% page.

\received{20 February 2007}
\received[revised]{12 March 2009}
\received[accepted]{5 June 2009}

%%
%% This command processes the author and affiliation and title
%% information and builds the first part of the formatted document.
\maketitle

\section{Introduction}

%== Intro to distance to triviality problems
Approaching computational problems by examining their \emph{distance to
triviality}~\cite{GuoHN04} has greatly enhanced our theoretical understanding of
their inherit complexities and inspired practical algorithm engineering in
recent years. 
To accomplish the latter, one might ask how much (and at which costs) a given input structure needs to be modified so that the resulting structure has a specific (more or less ``trivial'') property that can be algorithmically exploited.
%However diverse such problems may be, the overall goal is to
%modify a given input as little as possible so that the resulting structure has a
%specific property that can be algorithmically exploited. \ck{This sentence gives a wrong impression of distance to triviality. This is an approach for parameterization, not necessarily related to modification problems. An example would be Euclidean TSP parameterized by number~$k$ of cities that are not on the convex hull (problem is trivial if $k=0$). I would think that the common thing about all these modification problems are rather some Maximum parsimony principle/Occam's razor/Max Likelihood-type arguments}
Different variants of
distances between structures are conceivable, depending on the type of modification permitted: For
instance, if the input to a problem is given as a \emph{logical structure}
consisting of a universe and relations of arbitrary arity, we can allow to
\emph{add} tuples to relations (known as the \emph{addition} or \emph{completion
problem}), to \emph{delete} tuples (the \emph{deletion problem}), or allow both
(the \emph{editing problem}). 

%== Edge deletion Problems
Historically, the study of these meta-problems has been dominated by the special
case where the input structure describes a graph and tuples correspond to
\emph{edges}. The interest in such \emph{edge modification
problems}~\cite{Yannakakis78, Yannakakis81} stems from the fact that a variety
of well-studied problems in theoretical computer science can be formulated as
such. For instance, the problem of \emph{cluster deletion} (\emph{cluster
editing}, respectively)~\cite{HuffnerKMN10, KonstantinidisP19} asks whether we
can delete (add or delete, respectively) at most $k$ edges of an input graph to
obtain a cluster graph, that is, a disjoint union of cliques. Similarly, the
problem of \emph{triangle deletion}~\cite{BrugmannKM09, ShengX23} tasks us to
delete at most $k$ edges to make a graph triangle-free. As another example, the
\emph{feedback arc set} problem~\cite{Kudelic22} asks whether deleting at most
$k$ edges suffices to remove all cycles from a given directed graph. There are further
applications of graph modification problems in related disciplines such as
machine learning~\cite{BBC04}, network analysis~\cite{FSM15, XSQ24},
sociology~\cite{LT08, LSB12}, databases~\cite{GS76}, computational
biology~\cite{B59, BSY99, RBIL02}, computational physics~\cite{BPLRK16}, and
engineering~\cite{HH87, CNR89, BETT94, JM96}. Analyzing the underlying
meta-problem of graph modification towards a certain property instead of the aforementioned individual
problems offers the obvious advantage of enabling solution procedures to be
formulated as generally (and thus reusable) as possible.

%== studying complexity
While for the closely related meta-problem of \emph{vertex modification},
polynomial-time solvability has been already precisely characterized for hereditary properties in 1980 by Lewis and Yannakakis~\cite{LewisY80}, analyzing the computational
complexity of edge modification problems proved to be more persistent and has
been the subject of numerous publications in the last decades, many of which can
be found in the surveys~\cite{BurzynBD06, Mancini08, Natanzon2001109}. Since
modification problems are equipped with a natural modification budget $k$, they
are particularly suited to be studied through the lens of \emph{parameterized
complexity}~\cite{DowneyF99, Cygan15}, where one analyzes the running time of
algorithms not only in terms of the input size $|I|$, but also with respect to
a numerical parameter $k$. From this perspective, too, the complexity of graph
modification problems has already been extensively investigated. For an
overview, we refer the reader to the recent survey of Crespelle et
al.~\cite{Crespelle2023100556}.

%== Descriptive Complexity of edge deletion
One way to systematically perform the computational complexity analysis of edge
modification problems is to group problems according to the \emph{descriptive
complexity}~\cite{I98, G17} of the targeted graph class. This approach is
rooted in the hope that if a graph class is easy to describe, then the
corresponding edge modification problem, which attempts to transform a given graph into a graph within that class, may also be easy to
solve. A natural way to measure the descriptive complexity of a property is to
analyze the \emph{quantifier pattern} that is needed to express this property in a
certain logic. Börger et al.\ initiated this line of research by providing a
complete classification of decidable fragments of first-order logic based on
their quantifier pattern~\cite{BorgerGG1997}. Subsequent research pursued
a complexity-theoretic approach and explored to which extent quantifier patterns
of second-order problem formulations affect the
classical~\cite{EiterGG00,GottlobKS04,Tantau15} and parameterized
complexity~\cite{BannachCT23} of these problems.

Most pertinent to our work, Fomin et al.~\cite{FominGT20}
investigated the parameterized complexity of vertex and edge modification
problems based on the number of \emph{quantifier alternations} in the
first-order formalization of the targeted property. With respect to edge
modification, they found that $\Sigma_2$-formulas (first-order formulas starting
with existential quantifiers, followed by universal quantifiers) can only
describe \emph{fixed-parameter tractable} (\emph{FPT}) problems -- that is,
problems that can be solved by algorithms that run in time $f(k)\cdot
|I|^{\mathcal{O}(1)}$, where $f$ is a computable function. On the other hand,
there exist problems expressible as $\Pi_2$-formulas (first-order formulas
starting with universal quantifiers, followed by existential quantifiers) that
are $\Class W[2]$-hard.

\subsubsection*{Our Setting.}

%== Full classification
In this paper, we provide a complete and extensive classification of the
classical and parameterized complexity of edge deletion, edge addition, and edge
editing problems based on the first-order quantifier patterns of the targeted
properties. We go beyond the problem setting proposed in~\cite{FominGT20} by
considering three different types of input graphs: \emph{basic graphs}
(occasionally also called \emph{simple graphs}) which are undirected and have no
self-loops, \emph{undirected graphs,} and \emph{directed graphs}. Since we do
observe differences in the complexity landscapes across these various input
settings, we set out to examine the input-sensitivity of modification problems
more rigorously by allowing arbitrary logical structures as inputs. We refer to
these meta-problems as \emph{relation modification problems}. 

In
Table~\ref{table:summary}, we provide an overview of our main results where the following notations are used: 
For a first-order formula over a vocabulary
$\tau = \{R_1, \dots, R_r\}$, the problem
$\Lang{Relation-Modification}_{\text{arb}}^{\text{del}}$ (abbreviated
$\EM{arb}{\text{del}}$) asks us to tell on input of a logical structure
$\mathcal{A} = (A, R_1^{\mathcal{A}}, \dots, R_r^{\mathcal{A}})$ and a natural
number~$k$ whether there is a tuple $D = (D_1,\dots, D_r)$ with
$D_i \subseteq R_i^{\mathcal{A}}$ for all $i\in \{1, \dots, r\}$ and $\|D\|=\sum_{i=1}^r\left|D_i\right|\leq
k$, whose removal from $\mathcal{A}$ yields a structure $\mathcal{A}'$ for which $\mathcal{A}'\models \phi$ holds.
%For a first-order formula over the vocabulary
%$\tau = \{R_1, \dots, R_r\}$, the problem
%$\Lang{Relation-Modification}_{\text{arb}}^{\text{del}}$ (abbreviated
%$\EM{arb}{\text{del}}$) asks us to tell on input of a logical structure
%$\mathcal{A} = (A, R_1^{\mathcal{A}}, \dots, R_r^{\mathcal{A}})$ and a natural
%number~$k$ whether there is a tuple $D = (D_1,\dots, D_r)$ with
%$D_i \subseteq R_i^{\mathcal{A}}$ for all $i\in \{1, \dots, r\}$ and $\|D\|\leq
%k$, where $\|D\| = \sum_{i=1}^r\left|D_i\right|$ is the number of tuples in the
%sets of $D$, whose removal from $\mathcal{A}$ yields a structure $\mathcal{A}'$,
%such that we have $\mathcal{A}'\models \phi$.\ck{Replace this pretty long sentence by shorter ones?} 
We denote with
$\EM{arb}{\text{add}}$ the related problem that asks whether we can add up to
$k$ tuples such that the new structure satisfies $\phi$. Finally, the problem
$\EM{arb}{\text{edit}}$ allows for the addition and deletion of tuples. For a
modification operation $\otimes\in \{\text{del}, \text{add}, \text{edit}\}$, we
denote with $\EM{dir}{\otimes}$ ($\EM{undir}{\otimes}$, and
$\EM{basic}{\otimes}$, respectively) the restriction of the modification problem
to inputs and target structures that are directed graphs (undirected graphs, and
basic graphs, respectively). Further, let $\EM{mon}{\otimes}$ be the
problem restricted to input structures that contain only monadic
relation symbols. For undirected and basic graphs $G = (V, E)$, where we simply
have $D = (D_1)$ with $D_1\subseteq E$ in the case of deletion (and
$D_1\subseteq V\times V$ in the cases of adding and editing), we use the more
natural measure $\|D\| = \left|\{\{u, v\} \mid (u, v) \in
D_1\}\right|$. We assume throughout that the given first-order formula $\phi$ is
in \emph{prenex normal form}, that is, all quantifiers that occur in $\phi$ are
pushed to the beginning of the formula. The precise sequence of universal and
existential quantifiers at the beginning of $\phi$ is captured by a
\emph{(first-order) quantifier pattern~$p \in \{a, e\}^*$}. For
instance, the formula $\phi = \forall u \forall v \exists y\bigl(u = v \lor u
\adj v \lor (u \adj y\land y\adj v)\big)$ has the quantifier pattern $p=aae$.
For all modification operations $\otimes \in \{\text{del}, \text{add},
\text{edit}\}$ and all structure types $\mathrm{T} \in \{\text{arb}, \text{dir},
\text{undir}, \text{basic}, \text{mon}\}$ we let $\EM{T}{\otimes}(p)$ denote the class
of all problems $\EM{T}{\otimes}(\phi)$ where the input formula~$\phi$ exhibits
the quantifier pattern~$p$. Finally, for all problem variants $\EM{T}{\otimes}$,
we denote with $\PEM{T}{\otimes}$ the parameterized version of this problem that treats the modification budget~$k$ as the parameter.

A complete complexity classification of tuple deletion, addition, and editing
problems for first-order formulas depending on the quantifier pattern $p \in
\{a,e\}^*$ (where $p \preceq q$ means that $p$ is a subsequence of~$q$) is given
in Table~\ref{table:summary}. Note that $\Para\Class{AC}^{0\uparrow} \subseteq
\Para\Class P = \Class{FPT}\subseteq \Class W[2]$ holds and that it is a
standard assumption that both $\Para\Class{AC}^{0\uparrow} \subsetneq \Class{FPT}$ and
$\Class{FPT} \subsetneq \Class W[2]$
are true as well. Let us briefly highlight the key insights from
Table~\ref{table:summary} and summarize the main contributions of this paper:

% For instance, $\Lang{cluster-deletion} \in \EM{basic}{\text{del}}(aaa)$ as the earlier formula~$\phi_{\text{clusters}}$ has three universal quantifiers.

\begin{table*}[tbp]
  \centering
  \caption{Complete Classification of the Classical (above) as well as the Parameterized (below) Complexity of Relation Modification Problems. For the latter, the modification budget is treated as the parameter.}
  \label{table:summary}
  \begin{tabular}{llrcl}
    \toprule
    $\EM{undir}{\otimes}(p)$, $\EM{dir}{\otimes}(p)$, and $\EM{arb}{\otimes}(p)$ 
    & $\subseteq \Class{AC^0}$, when 
    &
    & $\AtMost$
    & $e^*$.
    \\
    & $\not\subseteq \Class{AC^{0}}$ but $\subseteq \Class{TC^{0}}$, when 
    & $a$
    & $\Between$
    & $e^*a$.
    \\
    &  $\cap\ \Class{NP}\text{-hard} \neq \emptyset$, when
    & $aa$ or $ae$
    & $\AtLeast$
    &
    \\
    \midrule
    $\EM{basic}{\otimes}(p)$
    & $\subseteq \Class{AC^0}$, when 
    &
    & $\AtMost$
    & $e^*$ or $a$.
    \\
    & $\not\subseteq \Class{AC^{0}}$ but $\subseteq \Class{TC^{0}}$, when 
    & $ea$ or $ae$ or $aa$
    & $\Between$
    & $e^*a$ or $aa$ or $ae$.
    \\
    & \multirow{2}{*}{$\cap\ \Class{NP}\text{-hard} \neq \emptyset$, when}
    & $aea$ or $aee$ or $aae$ or
    & \multirow{2}{*}{$\AtLeast$}
    &
    \\
    & 
    & $eae$ or $eaa$ or $aaa$
    & 
    &
    \\
    \midrule
    $\EM{mon}{\otimes}(p)$
    & $\subseteq \Class{AC^{0}}$, when
    &
    & $\AtMost$
    & $e^*$.
    \\
    & $\not\subseteq \Class{AC^{0}}$ but $\subseteq \Class{TC^{0}}$, when
    & $a$
    & $\AtLeast$
    &
    \\
    \midrule\midrule
    $\PEM{undir}{\otimes}(p)$, $\PEM{dir}{\otimes}(p)$, and $\PEM{arb}{\otimes}(p)$
    & $\subseteq \Para\Class{AC^{0\uparrow}}$, when 
    &
    & $\AtMost$
    & $e^*a^*$.
    \\
    &  $\cap\ \Class{W}[2]\text{-hard} \neq \emptyset$, when
    & $ae$
    & $\AtLeast$
    &
    \\
    \midrule
    $\PEM{basic}{\otimes}(p)$
    & $\subseteq \Para\Class{AC^{0\uparrow}}$, when 
    &
    & $\AtMost$
    & $e^*a^*$ or $ae$.
    \\
    & \multirow{2}{*}{$\cap\ \Class{W}[2]\text{-hard} \neq \emptyset$, when}
    & $aea$ or $aee$
    & \multirow{2}{*}{$\AtLeast$}
    &
    \\
    & 
    & $aae$ or $eae$
    & 
    &
    \\
    \midrule
    $\PEM{mon}{\otimes}(p)$
    & $\subseteq \Para\Class{AC^{0}}$. 
    &
    & 
    & 
    \\
    \bottomrule
  \end{tabular}
\end{table*}
%== First characterization of ED and to FO formulas in classical complexity
%In particular, this paper provides the first classification of the classical
%complexity of edge deletion and edge modification problems to first-order
%formulas in terms of the exact quantifier patterns. Since our results are stronger than a
%classification based on quantifier alternations, classifications with regards to
%quantifier alternations follow as corollaries.

\subsubsection*{Our Contributions.}
As a first set of contributions, we \textbf{refine} the dichotomy of the
\textbf{parameterized complexity of edge modification problems} provided by
Fomin et~al.~\cite{FominGT20} in multiple ways:
\begin{enumerate}
\item First, we resolve the following question, which is central to
understanding the descriptive features that determine the computational
complexity of edge modification problems: Is it the \textbf{number of
alternations} of quantifiers that solely drives the transition from tractability
to intractability, or is this boundary determined by \textbf{short patterns}
that happen to exhibit a specific number of alternations? As we show, the latter
is indeed the case: All intractability results already hold for very short and
simple patterns. While it was previously known that an intractable problem can
be defined by a formula in $\Pi_2$ (more specifically, one with the quantifier
pattern $aeeee$), we find that already the simplest interesting pattern in
$\Pi_2$, namely $ae$, suffices to establish intractability. 
\item We \textbf{improve existing upper bounds} by showing that all classes of
modification problems that are (fixed-parameter) tractable are actually even in
$\Para\Class{AC}^{0\uparrow}$. From an algorithmic perspective, this implies
that these problems admit efficient \textbf{parallel fixed-parameter
algorithms}. From a logical perspective, this yields an interesting dichotomy in
which all considered problem classes are either very easy to solve or contain
intractable problems.
\item We identify \textbf{new intractable problems} that are of independent
interest, such as $\PLang{Edge-Adding To}$ $\Lang{Radius $r$}$ and
$\PLang{Edge-Editing To Radius $r$}$ for all $r \geq 2$.
\item We contrast the parameterized study of edge modification
towards first-order properties by a systematic investigation of their
\textbf{classical complexities}. Here again, we gain interesting insights by
pursuing our fine-grained pattern-based approach: For instance, with respect to
basic graphs, we find that both, $\Sigma_2$ and $\Pi_2$ contain short tractable
(and non-trivial) fragments, whereas a purely alternation-based study would
label the whole class of formulas in $\Pi_1$ as intractable.
\end{enumerate}
 %== Insights: compare to vertex deletion
Our results complement the complexity landscape of \textbf{vertex
deletion problems} obtained by Bannach et al.~\cite{BannachCT24}. In particular,
we find that the parameterized complexity of edge modification problems is
considerably higher: While for directed graphs, vertex deletion towards all
formulas that exhibit the pattern $e^*a^*e^*$ is tractable, we show that with
respect to edge deletion, there are already intractable problems describable by formulas
with pattern~$ae$.

Going beyond edge modification problems, this paper is the first to
observe that the \textbf{complexity of relation modification problems is
sensitive towards the type of logical structures that are allowed as input and
targeted property}. Both in the parameterized and the classical setting, we
observe a complexity jump when transitioning from basic graphs to the more
general undirected graphs. But interestingly, all further generalizations, that
is, allowing directed graphs or even arbitrary logical structures do not
increase the complexity of the respective problems further. To the other end, if
the vocabulary contains only monadic symbols, the problem is always tractable.

%== Insights: Edge deletion and edge editing same complexity
Lastly and contrary to what one might expect, we show that the \textbf{complexities of
tuple deletion, tuple addition, and tuple editing coincide} for all considered fragments.

\subsubsection*{Further Related Work.} 
Besides the lines of work presented above, the interplay of descriptive and parameterized complexity has attracted further recent interest.
Fomin et al.~\cite{FominGT22}, for example, investigated the parameterized
complexity of graph modification to first-order formulas, with respect to modifications within a certain
\emph{elimination distance}. Moreover, the complexities of a
large family of graph modification operations containing, for example, vertex
and edge deletion, edge contraction, and independent set deletion, were recently
considered by Morelle, Sau, and Thilikos~\cite{MorelleST25} for target classes
that are minor-closed. 
%They present $\Class{FPT}$-algorithms with the same running time as the algorithms for vertex deletion. 

The problems we consider in this paper have also been studied in the context of \emph{hereditary model checking in
first-order logic}. Here, a logical structure is a model of a first-order
formula $\phi$ if the structure and all its substructures are models. Bodirsky
and Guzmán-Pro recently characterized the prefix classes for which hereditary
first-order model checking is tractable~\cite{bodirsky2025}. The complexity of
this problem is in turn connected to the complexity of extensional existential
second-order logic~\cite{bodirsky2025-1}.

Another related field of research is the study of \emph{database
repairs}~\cite{Arenas99}, where both deleting tuples~\cite{CHOMICKI200590} and
adding tuples~\cite{Cali03} have been addressed. Although these works also
consider algorithms and complexity for selected problems, they do not rigorously
examine the modification of tuples in a broader sense. \emph{Consistent query
answering (CQA)} is the related problem, where we do not aim to repair a
database according to some specification, but instead ensure that the answers we
retrieve for each query are consistent with some specifications. The classical
complexity of this problem was investigated in~\cite{STAWORKO20101} and its
parameterized complexity in~\cite{Lopatenko07}. The complexity of checking
whether a repair is valid and of CQA with given integrity constraints is studied
in~\cite{arming16}, and recently, a tetrachotomy of CQA with violated primary
key constraints was provided~\cite{10.1145/3452021.3458334}, where the problem
is either solvable in $\Class{AC}^0$, $\Class{NL}$-complete,
$\Class{P}$-complete or $\Class{coNP}$-complete.

\emph{Parallel Parameterized Algorithms} were first investigated by Cesati and
Ianni~\cite{CesatiI98}. Elberfeld, Stockhusen, and Tantau~\cite{ElberfeldST15}
subsequently provided a general framework for parallel parameterized complexity
classes. As a major result, Chen and Flum showed by a parameterized circuit
lower bound that $\Lang{Clique}$ parameterized by the solution size is not in
$\Para\Class{AC}^0$~\cite{ChenF19}. Subsequent
work~\cite{BannachST15,BannachT16,BannachT18, BannachT20,BannachST23,
BannachCT23} pursued this line of research further, with one of the most
surprising results being that the problem $\Lang{Hitting Set}$ on hypergraphs of
constant edge cardinality is in $\Para\Class{AC}^0$ when parameterized by the
solution size. The uniformity of parameterized circuit classes was thoroughly
studied by Hegeman, Martens, and Laarman~\cite{Hegeman0L25} and practical
implementations of parallel parameterized algorithms on parallel random access
machines were designed in~\cite{Abu-KhzamLSS06, CheethamDRST03}.

\subsubsection*{Organization of this Paper.}

We review basic concepts and termino-logy from Finite Model Theory and
Computational Complexity in Section~\ref{section:background}, and then present
the complexity-theoretic classification of relation modification problems for
basic, undirected, and directed graphs as well as for monadic and arbitrary logical
structures in Section~\ref{section:parameterized} for the parameterized, and in
Section~\ref{section:classical} for the classical setting. Statements whose
proofs are moved to the appendix due to space constraints are marked
with~``$\star$'' in the main text.

% \tcsautomoveaddto{main}{
%   \clearpage
%   \appendix
%   \section{Technical Appendix}
%   In the following, we provide the proofs omitted in the main text. In
%   each case, the claim of the theorem or lemma is stated once more for
%   the reader's convenience. 
% }

\section{Descriptive, Parameterized, and Circuit
 Complexity}
\label{section:background}

\subsubsection*{Terminology from Finite Model Theory.}

%== basics of logic and FMT
We use standard terminology from finite model
theory. For a more thorough introduction than provided in the following, see, for
instance~\cite{EbbinghausF05}. A \emph{relational vocabulary}~$\tau$
(also known as a \emph{signature}) is a set of \emph{relation symbols}
to each of which we assign a positive \emph{arity}, denoted by a
superscript. For example, $\tau = \{P^1, E^2\}$ is a relational
vocabulary with a unary relation symbol $P$ and a binary relation
symbol~$E$. In a \emph{monadic} vocabulary, each relation symbol has arity one.
A \emph{finite $\tau$-structure~$\mathcal A$} consists of a
\emph{finite universe~$A$} and for each of the $r$ relation symbols $R_i \in \tau,
i\in\{1, \dots, r\}$ with
arity~$a_i$ of a relation $R_i^{\mathcal A} \subseteq A^{a_i}$.  We denote the
set of \emph{finite $\tau$-structures} as $\Lang{struc}[\tau]$. We assume the
structures are encoded in a reasonable way, see for example~\cite[page
136]{GradelKLMSVVW07}. We further assume all structures to be ordered.

First-order $\tau$-formulas are defined in the usual way. For a
$\tau$-formula $\phi$ with free variables $x_1,\dots,x_\ell$, a 
structure $\mathcal S \in \Lang{struc}[\tau]$ with universe~$S$, and $s_1,\dots,s_\ell \in S$ we write $\mathcal S \models
\phi[s_1,\dots,s_\ell]$ to denote that $\mathcal S$ is a model of
$\phi$ when the free variables are interpreted as the~$s_i$. If $\phi$
has no free variables, we write $\Lang{models}(\phi)$
for the class of finite models of~$\phi$.
%== decision problems
A \emph{decision problem}~$P$ is a subset of $\Lang{struc}[\tau]$
that is closed under isomorphisms. A formula $\phi$ \emph{describes}~$P$ if $\Lang{models}(\phi) = P$. 

For $\tau$-structures
$\mathcal{A}$ and $\mathcal B$ with universes $A$ and~$B$,
respectively, we say that $\mathcal{A}$ is an \emph{induced
substructure} of $\mathcal{B}$ if $A \subseteq B$ and for all $r$-ary 
$R\in \tau$, we have $R^{\mathcal{A}} = R^{\mathcal{B}} \cap A^r$.
For a set $S\subseteq B$, we denote by $\mathcal{B}\setminus S$ the
substructure induced by $B \setminus S$. 

%== graphs in FMT
We regard \emph{directed} graphs $G = (V, E)$ (which are pairs of a
nonempty vertex set~$V$ and an edge relation $E \subseteq V \times V$)
as logical structures $\mathcal G$ over the vocabulary
$\tau_{\text{digraph}} = \{\adj^2\}$ where $V$ is the universe and
$\adj^{\mathcal G} = E$. An \emph{undirected} graph is a directed
graph that additionally satisfies $\phi_{\text{undirected}} \coloneq
\forall x\forall y (x\adj y \to y\adj x)$, while a \emph{basic} graph
satisfies $\phi_{\text{basic}} \coloneq \forall x\forall y \bigl(x\adj
y \to (y\adj x \land x \neq y)\bigr)$. We will use $u \undiradj v$ as
a shorthand for $\{(u, v), (v, u)\} \subseteq E$.

%== quantifier patterns
For a first-order formula in prenex normal form (meaning all
quantifiers are at the front), we can associate a \emph{quantifier prefix  
pattern} (or \emph{pattern} for short), which are words over the alphabet
$\{e, a\}$.
For example, the formula $\phi_{\text{basic}}$ has the
pattern~$aa$, while the formula $\phi_{\text{degree$\geq$2}} \coloneq
\forall x \exists y_1 \exists y_2 \bigl((x \adj y_1) \land (x\adj y_2)
\land (y_1 \neq y_2)\bigr)$ has the pattern~$aee$. As another example,
the formulas in the class~$\Pi_2$ (which start with a universal
quantifier and have one alternation) are exactly the formulas with a
pattern $p \in \{a\}^* \circ \{e\}^*$, which we write briefly as $p
\in a^*e^*$. We write $p \preceq q$ if $p$ is a subsequence
of~$q$. For instance, $aaee \preceq aeaeae$.

%== Edge deletion variants
When we consider budgets for the modification of logical structures $\mathcal{A}
= (A, R_1^{\mathcal{A}}, \dots, R_r^{\mathcal{A}})$, we have several degrees of
freedom: First, we can specify that  modifications are only allowed with respect
to a single relation, or we can allow to modify all relations of the input
structure. We call the first variant \emph{uniform modification} and the second
variant \emph{unrestricted modification}. In this paper, we consider
unrestricted modification, but address uniform modification in the conclusion.
Second, we can choose whether we count each tuple we add or delete (which is the
more natural measure for directed graphs and arbitrary structures), or only the
number of different sets of elements on which we modify relations (this is the
more natural measure on basic and undirected graphs). For directed graphs and
arbitrary structures, we will use the former measure, whereas for basic and
undirected graphs, we will use the latter. However, we will discuss in the
conclusion that applying the respective opposite counting version yields the
same complexity landscape.

Let $\mathcal{A} = (A, R_1^{\mathcal{A}}, \dots, R_r^{\mathcal{A}})$ be a
logical structure where $R_i$ has arity $a_i$ for all $i \in \{1, \dots, r\}$. A
\emph{modulator}~$S = (S_1,\dots, S_r)$ contains sets $S_i \subseteq A^{a_i}$.
If we have $S_i \subseteq R_i^{\mathcal{A}}$ for all $i\in \{1, \dots, r\}$, we
call $S$ a \emph{deletion modulator}, and if we have $S_i \cap R_i^{\mathcal{A}}
= \emptyset$ for all $i\in \{1, \dots, r\}$, we
call $S$ an \emph{addition modulator}. We write
\begin{align*}
  \|S\| \coloneq \begin{cases}
    \sum_{i=1}^r\left|\bigl\{\{u, v\} \mid (u, v) \in S_i\bigr\}\right| & \parbox[t]{3cm}{$\mathcal{A}$ is an undirected or basic graph,}\\
    \sum_{i=1}^r\left|S_i\right| & \text{otherwise.}
  \end{cases}
\end{align*}

Now, for a structure $\mathcal{A} = (A, R_1^{\mathcal{A}}, \dots,
R_r^{\mathcal{A}})$ and a modulator $S = (S_1, \dots, S_r)$ we define 
\begin{align*}
  \mathcal{A} \triangleright S &\coloneq (A,
R_1^{\mathcal{A}} \mathbin{\triangle} S_{1}, \dots, R_r^{\mathcal{A}} \mathbin{\triangle}
S_{r}).
\end{align*}

To illustrate these notions, let us consider the special case of graphs. For a
basic or undirected graph $G = (V, E)$, the solution for the deletion problem is
simply a set $S \subseteq E$ and for the addition or editing problem it is a set
$S \subseteq V \times V$.

\subsubsection*{Relation modification problems.}

We study a wide variety of relation modification problems in this paper.
As an introductory example, consider the following problem where $\phi$ is a first-order $\tau$-formula:

\begin{problem}
\problemtitle{$\EM{arb}{\mathrm{edit}}(\phi)$}
    \probleminput{ A logical structure $\mathcal{A} = (A, R_1^{\mathcal{A}}, \dots,
    R_r^{\mathcal{A}})$ and a number $k\in\mathbb{N}$.}
    \problemquestion{ Is there an modulator $S$ with $\|S\| \leq k$ such that $\mathcal{A} \triangleright S
    \models \phi$?}
\end{problem}

%== edit, delete, plus
Analogously, we define $\EM{arb}{\mathrm{add}}(\phi)$ to be the question of whether there
is an addition modulator $\|S\| \leq k$ such that $\mathcal{A} \triangleright S\models \phi$, and
$\EM{arb}{\mathrm{del}}(\phi)$ that asks whether there is a
deletion modulator $\|S\| \leq k$ with $\mathcal{A} \triangleright S\models \phi$.

%== Problem variants, defined
We define restrictions on the allowed input and target structure by
considering problems $\EM{dir}{\otimes}(\phi)$, where $\otimes\in
\{\mathrm{add}, \mathrm{del}, \mathrm{edit}\}$ and the vocabulary of $\phi$
consists only of a single binary relation $E$, thus forcing input and target
structures to be directed graphs. Similarly, we denote with $\EM{basic}{\otimes}(\phi) =
\EM{dir}{\otimes}(\phi\land \phi_{\text{basic}}) \cap
\Lang{models}(\phi_{\text{basic}})$ the meta-problem for which the input
and target structures are basic graphs, and $\EM{undir}{\otimes}(\phi) =
\EM{dir}{\otimes}(\phi\land \phi_{\text{undir}}) \cap
\Lang{models}(\phi_{\text{undir}})$ forces input and target structures to be
undirected graphs. For a pattern $p \in \{a,e\}^*$, the class
$\EM{arb}{\otimes}(p)$ contains all problems $\EM{arb}{\otimes}(\phi)$ such that
$\phi$ has pattern~$p$. The classes with the subscripts ``basic'', ``undir'' and
``dir'' are defined similarly.

Fomin et al.~\cite{FominGT20} observed that for each class of formulas with the same quantifier alternation, the complexity of edge deletion and
edge addition coincides on basic graphs. Note that the same holds for tuple
deletion and tuple addition on arbitrary logical structures by the following
observation: For a structure $\mathcal{A}  = (A, R_1^{\mathcal{A}}, \dots,
R_r^{\mathcal{A}})$ let $\overline{\mathcal{A}} = (A, A^{a_1}\setminus R_1^{\mathcal{A}}, \dots,
A^{a_r}\setminus R_r^{\mathcal{A}})$ denote its complement structure, and for a formula
$\phi$, let $\overline{\phi}$ denote the formula obtained from $\phi$ by replacing
every occurrence of $R_i(x_1, \dots, x_{a_i})$ by $\neg R_i(x_1, \dots,
x_{a_i})$ in the setting of arbitrary structures, directed and undirected
graphs, and by replacing every occurrence of $(x \adj y)$ by $\neg (x = y) \land
\neg (x\adj y)$ when we restrict the inputs to basic graphs. Then it is not hard
to verify the following.

\begin{observation}\label{observation:del-add-same}
  For every first-order formula $\phi$ and for every type $\mathrm
  T\in\{\mathrm{arb}, \mathrm{dir},\penalty0 \mathrm{undir},\penalty100
  \mathrm{basic}\}$, $(\mathcal{A}, k)$ is a yes-instance of
   $\EM{T}{\mathrm{add}}(\phi)$ if and only if $(\overline{\mathcal{A}}, k)$ is a
  yes-instance of $\EM{T}{\mathrm{del}}(\overline{\phi})$.
\end{observation}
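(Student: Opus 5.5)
The plan is to build an explicit size-preserving bijection between addition modulators for $\mathcal{A}$ and deletion modulators for $\overline{\mathcal{A}}$. We then check that this bijection commutes with complementation in the sense that
\[
\overline{\mathcal{A} \triangleright S} = \overline{\mathcal{A}} \triangleright S .
\]
Finally, we show that $\mathcal{B} \models \phi$ holds if and only if $\overline{\mathcal{B}} \models \overline{\phi}$ for every structure $\mathcal{B}$ of the relevant type.

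\textbf{Step 1 (modulators).} Let $S = (S_1,\dots,S_r)$ be an addition modulator for $\mathcal{A}$, so $S_i \cap R_i^{\mathcal{A}} = \emptyset$. Since $R_i^{\overline{\mathcal{A}}} = A^{a_i}\setminus R_i^{\mathcal{A}}$, this is equivalent to $S_i \subseteq R_i^{\overline{\mathcal{A}}}$. Hence the very same tuple $S$ is a deletion modulator for $\overline{\mathcal{A}}$, and conversely. The value $\|S\|$ depends only on $S$ and on the type of structure, so it is unchanged. Moreover, for every $i$ we have
\[
A^{a_i}\setminus (R_i^{\mathcal{A}} \mathbin{\triangle} S_i) = (A^{a_i}\setminus R_i^{\mathcal{A}}) \mathbin{\triangle} S_i ,
\]
because complementation and symmetric difference with a fixed set commute. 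This gives $\overline{\mathcal{A}\triangleright S} = \overline{\mathcal{A}}\triangleright S$.

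\textbf{Step 2 (formulas).} For $\mathrm{T}\in\{\mathrm{arb},\mathrm{dir}\}$ we argue by induction on $\phi$. The atomic case holds because $\mathcal{B}\models R_i(\bar b)$ if and only if $\overline{\mathcal{B}}\models \neg R_i(\bar b)$. Equalities, Boolean connectives and quantifiers are untouched by the translation, and both structures share the universe, so these cases are immediate. The translation also leaves the quantifier prefix unchanged, so the pattern is preserved, although the statement does not need this.

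\textbf{Step 3 (graph types).} For undirected graphs the complement of a symmetric relation is symmetric. We therefore need $\overline{\phi\land\phi_{\text{undir}}}$ to be equivalent to $\overline{\phi}\land\phi_{\text{undir}}$ on complements. This holds because $\overline{\phi_{\text{undir}}} = \forall x\forall y(\neg x\adj y\to\neg y\adj x)$ is again symmetry.

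For basic graphs the complement must be interpreted as the basic complement, in which loops stay absent. This matches the replacement $\neg(x=y)\land\neg(x\adj y)$. The atomic step then needs that for $x\neq y$ the formula is $\neg\adj$, and for $x=y$ both sides are false in basic graphs, which holds since basic graphs are loop-free. In this case $\overline{\phi_{\text{basic}}}$ is valid on loop-free symmetric structures, so it is harmless. The modulators must also be taken to be symmetric and loop-free so that Step 1 still applies.

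\textbf{Combining.} Together these steps show that $\mathcal{A}\triangleright S\models\phi$ if and only if $\overline{\mathcal{A}}\triangleright S\models\overline{\phi}$, with the same budget. This proves the observation.

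\textbf{Main obstacle.} I expect the main difficulty to be the bookkeeping in the basic and undirected cases. There the conjuncts $\phi_{\text{basic}}$ and $\phi_{\text{undir}}$, and the intersection with their model classes, have to be handled consistently under the chosen notion of complement.
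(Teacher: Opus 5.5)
Your proposal is correct and is essentially the verification the paper leaves implicit (``it is not hard to verify''): the same modulator $S$ serves both problems, complementation commutes with $\mathbin{\triangle} S$, and $\mathcal{B}\models\phi$ iff $\overline{\mathcal{B}}\models\overline{\phi}$ by induction on $\phi$. Your point that for basic graphs the complement must be taken within the loop-free pairs, with a loop-free symmetric $S$, is a worthwhile precision that the paper glosses over. Read literally, the paper's definition $A^2\setminus E$ gives $\overline{\mathcal{A}}$ all self-loops, so $\overline{\mathcal{A}}$ would not be a valid basic-graph instance.
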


%Consequently, it suffices to study the problem versions that employ general editing operations (that is, additions and deletions) and the ones that only allow for deletions.
We use this observation in subsequent sections by interchangeably employing relation deletions or relation additions, depending on which version is best suited for a concise presentation.

\subsubsection*{Terminology from (Parameterized) Complexity.}

%== Classical
In order to establish tractability in the classical setting, we will show that the respective
problems are contained in the circuit class $\Class{AC}^0$, which is the class of problems
that can be defined by a family of unbounded fan-in circuits $(C_n)_{n\in
\mathbb{N}}$ with constant depth and size $n^{O(1)}$ and that may use
and-, or-, and not-gates, or the class $\Class{TC}^0$, which is defined
analogously, but where circuits can additionally use threshold-gates. Questions of uniformity will not be further
addressed in this paper, but we remark that \textsc{dlogtime} uniformity can be
achieved in all cases. Moreover, recall that $\Class{TC}^0 \subseteq \Class L
\subseteq \Class P$ holds. 

%== basic parameterized complexity, central problem
We use standard definitions from parameterized complexity, see for
instance~\cite{Cygan15, DowneyF99, FlumG06}. A \emph{parameterized problem} is a
set $Q\subseteq \Sigma^* \times \mathbb{N}$ for an alphabet~$\Sigma$. Given an
\emph{instance} $(x,k) \in \Sigma^* \times \mathbb{N}$, we call $x$ the
\emph{input} and $k$ the \emph{parameter}. We denote the parameterized version of a specific problem
by adding the prefix $\PLang[]{}$, so, for example, $\EM{dir}{\mathrm{del}}(\phi)$
becomes $\PEM{dir}{\mathrm{del}}(\phi)$.
%
%== Parameterized circuit classes
Similarly to the classical setting, we will establish the tractability of a parameterized problem by proving its containment in a certain parameterized circuit class, known as $\Para\Class{AC}^{0\uparrow}$:
 Formally, 
$\Para\Class{AC}^{0\uparrow}$ is defined as the class of all parameterized problems that can be
decided by a family of unbounded fan-in circuits $(C_{n, k})_{n, k\in
\mathbb{N}}$ that may use and-, or-, and not-gates, and have depth $f(k)$ and
size $f(k) \cdot n^{O(1)}$ for some computable function $f$. The class
$\Para\Class{AC}^{0}$ is defined analogously with the additional restriction that the depth of the circuit must be in $O(1)$. Note that $\Para\Class{AC}^{0}$-circuits can simulate threshold-gates when the
threshold is the parameter~\cite[Lemma 3.3]{BannachST15}. It
holds that $\Para\Class{AC}^{0} \subseteq \Para\Class{AC}^{0\uparrow} \subseteq
\Para\Class{P} = \Class{FPT}$, for a suitable notion of uniformity,
see~\cite{BannachST15}.
%== Reductions
We use $\Class{AC}^{0}$-many-one reductions in the
classical setting, and $\Para\Class{AC}^{0}$-many-one reductions in the
parameterized setting. Both are very weak notions of reducibility, which has
the benefit of yielding strong hardness results.

%%%%%%%%%%%%%%%%%%%%%%%%%%
% Parameterized
%%%%%%%%%%%%%%%%%%%%%%%%%%
\section{Parameterized Complexity of Relation Modification Problems}
\label{section:parameterized}
% \tcsautomoveaddto{main}{\subsection{Proofs for
% Section~\ref{section:parameterized}}}

In this section, we will address the parameterized complexity of relation
modification problems based on the first-order quantifier patterns of the
targeted properties.

\subsection{Undirected Graphs, Directed Graphs, and Arbitrary Structures}

We first establish the tractability landscapes of relation
modification problems with respect to undirected and directed graphs and
arbitrary structures. More specifically, we will show that these landscapes
coincide and exhibit the following complexities.

%{Pattern-Based Complexity Dichotomy for $\PEM{undir}{\otimes}(p)$ and $\PEM{dir}{\otimes}(p)$}
\begin{theorem}\label{theorem:p-undir}
  Let $p \in \{a,e\}^*$ be a pattern. For each modification operation $\otimes\in \{\mathrm{del},
  \mathrm{add}, \mathrm{edit}\}$, we have
  \begin{enumerate}
  \item $\PEM{arb}{\otimes}(p) \subseteq \Para\Class{AC}^{0\uparrow}$, if $p
    \preceq e^*a^*$.
  \item $\PEM{undir}{\otimes}(p)$ and $\PEM{dir}{\otimes}(p)$ both contain a $\Class{W}[2]$-hard
    problem, if $ae\preceq p$.
  \end{enumerate}
\end{theorem}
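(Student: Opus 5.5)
For the first item, I would work directly with a formula $\phi = \exists x_1\cdots\exists x_s\,\forall y_1\cdots\forall y_t\,\psi$ with $\psi$ quantifier-free; every pattern $p\preceq e^*a^*$ can be padded to such a form. The circuit has three layers.

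First, an outer or-gate ranges over all $n^s$ choices $\bar a$ for $\bar x$. For a fixed $\bar a$ it remains to decide whether some modulator $S$ with $\|S\|\le k$ makes $\mathcal A\triangleright S\models\forall\bar y\,\psi(\bar a,\bar y)$.

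Second, for fixed $\bar a$ I would use a bounded search tree. Suppose some $\bar b$ violates $\psi(\bar a,\cdot)$. Let $B=\{\bar a,\bar b\}$; this set has at most $s+t$ elements. The truth of $\psi(\bar a,\bar b)$ depends only on which tuples of $B^{a_i}$ lie in the relations. Hence every solution must modify one of at most $c=\sum_i (s+t)^{a_i}$ tuples over $B$, where $c$ is a constant depending only on $\phi$. For deletion we keep only candidates currently in the relation, for addition only those currently absent, and for editing all of them. We branch on these $c$ candidates and recurse with budget $k-1$.

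Third, each node of the search tree must be computed in constant depth. A node has to determine the lexicographically first violating tuple $\bar b$ with respect to the structure modified along the current branch. This is an $\Class{AC}^0$ task given the built-in order and the at most $k$ modifications made so far. The node then tests whether any violation remains at all, which is likewise $\Class{AC}^0$.

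The tree has depth $k$ and $c^k$ leaves, and each level costs constant depth. This yields circuits of depth $O(k)$ and size $c^k\cdot n^{O(1)}$, so the problem lies in $\Para\Class{AC}^{0\uparrow}$. The main care needed is in the correctness of the branching: modifications outside $B$ never repair the violation at $\bar b$, so the branching is exhaustive.

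For the second item, it suffices to exhibit hard problems for the pattern $ae$ itself, since quantifiers can be padded with dummy variables. I would reduce $\PLang{Dominating Set}$, which is $\Class W[2]$-complete, using the formula
\[
  \phi=\forall x\,\exists y\,\bigl(\neg(y\adj y)\land(x=y\lor x\adj y)\bigr).
\]
Given a basic graph $G$ and a number $k$, output $G$ with a self-loop added at every vertex, together with the same $k$. This input is a valid undirected graph and also a valid directed graph, and the map is clearly a $\Para\Class{AC}^0$ reduction.

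For deletion, removing the loops of a dominating set $D$ yields a model of $\phi$. Conversely, deleting non-loop edges only destroys adjacencies and therefore never helps, so the vertices whose loops were deleted form a dominating set. The addition case then follows from Observation~\ref{observation:del-add-same} applied to $\overline\phi$, which has the same pattern $ae$.

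For editing, I expect the main obstacle to be ruling out cheating by edge additions. Adding loops is useless, since $\phi$ only rewards vertices without loops. Now suppose a solution adds an edge $x\adj y$ (or $x\undiradj y$), and suppose this edge is used to witness some vertex. Then $y$ must be loopless, so its loop was deleted within the solution. Replacing the added edge by deleting the loop of $x$ has the same cost and still witnesses $x$, via the case $x=y$. In the undirected case it also still witnesses $y$, through $y$'s own deleted loop. Repeating this exchange produces a solution that consists only of loop deletions, which brings us back to the deletion case. Hence $\PEM{undir}{\mathrm{edit}}(\phi)$ and $\PEM{dir}{\mathrm{edit}}(\phi)$ are $\Class W[2]$-hard as well.
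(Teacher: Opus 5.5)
Your proof is correct. For the first item you use the paper's algorithm: guess the existential witnesses, then run a depth-$k$ bounded search tree that branches over the constantly many tuples spanned by a violating universal assignment. You drop the paper's preliminary enumeration of all relations on the witness set, which the search tree already covers.

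For the second item you take a different route. The paper reduces from Set Cover to $\forall x\exists y(\neg x\adj x\to(x\adj y\land\neg y\adj y))$. It puts loops only on the set vertices and makes $2k+1$ loopless copies of every element. Editing is then handled by counting: an added loop or edge makes at most two copies happy, so some set vertex adjacent to each element must lose its loop.

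You instead reduce from Dominating Set by adding a loop at every vertex. Your formula's $x=y$ disjunct lets a loopless vertex witness itself, which enables a local exchange argument. A useful added edge $x\adj y$ with $y$ loopless can be traded for deleting the loop of $x$ at equal cost, and added edges that witness nobody, as well as non-loop deletions, can simply be dropped. So your reduction needs no copies and is essentially the identity.

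What the paper's template buys is uniformity. The same Set Cover construction with $2k+1$ copies is reused for the basic-graph lemmas ($aea$, $aee$, $aae$, $eae$). There no loops exist, a self-witnessing exchange like yours is unavailable, and the counting argument carries the editing case.
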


 We prove this theorem by individually establishing the respective upper and lower bounds. Note that the upper bound on arbitrary logical structures naturally carries over to the more restricted structures of basic, undirected, and directed graphs. Conversely, the hardness results for undirected and directed graphs imply hardness of more general structures.

\subsubsection*{Upper Bound.} 
We establish the tractability of the fragment $e^*a^*$ for all parameterized problem versions by proving the following lemma.

\begin{lemma}\label{lemma:p-undir-e*a*}
  For each modification operation $\otimes\in \{\mathrm{del}, \mathrm{add}, \mathrm{edit}\}$, we have $\PEM{arb}{\otimes}(e^*a^*) \subseteq
  \Para\Class{AC}^{0\uparrow}$.
\end{lemma}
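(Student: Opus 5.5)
Write $\phi = \exists x_1\cdots\exists x_s\,\forall y_1\cdots\forall y_t\,\psi$ with $\psi$ quantifier-free, and fix an instance $(\mathcal A,k)$. The plan is to guess the $s$ witnesses in parallel and then, for each guess, solve a bounded-size hitting-set problem with a depth-$f(k)$ search tree.

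\textbf{Step 1: guess the witnesses.} We branch in parallel over all $|A|^s$ assignments $\vec a$ to $\vec x$; this is a big or-gate of polynomial fan-in, since $s$ is a constant. For a fixed $\vec a$, the question becomes whether there is a modulator $S$ with $\|S\|\le k$ such that $\mathcal A\triangleright S\models\forall\vec y\,\psi(\vec a,\vec y)$.

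\textbf{Step 2: restate as hitting the violations.} For a modulator $S$ and a tuple $\vec b\in A^t$, whether $\mathcal A\triangleright S\models\psi(\vec a,\vec b)$ holds depends only on which atoms over the constants $\vec a\vec b$ are flipped. There are at most $c=c(\phi)$ such atomic tuples, and $c$ does not depend on $|A|$. Call $\vec b$ \emph{violated} under $S$ if $\psi(\vec a,\vec b)$ fails in $\mathcal A\triangleright S$. We then run a bounded search tree of depth at most $k$:
\begin{itemize}
\item Start with $S=\emptyset$.
\item In each node, compute in constant depth whether some $\vec b$ is violated under the current $S$. If none is, accept.
\item Otherwise, use an $\Class{AC}^0$ selection of the lexicographically least violated $\vec b$; this is possible because structures are ordered.
\item Any solution extending the current $S$ must flip at least one of the at most $c$ atomic tuples over $\vec a\vec b$ not already in $S$. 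Branch on these, respecting the allowed type (addition or deletion, and in the undirected/basic case flipping $(u,v)$ together with $(v,u)$).
\end{itemize}

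\textbf{Why the branching is sound.} The key point is monotonicity. If the final solution $S^*\supseteq S$ satisfies $\psi(\vec a,\vec b)$, then $S^*$ differs from $S$ on some atom over $\vec a\vec b$. Otherwise $\psi(\vec a,\vec b)$ would evaluate identically under $S$ and $S^*$. Since in every branch we only add tuples to $S$ and never remove them, the search is exhaustive.

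\textbf{Step 3: circuit size and depth.} The tree has at most $c^k$ leaves and depth $k$. Each level costs constant circuit depth, namely one test for violated tuples and one lex-least selection, so the total depth is $O(k)$. The size is $c^k\cdot|A|^{O(s+t)}$, which places the problem in $\Para\Class{AC}^{0\uparrow}$.

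\textbf{Main obstacle.} The delicate part is implementing a level of the search tree as a constant-depth circuit. The selected violated tuple $\vec b$ must be passed on as a one-hot encoding. The $j$-th branch must then be materialized as a gate family representing the relations $R_i^{\mathcal A}\triangle S$, while carrying the up to $k$ modified tuples explicitly as wires. The plan is to represent $S$ throughout as a list of at most $k$ tuples, each encoded one-hot. A membership test ``is tuple $\vec u$ in $S$'' is then an or over $k$ equality tests, which has constant depth. Consequently, evaluating $\psi$ under $S$ for all $\vec b$ simultaneously costs constant depth per level. Uniformity would follow routinely.
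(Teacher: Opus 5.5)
Your proposal is correct and follows essentially the same route as the paper: guess the existential witnesses in parallel, then run a depth-$k$ bounded search tree that repeatedly selects the lexicographically first violated assignment to the universal variables and branches over the constantly many possible edits on the atoms it spans. The only difference is cosmetic. The paper first guesses, in parallel, the entire structure on the witness set $\{v_1,\dots,v_c\}$ and only then starts the search. You instead fold those atoms into the branching, which also spares you from charging that initial modulator against $k$ separately. You also spell out the exhaustiveness argument and the constant-depth implementation of each level, both of which the paper leaves implicit.
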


\begin{proof}
  We generalize a construction given by Fomin et
  al.~\cite{FominGT20} such that it (a) handles arbitrary logical structures,
  and (b) is parallelized, showing the containment of the problems in $\Para\Class{AC}^{0\uparrow}$.

  For $\PEM{arb}{\mathrm{edit}}(\phi) \in \PEM{arb}{\mathrm{edit}}(e^*a^*)$, let
  $\phi$ be a first-order $\tau$-formula over a vocabulary $\tau = \{R_1, \dots, R_r\}$, where
  relation $R_i$ has arity $a_i$ and let furthermore $\phi$ be of the form $\exists x_1 \cdots \exists x_c
  \forall y_1 \cdots \forall y_d (\psi)$, where $\psi$ is a quantifier-free
  formula and $c,d \in \mathbb{N}$. We aim for an algorithm
  that, given a logical structure $\mathcal{A} = (A, R_1^{\mathcal{A}}, \dots,
  R_r^{\mathcal{A}})$ and $k \in \mathbb N$, decides in $\Para\Class{AC}^{0\uparrow}$ whether there is a
  modulator $S$ with $\|S\| \le k$ such that $\mathcal{A} \triangleright S \models \phi$. 
We start by observing the following: 

Let $v_1,\dots,v_c$ be a sequence of elements from the universe~$A$. Then, either $\mathcal{A} \models (\forall y_1 \cdots \forall y_d(\psi))[v_1, \dots, v_c]$ (in which case we accept the instance) or
$   \mathcal{A}\ \not\models\ (\forall y_1 \cdots \forall y_d(\psi))[v_1, \dots, v_c]$.
  In the latter case, there must be a sequence $u_1, \dots, u_d$ of elements from the universe~$A$ such that the structure induced by the union of $v_1, \dots, v_c$ and $u_1, \dots, u_d$ violates the
  assertions made by~$\psi$. If we can edit at most
  $k$ tuples to resolve all violations on the structure induced by $v_1, \dots, v_c$ and every violating assignment to the universal variables, then $\mathcal{A} \triangleright S \models \phi$. Otherwise, we can rule out $v_1,\dots,v_c$ as an assignment to $x_1, \dots, x_c$.

  Based on this observation, we construct the following algorithm: Consider all of the $|A|^c$ possible assignments $(v_1, \dots, v_c)$ to
  the variables $x_1,\dots,x_c$ in parallel. For each, we in turn consider all of the $2^{\sum_{i =
  1}^{r} c^{a_i}}$ possible relations on the structure induced by
  $\{v_1,\dots,v_c\}$ in parallel. For each such set of relations, there is a modulator~$S^{(0)}$
  such that $\mathcal{A} \triangleright S^{(0)}$ yields precisely this set of relations.
  By definition, $S^{(0)}$ is a tuple in which each entry is a set of tuples over (subsets of) $\{v_1,
  \dots,v_c\}$, hence, its
  size is constantly bounded. Now, fix a certain set of relations on $\{v_1, \dots, v_c\}$ and a respective modulator~$S^{(0)}$.
  We use $k$ layers of a Boolean circuit to find and
  resolve all violations that arise from assignments to the universally
  quantified variables. We let $S^{(0)}$ be associated with the first layer, and proceed as follows in all subsequent layers~$i \in \{1, \dots, k\}$: We check whether an assignment $(w_1, \dots, w_d)$ to the variables $y_1, \dots, y_d$
  (for example, the lexicographically first) exists such that
  $\mathcal{A} \triangleright S^{(i)}\ \not\models\ \psi[v_1, \dots, v_c, w_1, \dots,
  w_d]$. If we cannot find such an assignment, we can accept the instance, since we have
  $\mathcal{A} \triangleright S^{(i)}\models\phi$. Otherwise, we have to edit a
  relation in the structure induced by the current assignment. We branch over all possible edits and associate each branch with a new modulator $S^{(i + 1)}$ that extends $S^{(i)}$ by the relations we marked to be edited. If at layer $k$, we still find an assignment $(w_1, \dots, w_d)$ to the variables $y_1, \dots, y_d$ such that
  $\mathcal{A} \triangleright S^{(k)}\ \not\models\ \psi[v_1, \dots, v_c, w_1, \dots,
  w_d]$, then we reject the instance.

  Since $d$, $c$ and all the arities $a_i$ are constants, the number of branches created at each level of the search tree is constant as well. Therefore, the total size of the search
  tree is at most $2^{\sum_{i = 1}^{r} c^{a_i}} + g(r, c, d, a_1, \dots,
  a_r)^k$, where $g$ is the constant function that describes the number of branches. The depth of the
  search tree is bounded by the number of tuples we can edit, which is our
  parameter. Hence, we get a $\Para\Class{AC}^{0\uparrow}$-circuit.
\end{proof}

\subsubsection*{Lower Bounds.} To show hardness of the fragment $ae$ for various problem versions, we
perform a reduction from $\PLang{Set-Cover}$, a well-known $\Class{W}[2]$-hard
problem~\cite[page~464]{DowneyF99}, which we here formulate as a graph problem
on the incidence graph of the input set family.\footnote{This variant is also known as
$\PLang{Red-Blue Dominating Set}$ in Parameterized Complexity.} Recall that we write
$u\undiradj s$ to denote $\{(u, v), (v, u)\} \subseteq E$.

%\begin{problem}[{$\PLang{set-cover}$}]
  \begin{parameterizedproblem}
  \problemtitle{$\PLang{Set-Cover}$}
    \probleminput{An undirected bipartite graph $G = (S \mathbin{\dot\cup} U, E)$
      with shores $S$ and~$U$ and a number~$k$.}
    \problemparameter{$k$}
    \problemquestion{ Is there a cover $C \subseteq S$ with $|C| \le k$ of~$U$, meaning that for each $u \in U$ there is an $s \in  C$
    with $u\undiradj s$?} 
  \end{parameterizedproblem}
%\end{problem}

All our parameterized hardness results that are based on reductions from the set cover problem consist of six steps:
\begin{enumerate}
  \item We first establish the hardness for the problem versions that restrict the type of structure to undirected graphs. In order to do this, we start by constructing a formula~$\phi$ with the quantifier
    pattern $p$ for which we want to establish hardness. 
%    To keep the presentation concise, we only
%    state the formula for undirected graphs. Note that thanks to our definition
%    of cost and undirected graphs being special cases of directed graphs, we can, given a formula~$\phi$, easily construct a semantically equivalent formula for directed graphs by replacing each term $x\adj y$ by the term
%    $(x\adj y) \land (y\adj x)$.
  \item We describe a construction that transforms an arbitrary input instance $(G, k)$ of $\PLang{Set-Cover}$ into an instance $(G', k')$ of $\PEM{\mathrm{undir}}{\mathrm{del}}(\phi)$ (or $\PEM{\mathrm{undir}}{\mathrm{add}}(\phi)$, respectively), typically by adding new vertices and~edges.
  \item We show that $(G, k) \in \PLang[]{Set-Cover}$ implies $(G', k')
  \in \PEM{\mathrm{undir}}{\mathrm{del}}(\phi)$ (or $(G', k')
  \in \PEM{\mathrm{undir}}{\mathrm{add}}(\phi)$, resp.). We call this part of the proof the \emph{forward direction}.
  \item Next, we show that $(G', k')
  \in \PEM{\mathrm{undir}}{\mathrm{del}}(\phi)$ (or $(G', k')
  \in \PEM{\mathrm{undir}}{\mathrm{add}}(\phi)$, resp.) implies $(G, k) \in
  \PLang[]{Set-Cover}$. We call this part of the proof the \emph{backward direction}.
  \item We show that both directions of the reduction are still correct when the problem $\PEM{\mathrm{undir}}{\mathrm{edit}}(\phi)$ is considered instead.
  \item Finally, we show how each reduction can be adapted to
  provide a lower bound for $\PEM{dir}{\otimes}(\phi)$ with $\otimes\in \{\mathrm{del},
  \mathrm{add}, \mathrm{edit}\}$, and thus also for
  $\PEM{arb}{\otimes}(\phi)$.
\end{enumerate}

%This has the advantage that the reduction and forward direction is the same for
%both the deletion or addition problem and the editing problem, and only the
%backward direction has to be adapted. Furthermore,
Due to Observation~\ref{observation:del-add-same}, that states that the complexities of
edge deletion and edge addition coincide, it suffices to analyze one of these problem versions in our proofs. We will usually choose the one that eases the presentation and state
our choice at the beginning of each \emph{reduction} section.

\begin{lemma}\label{lemma:p-undir-ae} For each modification operation $\otimes\in \{\mathrm{del},
  \mathrm{add}, \mathrm{edit}\}$, $\PEM{undir}{\otimes}(ae)$ contains a
  $\Class{W}[2]$-hard problem.
\end{lemma}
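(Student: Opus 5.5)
We reduce from $\PLang{Set-Cover}$ to the \emph{addition} version, using the fact that undirected graphs may carry self-loops. We use self-loops to mark ``active'' vertices. Take
\[
  \phi \coloneq \forall x \exists y \bigl( x \adj x \lor (x \adj y \land y \adj y)\bigr),
\]
which has pattern $ae$. Together with $\phi_{\text{undirected}}$, whose pattern $aa$ can be merged into the universal block in prenex form, this still yields pattern $ae$. The formula $\phi$ says that every vertex without a self-loop has a looped neighbour. By Observation~\ref{observation:del-add-same}, hardness of $\PEM{undir}{\mathrm{add}}(\phi)$ transfers to $\PEM{undir}{\mathrm{del}}(\overline{\phi})$, and $\overline{\phi}$ has the same pattern.

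\emph{Construction.} Given $(G=(S\mathbin{\dot\cup}U,E),k)$, we may assume (in $\Para\Class{AC}^0$) that every $u\in U$ has a neighbour in $S$; otherwise we output a fixed no-instance. Let $G'$ consist of $G$ without any loops, plus a new vertex $h$ carrying a self-loop and adjacent to every $s\in S$. Set $k'=k$. In $G'$ the vertex $h$ is looped, and every $s\in S$ has the looped neighbour $h$. The only violations of $\phi$ are therefore the vertices of $U$.

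\emph{Forward direction.} For a cover $C$, add a self-loop at each $s\in C$; this costs $|C|\le k$. Every $u\in U$ then has a looped neighbour in $C$, and no other vertex becomes violated.

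\emph{Backward direction.} This is the main step, and it uses an exchange argument. Suppose an addition modulator of size at most $k$ makes $\phi$ true. Each added pair is one of three kinds. A loop on some $s\in S$ covers $N(s)\cap U$. A loop on $u\in U$, or an edge incident to $u\in U$, can fix at most one or two vertices of $U$. Since the formula is positive in $\adj$, the cleanest way is to argue per vertex of $U$: each $u\in U$ needs either its own loop, an incident added edge, or a looped original neighbour $s$ (which requires a loop added at $s$). We charge each added item either to a set vertex or, for items of the first two kinds, to the at most two $U$-vertices it fixes. We replace every item charged to $U$-vertices by loops on chosen $S$-neighbours of those vertices. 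An edge $u\undiradj u'$ between two $U$-vertices can only help if one endpoint is looped, which itself costs an item, so the total stays at most $k$. The resulting set $C$ of looped $S$-vertices is a cover of size at most $k$. Getting this charging precise, in particular for added edges between two unlooped $U$-vertices, is where I expect the care to be needed.

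\emph{Editing.} Since $\phi$ is monotone in $\adj$ (it contains no negated atoms), deletions never help satisfy it. Any editing solution therefore restricts to its addition part, which is also a solution, and the same reduction works for $\PEM{undir}{\mathrm{edit}}(\phi)$.

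\emph{Deletion.} We apply Observation~\ref{observation:del-add-same} to the complement instance $(\overline{G'},k)$ and the formula $\overline{\phi}$. Complementing preserves undirectedness, and the reduction stays in $\Para\Class{AC}^0$.
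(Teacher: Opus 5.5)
Your proof is correct, but it takes a genuinely different route from the paper.

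The paper uses the non-monotone formula $\forall x\exists y\bigl(\neg x\adj x\to(x\adj y\land\neg y\adj y)\bigr)$, which says every loopless vertex has a \emph{loopless} neighbour. It puts a self-loop on every $s\in S$, replaces each $u\in U$ by $2k+1$ copies, and reduces directly to deletion: deleting the loop at $s$ means choosing $s$. The copies are there for editing. Under that formula, a single added edge between two loopless copies repairs both of them, so the paper argues by pigeonhole that $k$ edits repair at most $2k$ copies.

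You flip the polarity to the monotone ``\emph{looped} neighbour'' formula, satisfy the $S$-vertices with a looped hub $h$, and reduce to addition. Now every added item other than a loop on an $S$-vertex can be charged to at most one uncovered $U$-vertex, since an edge between two unlooped $U$-vertices repairs neither. So an exchange argument replaces the copies, and editing collapses to addition by monotonicity.

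Your version gives a linear-size instance and a one-line editing argument. The paper's copy-based pigeonhole needs no preprocessing (a $u$ without $S$-neighbours is harmless once it has $2k+1$ copies) and avoids any exchange argument. It also follows the template the paper reuses for its basic-graph reductions. Both proofs obtain the third operation from Observation~\ref{observation:del-add-same}: the paper gets addition this way, and you get deletion.

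Two small points.

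First, the step you flag is easy to make precise. Let $L\subseteq S$ be the $S$-vertices that receive a loop, and call $u\in U$ uncovered if it has no $G$-neighbour in $L$. Assign to each uncovered $u$ its own added loop if it has one. Otherwise assign an added edge joining $u$ to a looped vertex; one must exist, because the original neighbours of $u$ lie in $S\setminus L$ and are therefore unlooped. This assignment is injective: if $u$ is assigned $\{u,u'\}$ with $u'\in U$, then $u'$ is looped and is assigned its own loop. Hence $|L|$ plus the number of uncovered vertices is at most $k$, and $L$ together with one $S$-neighbour per uncovered $u$ is a cover.

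Second, your remark that conjoining $\phi_{\text{undirected}}$ ``still yields pattern $ae$'' is incorrect, since prenexing the conjunction gives $aae$. It is not needed, though: $\EM{undir}{\otimes}(p)$ is defined via the pattern of $\phi$ alone, with undirectedness imposed on inputs and targets separately.
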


\begin{proof}
  \emph{The formula.} Consider the formula 
  \begin{align*}
    \phi_{\ref{lemma:p-undir-ae}} = \forall x \exists y
    \bigl(\neg x\adj x \to (x\adj y \land \neg y\adj y)\bigr), \label{lemma:p-undir-ae-eq}
  \end{align*}
  which exhibits the pattern $ae$ and expresses that every vertex without a self-loop has a neighbor without a self-loop.

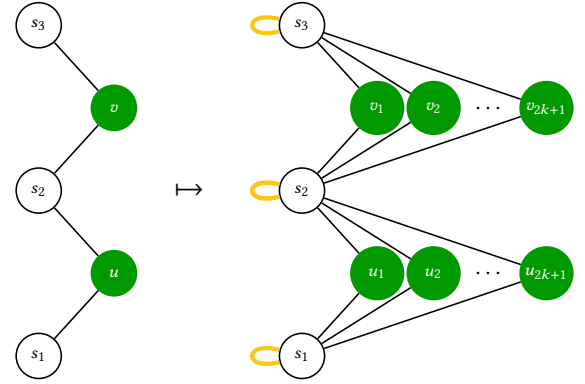
\begin{figure}
  \centering
  \begin{tikzpicture}[
      right part/.style={xshift=3.5cm},y=2.2cm,
      node/.append style={minimum size=6mm, inner sep=0pt},
      large/.style={minimum size=7mm},
    ]
    
    % Sets
    \foreach \i in {1,2,3} {
      \node (s\i) at (0,\i) [node] {$s_\i$};

      \node (rs\i) at (0,\i) [right part, node] {$s_\i$};
    }

    % Universe
    \foreach \i/\name in {1/u,2/v} {
      \node (\name)   at (0,\i) [green node,shift={(1,.5)}] {$\name$};

      \node (\name1)  at (0,\i) [green node, large]  [right part,shift={( 1,.5)}] {$\name_1$};
      \node (\name2)  at (0,\i) [green node, large]  [right part,shift={( 1.75,.5)}] {$\name_2$};
      \node (\name3)  at (0,\i) []  [right part,shift={( 2.5,.5)}] {$\dots$};
      \node (\name4)  at (0,\i) [green node, large]  [right part,shift={( 3.25,.5)}] {$\name_{2k+1}$};
    }

    \foreach \s/\u/\anc in {1/u/south,2/u/north,2/v/south,3/v/north} {
      \draw (s\s) -- (\u);
      \draw [delete edge] (rs\s) to[loop left] (rs\s);

      \draw  (rs\s) -- (\u1);
      \draw  (rs\s) -- (\u2);
      \draw  (rs\s) -- (\u4);
    }

    % arrow
    \node (arrow) at (2,2) {\Large$\mapsto$};
  \end{tikzpicture}
  \caption{Visualization of the reduction in
    Lemma~\ref{lemma:p-undir-ae}. }
  \label{figure:p-undir-ae}
  \Description{Picture of an exemplary reduction, which maps a bipartite graph $G =
  (\{s_1,s_2,s_3\}\mathbin{\dot\cup}\{u,v\}, E)$ to a graph $G'=(V',E')$ (the indicated   
  colors, numbers, and labels are not part of the input or output). To satisfy
  the formula $\phi_{\ref{lemma:p-undir-ae}}$, some of the yellow edges need to
  be deleted. For each element of $U = \{u,v\}$, exactly $2k+1$ vertices are
  added to the new graph. Each undirected edge $u \undiradj s$ ($v \undiradj s$,
  respectively) gets replaced by $2k + 1$ undirected edges, namely $u_i
  \undiradj' s$ ($v_i \undiradj' s$, resp.) for all copies $u_i$ of~$u$ (all
  copies $v_i$ of~$v$, resp.). The size-$1$ set cover $C = \{s_2\}$ corresponds
  to the fact that deleting the edge $s_2 \undiradj' s_2$ from $G'$ yields a
  graph in which each vertex without a self-loop has a neighbor without a
  self-loop. The same is true for the size-$2$ set cover $C = \{s_1, s_3\}$. In
  contrast, $C = \{s_1\}$ is not a set cover as $v$ is not covered and, indeed,
  all neighbors of~$v_1$ (namely $s_2$ and $s_3$) have a self-loop, if we delete
  neither $s_2\undiradj' s_2$ nor $s_3\undiradj' s_3$. Note that our
  construction is slightly more complex than the deletion case demands, but we
  will benefit from it when we lift our results to the editing case later on.}
\end{figure}

  \emph{The reduction.} We  reduce $\PLang{Set-Cover}$ to the edge deletion problem,
  that is, to $\PEM{undir}{\mathrm{del}}(\phi_{\ref{lemma:p-undir-ae}})$. Let $(S
  \mathbin{\dot\cup} U, E, k)$ be given as
  input. The reduction outputs $k' = k$ together with the undirected
  graph $G' = (V', E')$ constructed as follows:
  \begin{itemize}
    \item For each $s \in S$, add $s$ to $V'$.
    Call this set of vertices $S'$.
    \item For each $u \in U$, add $2k+1$ copies $u_1,\dots, u_{2k+1}$ of~$u$
      to~$V'$. Call this set of vertices $U'$.
    \item Whenever $u \undiradj s$ holds for $u \in U, s \in S$, let all $u_i$ be adjacent to $s$
      in the new graph, that is, for $i\in\{1,\dots, 2k+1\}$, let $u_i \undiradj'
      s$. 
    \item For every $s \in S'$ in the new graph, add the self-loop $s \undiradj' s$.
  \end{itemize}
  An exemplary reduction is shown in Figure~\ref{figure:p-undir-ae}, which maps a bipartite graph $G =
  (\{s_1,s_2,s_3\}\mathbin{\dot\cup}\{u,v\}, E)$ to a graph $G'=(V',E')$ (the indicated   
  colors, numbers, and labels are not part of the input or output). To satisfy
  the formula $\phi_{\ref{lemma:p-undir-ae}}$, some of the yellow edges need to
  be deleted. For each element of $U = \{u,v\}$, exactly $2k+1$ vertices are
  added to the new graph. Each undirected edge $u \undiradj s$ ($v \undiradj s$,
  respectively) gets replaced by $2k + 1$ undirected edges, namely $u_i
  \undiradj' s$ ($v_i \undiradj' s$, resp.) for all copies $u_i$ of~$u$ (all
  copies $v_i$ of~$v$, resp.). The size-$1$ set cover $C = \{s_2\}$ corresponds
  to the fact that deleting the edge $s_2 \undiradj' s_2$ from $G'$ yields a
  graph in which each vertex without a self-loop has a neighbor without a
  self-loop. The same is true for the size-$2$ set cover $C = \{s_1, s_3\}$. In
  contrast, $C = \{s_1\}$ is not a set cover as $v$ is not covered and, indeed,
  all neighbors of~$v_1$ (namely $s_2$ and $s_3$) have a self-loop, if we delete
  neither $s_2\undiradj' s_2$ nor $s_3\undiradj' s_3$. Note that our
  construction is slightly more complex than the deletion case demands, but we
  will benefit from it when we lift our results to the editing case later on.
  
  \emph{Forward direction.}
  Let  $(S \mathbin{\dot\cup} U, E, k) \in \PLang{Set-Cover}$ be given. We need
  to show that $(G', k') \in
  \PEM{undir}{\mathrm{del}}(\phi_{\ref{lemma:p-undir-ae}})$ holds. Let $C
  \subseteq S$ with $|C| \le k$ be a cover of~$U$. Without loss of generality, assume
  that every vertex in $C$ has at least one neighbor in $U$. Let $C' = \{(s, s) \mid s\in C\}$. We claim that $(V', E'\setminus C') \models
  \phi_{\ref{lemma:p-undir-ae}}$, that is, by removing all self-loops from
  vertices $s \in C$, every vertex without a self-loop has a neighbor without a
  self-loop and thus the resulting graph satisfies~$\phi_{\ref{lemma:p-undir-ae}}$. To see this, call a vertex that has a
  self-loop or a neighbor without a self-loop \emph{happy}. Naturally, if every
  vertex of the graph $(V', E'\setminus C')$ is happy, we have $(V', E'\setminus C')
  \models \phi_{\ref{lemma:p-undir-ae}}$. Observe that all vertices in $S'$ are happy,
  since they have at least one neighbor in $U'$ and the vertices in $U'$ do not have
  self-loops. Moreover, the vertices in $U'$ each have at least one neighbor in $S'$
  that does not have a self-loop since we assumed $C$ to be a cover of~$U$. Clearly, $\|C'\| = |C| \leq k$.

  \emph{Backward direction.} Conversely, suppose that for $G' = (V', E')$, we are
  given a set $D\subseteq E'$ with $\|D\| \leq k$ such that $(V', E'\setminus D)
  \models \phi_{\ref{lemma:p-undir-ae}}$. Observe that each $u\in U$ is only
  connected to vertices in $S$ that have a self-loop. Hence, from $(V', E'\setminus D)\models \phi_{\ref{lemma:p-undir-ae}}$, we conclude that for each $u_i\in U'$, there is an $s\in S'$ with
  $s\undiradj' u_i$ such that $(s, s) \in D$. This means that $C = \{s\in S' \mid
  (s, s) \in D\}$ is a set cover of $(S \mathbin{\dot\cup} U, E)$ and $|C| \le \|D\| \leq k$.

  \emph{Edge editing.} To see that the reduction works for edge editing as well,
  observe that there are only two other modifications that will make vertices in~$U'$ happy:
  ($1$) We can add a self-loop to such vertices, or ($2$) we can make vertices in~$U'$ adjacent to each other, where edges may be added both between vertices that stem from the same vertex in~$U$, and between vertices that are copies of two distinct vertices in~$U$.
  However, since introducing a self-loop can only make
  one more vertex happy, and introducing an edge between two distinct vertices in $U'$ can only make up to
  two vertices happy, these operations do not suffice to ensure the happiness of all $2k + 1$ copies of $u$.
Consequently, we
  still have to delete the self-loop of at least one neighbor of $u$.
  
  \emph{Adaption to the directed setting.} We choose
  \begin{align*}
    \phi_{\ref{lemma:p-undir-ae}, \text{dir}} = \forall x \exists y
    \bigl(\neg x\adj x \to (x\adj y \land \neg y\adj y)\bigr),
  \end{align*}
  and perform the reduction in the same way, but let each edge~$u_i -' s$ that we added be directed from $u_i$ to $s$.
\end{proof}

\subsection{Basic Graphs}

In the last section, we proved that the complexity landscapes of relation modification problems coincide on undirected graphs, directed graphs, and arbitrary structures.
We now consider the restriction of input and target structures to basic graphs,
that is, undirected graphs that have no self-loops, which, interestingly, yields a different tractability frontier than the previous problem versions.

% {Pattern-Based Complexity Dichotomy for $\PEM{basic}{\otimes}(p)$}
\begin{theorem}\label{theorem:p-basic}
  Let $p \in \{a,e\}^*$ be a pattern. For each modification operation $\otimes\in \{\mathrm{del},
  \mathrm{add}, \mathrm{edit}\}$, we have
  \begin{enumerate}
  \item $\PEM{basic}{\otimes}(p) \subseteq \Para\Class{AC}^{0\uparrow}$, if $p
    \preceq e^*a^*$ or $p \preceq ae$.
  \item $\PEM{basic}{\otimes}(p)$ contains a $\Class{W}[2]$-hard
    problem, if $aea \preceq p$, $eae \preceq p$, $aee \preceq p$, or $aae \preceq p$.
  \end{enumerate}
\end{theorem}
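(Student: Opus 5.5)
Part~1 for $p \preceq e^*a^*$ is immediate from Lemma~\ref{lemma:p-undir-e*a*}, since basic graphs are arbitrary structures and the $\Para\Class{AC}^{0\uparrow}$-circuit only has to check $\phi\land\phi_{\text{basic}}$ (still in $e^*a^*$ after prenexing, $\phi_{\text{basic}}$ being universal). Self-loops are never introduced: an edit of a pair $\{u,u\}$ is forbidden. Counting pairs $\{u,v\}$ instead of tuples is harmless, because an edit of $\{u,v\}$ is one branch of constant arity.

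The new case is $p\preceq ae$, i.e.\ $\phi=\forall x\exists y\,\psi(x,y)$ with $\psi$ quantifier-free. On a basic graph, $\psi(x,y)$ depends only on whether $x=y$ and, if $x\neq y$, on whether $x\adj y$. So $\phi$ is equivalent to a Boolean combination of the following conditions for each $x$:
\begin{itemize}
\item $\psi(x,x)$, which is constant;
\item there exists $y\neq x$ with $x\adj y$ (if the $E$-case of $\psi$ is satisfiable);
\item there exists $y\neq x$ with $x\nadj y$.
\end{itemize}
The formula therefore reduces to one of a constant number of degree conditions on every vertex: true, false, ``$\deg(x)\geq 1$'', ``$\deg(x)\leq n-2$'', or their disjunction (always true when $n\geq 3$). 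Small $n$ is handled by brute force, as $k$ only matters up to $\binom{n}{2}$.

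By complementation (Observation~\ref{observation:del-add-same}), ``$\deg\leq n-2$'' mirrors ``$\deg\geq 1$'', so it suffices to handle the condition that there are no isolated vertices.
\begin{itemize}
\item Deletion: the answer is yes iff the graph already has no isolated vertices.
\item Addition and editing: with $t$ isolated vertices, the minimum cost is $\lceil t/2\rceil$ if $t\geq 2$, or if $t=1$ and some other vertex exists. Then $t\le 2k$ must hold; otherwise we reject. This is a threshold with threshold bounded in the parameter, computable in $\Para\Class{AC}^0$ by \cite[Lemma 3.3]{BannachST15}.
\end{itemize}
For editing I must argue that deletions never help, since they only decrease degrees. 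I expect the precise case analysis of the reduction to degree conditions to be the fiddly part, not hard.

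Part~2 needs four $\Class W[2]$-hardness reductions from $\PLang{Set-Cover}$, following the six-step scheme. Since supersequences inherit hardness (add dummy quantifiers), it suffices to treat exactly $aea$, $eae$, $aee$, $aae$. Self-loops are unavailable to mark chosen sets, so I would use extra quantified variables to simulate them.

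\begin{itemize}
\item $aee$: for a vertex $x$ not of some marker type, require two adjacent neighbours $y_1,y_2$ with a suitable pattern. I would mark a set $s$ as chosen by deleting the edge between $s$ and a private pendant partner $s'$. The formula states ``every $u$-vertex has a neighbour $y$ that has no pendant partner'', with degree/structure gadgets distinguishing the vertex types via the second existential.
\item $aae$: use the two universals to distinguish vertex types locally, for instance by non-adjacency patterns, then proceed as for $aee$.
\item $aea$: ``for all $x$ there is a $y$ adjacent to $x$ such that for all $z$, $y$ avoids the marker'', where ``unmarked'' is expressed universally, e.g.\ $y$ has no leaf neighbour.
\item $eae$: the leading existential picks a distinguished anchor vertex (made unique by high degree or copies). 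Type information relative to the anchor replaces the self-loop, and the $ae$ part mirrors $\phi_{\ref{lemma:p-undir-ae}}$.
\end{itemize}
In all four, each $u$ gets $2k+1$ copies so that adding edges directly to $u$-copies is too expensive, giving the editing case as in Lemma~\ref{lemma:p-undir-ae}.

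The main obstacle is designing the gadgets for basic graphs. Without self-loops, distinguishing vertex roles must come from the remaining quantifiers, and the backward direction must rule out cheap edits that fake the marker, e.g.\ adding a pendant-like structure. I would first try the leaf-removal encoding and protect the gadget vertices by $(k+1)$-fold duplication.
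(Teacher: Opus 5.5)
\textbf{Part~1.} Your Part~1 is fine and follows the paper. The $e^*a^*$ case is Lemma~\ref{lemma:p-undir-e*a*} restricted to basic graphs. The $ae$ case is the same reduction to the degree conditions ``no isolated vertex'' and ``no universal vertex'', with a count up to $2k$ checked in $\Para\Class{AC}^0$, as in Lemma~\ref{lemma:p-basic-ae}. Using complementation so that you only handle one of the two conditions is a harmless shortcut.

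\textbf{Part~2.} Part~2 is not a proof, and the one concrete mechanism you suggest cannot work in these prefixes. You plan to ``use extra quantified variables to simulate'' self-loops, but the prefix has exactly three quantifiers. The quantifier-free matrix of a $\forall x\exists y\forall z$ or $\forall x\exists y_1\exists y_2$ formula only sees equality and adjacency among the three quantified vertices.
\begin{itemize}
\item ``$y$ has no leaf neighbour'' and ``$y$ has no pendant partner'' are two-hop properties. Saying that $z$ is (not) a leaf needs two further quantifiers, so your $aea$ formula becomes something like $aeaee$.
\item In $aee$, a universal condition on $y$ cannot be stated at all, and neither can ``$x$ is a $u$-vertex''. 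Your $aae$ plan (``proceed as for $aee$'') inherits this problem.
\end{itemize}

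\textbf{The missing idea.} The marker must be a pattern among the quantified vertices themselves, and the natural choice is a triangle. The paper uses:
\begin{itemize}
\item $aea$: $\forall x\exists y\forall z\bigl(x\adj y\land\neg(x\adj z\land y\adj z)\bigr)$, i.e.\ every vertex has an incident edge on no triangle. Each $s$ lies on a $4$-cycle $s\,s'\,s''\,s'''$, every copy $u_i$ is adjacent to $s$ and $s'$, and choosing $s$ means deleting $s\undiradj s'$.
\item $aee$: every vertex lies on a triangle. There are private triangles through $s_{i,1}$ and $s_{i,1}'$, and choosing $s_i$ means adding $s_{i,1}\undiradj s_{i,1}'$. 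A normalization replaces any added cross edge $s\undiradj t'$ by $s\undiradj s'$.
\item $aae$: every edge lies on a triangle, or alternatively diameter~$2$ \cite{GaoHN13}. A hub $c$ is adjacent to all $u_i$, and choosing $s$ means adding $s\undiradj c$.
\item $eae$: radius at most $2$.
\end{itemize}

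For $eae$ your anchor idea points the right way, but ``unique by high degree'' is nothing the formula can exploit; uniqueness has to come from cost. The paper attaches $2k+1$ pendant paths $c\undiradj c_i\undiradj c_i'$, so any other centre leaves more than $k$ far-away vertices $c_j'$ with pairwise disjoint neighbourhoods.

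Without concrete formulas of the right prefix and the matching backward directions (including the $2k+1$-copies argument for editing), the $\Class{W}[2]$-hardness half of the theorem remains unproved.
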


\subsubsection*{Upper Bound.} Contrary to the more general structures analyzed in the previous section, the
fragment $ae$ becomes tractable on basic graphs.

\begin{lemma}\label{lemma:p-basic-ae}
  For each modification operation $\otimes\in \{\mathrm{del}, \mathrm{add}, \mathrm{edit}\}$, we have
  $\PEM{basic}{\otimes}(ae) \subseteq \Para\Class{AC}^{0} \subseteq \Para\Class{AC}^{0\uparrow}$.
\end{lemma}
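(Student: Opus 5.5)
The plan is to show that on basic graphs every $ae$-formula collapses to one of a handful of trivial or degree-type properties, and then to decide each modification problem for these properties with constant-depth circuits that only need threshold gates with thresholds bounded in~$k$.

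\emph{Step 1: normal form.} Let $\phi = \forall x \exists y\, \psi(x,y)$ with $\psi$ quantifier-free over $\{\adj\}$. In a basic graph the atoms $x\adj x$ and $y\adj y$ are always false, and $x \adj y$ and $y\adj x$ are equivalent. Hence $\psi$ is a Boolean combination of just the two atoms $x=y$ and $x\adj y$. Split $\exists y$ into the cases $y=x$ and $y\neq x$.
\begin{itemize}
\item For $y=x$, $\psi(x,x)$ is a fixed truth value $c_0$, because $x\adj x$ is false.
\item For $y\neq x$, $\psi$ reduces to one of $\mathit{true}$, $\mathit{false}$, $x\adj y$, or $\neg x\adj y$.
\end{itemize}
If $c_0$ is true, $\phi$ holds in every basic graph. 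Otherwise, on basic graphs $\phi$ is equivalent to exactly one of the following properties:
\begin{itemize}
\item ``at least two vertices'';
\item ``false'';
\item ``no isolated vertex'';
\item ``no vertex adjacent to all others''.
\end{itemize}
The formula $\phi$ is fixed, so the case is determined once, not by the circuit. The first two properties cannot be changed by edge modifications, so those problems are decided by a constant-depth circuit that ignores~$k$. By Observation~\ref{observation:del-add-same} with the basic-graph complementation $\overline{\phi}$, ``no universal vertex'' under deletion corresponds to ``no isolated vertex'' under addition, and vice versa. For editing, note that the complement of a basic graph is basic and editing is symmetric under complementation. So it suffices to treat ``no isolated vertex'' for the operations del, add, and edit.

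\emph{Step 2: the algorithms.} Let $i$ be the number of isolated vertices and $n=|V|$.
\begin{itemize}
\item \emph{Deletion.} Deleting edges never removes an isolated vertex. The instance is therefore a yes-instance iff $i=0$, which is an $\Class{AC}^0$ test.
\item \emph{Addition.} One added edge makes at most two isolated vertices non-isolated, and isolated vertices can be paired up. Hence the minimum number of added edges is $0$ if $i=0$, is $\lceil i/2\rceil$ if $i\ge 2$, and is $1$ if $i=1$ and $n\ge 2$. The case $i=n=1$ is infeasible.
\item \emph{Editing.} I will argue that deletions in an edit modulator can simply be dropped. Removing a deletion from the modulator only adds edges back, so no vertex becomes isolated and the cost does not increase. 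Hence the editing optimum equals the addition optimum.
\end{itemize}

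\emph{Step 3: circuit implementation.} The circuit must compare $\lceil i/2\rceil$ (or the special value for $i=1$) with~$k$. For this it suffices to decide whether $i \le 2k$ and, if so, to know $i$ exactly. This is a threshold computation with threshold bounded by a function of the parameter. By \cite[Lemma~3.3]{BannachST15} it can be carried out by $\Para\Class{AC}^0$ circuits of constant depth and size $f(k)\cdot n^{O(1)}$. The inputs to the threshold gates are the bits ``$v$ is isolated'', one per vertex, each computed by a single OR over $v$'s adjacency row. All remaining case distinctions are constant-depth.

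I expect the main obstacle to be Step~1: the bookkeeping that every $ae$-formula on basic graphs really collapses to these four cases. This includes handling atoms such as $y\adj x$, equality atoms, and the interaction with the restriction to basic graphs (the conjunct $\phi_{\text{basic}}$ in the definition of $\EM{basic}{\otimes}$). I will do this by an explicit truth-table argument over the two relevant atoms. A secondary point is to double-check the exchange argument that deletions never help in the editing case. This will be stated once for ``no isolated vertex'' and transferred to ``no universal vertex'' via complementation.
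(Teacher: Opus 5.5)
Your proposal is correct and follows essentially the same route as the paper: on basic graphs every $ae$-formula collapses to a trivial property, ``no isolated vertex'', or ``no universal vertex'', and the remaining cases are decided by checking, with threshold gates whose threshold is bounded by the parameter, whether at most $2k$ offending vertices exist. Your $y=x$ versus $y\neq x$ split and your reduction of ``no universal vertex'' to ``no isolated vertex'' via complementation are just tidier bookkeeping than the paper's DNF enumeration; note also that testing $i\le 2k$ alone suffices, since $\lceil i/2\rceil\le k$ holds iff $i\le 2k$.
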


\begin{proof} We present the proof for edge deletion and then argue that
  the statement holds for edge editing as well.
  Let $\phi$ be of the form $\forall x\exists y (\psi)$, where $\psi$ is
  quantifier-free.  Assume without loss of
  generality that $\psi$ is in disjunctive normal form, that is,
  \begin{align*}
    \phi=\forall x \exists y \bigvee (\ell_1 \land \ell_2 \land \cdots \land \ell_m),
  \end{align*}
  where each $\ell_i$ is a literal of the form $x \adj y$, $x \nadj y$, $x = y$, or $x \neq y$ (note that on basic graphs, the terms $x = x$, $y = y$, $x\neq x$, $y\neq y$, $x\adj
  x$, $y\adj y$, $x\nadj x$, and $y\nadj y$ can be replaced by true or false, and $\adj$ and $=$ are both symmetric). Also observe that when a disjunct contains both a literal $\ell$ and
  its negation, it cannot be satisfied, so it may also be replaced by
  false (and, hence, removed).
  Since the only available relations are $=$ and~$\adj$, it remains to 
  consider the unit disjuncts as well as the four different disjuncts $x = y
  \land x\adj y$, $x = y \land x\nadj y$, $x \neq y \land x\adj y$, and $x \neq
  y \land x\nadj y$. Out of these, the first one cannot be
  satisfied in basic graphs, the second one is
  equivalent to $x=y$, and the third one equivalent to $x \adj  y$ on basic graphs.
 Thus, besides the four unit
  disjuncts, only $x \neq
  y \land x\nadj y$ remains. We may assume that a given graph has at least
  two vertices. 

  Suppose $\psi$ contains the disjunct $x = y$ or the disjunct
  $x\nadj y$. Since the quantifier prefix is $\forall x\exists y$,
  interpreting $y$ identically to~$x$ will make the disjunct true and,
  thus, also~$\phi$. Next, if $x \neq y$ is present as a
  disjunct, we can also satisfy the disjunct and thus $\phi$ by interpreting $x$ and $y$ differently. 

  It remains to consider $\psi$ consisting of just $x \adj y$ or of just $x
  \neq y \land x\nadj y$ or of their disjunction. If $x \adj y$ is the
  only disjunct, the formula is 
  true if there is no isolated vertex in the graph; and if there is an isolated
  vertex, deleting edges cannot change this fact. 
%  Now, we consider disjuncts consisting of two literals. If the literal $x = y$
%  occurs in the disjunct, we can again precompute the truth value of the
%  disjunct: If the other literal is $x\adj y$, the disjunct can be substituted
%  by false, and if the other literal is $x\nadj y$, it can be substituted by
%  true. The disjunct $x \neq y \land x\adj y$ can be replaced by the same truth
%  value as $x\adj y$ in the setting of basic graphs.
  If $x \neq y \land x\nadj y$ is the only disjunct, the formula
  specifies that there are no universal vertices, that is, vertices that are adjacent to all others. If this is not yet the
  case, we can ``turn'' up to $2k$ universal
  vertices into non-universal vertices by deleting $k$ edges; thus, we need to check
  whether there are at most $2k$ universal vertices, which can be done with an $\Para\Class{AC}^{0}$-circuit. The final case is $\phi
  = \forall x \exists y((x \adj y) \lor (x \neq y \land x\nadj y))$,
  but this formula is trivially true (on graphs with at least two
  vertices). 
  
%%   Lastly, we turn to the disjunction of disjuncts. The only disjuncts that
%%   cannot be replaced by true or false directly are $x\adj y$ (or $(x \neq y
%%   \land x\adj y)$, respectively) and $(x \neq y \land x\nadj y)$. This means
%%   that the only formula that cannot be seen to be precomputed is $(x \neq y
%%   \land x\adj y) \lor (x \neq y \land x\nadj y)$. But this formula is again
%%   trivially true if the graph consists of at least two vertices.

  We can extend the above arguments to the task of editing edges quite
  easily: All cases stay the
  same, with the exception that violations of $\forall x \exists y
  (x\adj y)$ can now also be repaired by adding edges if there are only up to $2k$ isolated vertices. Once more,
  this can be checked using a $\Para\Class{AC}^{0}$-circuit. 
\end{proof}

\subsubsection*{Lower Bounds.} We now go on to prove lower bounds for basic graphs.
Observe that since we have shown that $\PEM{basic}{\otimes}(p) \subseteq \Para\Class{AC}^{0\uparrow}$ for $p
    \preceq e^*a^*$ and $p \preceq ae$, the next larger quantifier patterns to be addressed are $p \in \{aea, aee, aae, eae\}$. In the following, we show that for all these choices of $p$, $\PEM{basic}{\otimes}(p)$ contains a $\Class{W}[2]$-hard problem by performing reductions from $\PLang{Set-Cover}$.

\begin{lemma}[$\star$]\label{lemma:p-basic-aea}
  For each $\otimes\in \{\mathrm{del}, \mathrm{add}, \mathrm{edit}\}$, $\PEM{basic}{\otimes}(aea)$ contains a $\Class{W}[2]$-hard problem.
\end{lemma}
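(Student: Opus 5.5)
We plan a parameterized reduction from $\PLang{Set-Cover}$ to $\PEM{basic}{\mathrm{del}}(\phi)$ for a suitable $\phi$ of pattern $aea$, following the six-step scheme of Lemma~\ref{lemma:p-undir-ae}. The addition version then follows from Observation~\ref{observation:del-add-same}. The central difficulty compared to Lemma~\ref{lemma:p-undir-ae} is that basic graphs have no self-loops, so we can no longer mark a set vertex $s$ as ``selected'' by deleting its loop. Our plan is to use the extra universal variable to express a neighbourhood-containment property instead. We take
\begin{align*}
  \phi = \forall x \exists y \forall z \bigl(x \adj y \land (y \adj z \to (z = x \lor z \adj x))\bigr),
\end{align*}
which says that every vertex $x$ has a neighbour $y$ with $N(y) \subseteq N[x]$. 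A set vertex $s$ will count as \emph{selected} once all of its neighbours outside the element part have been removed.

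\emph{The construction.} For each $u \in U$ we create $c\cdot k+1$ copies $u_i$ (with a constant $c$ fixed below), and we turn the set $U'$ of all copies into a clique. We join $u_i$ to $s$ whenever $u \undiradj s$. Each set vertex $s$ receives a private selector gadget $g_s, g'_s$ forming a triangle with $s$. Each element copy $u_i$ receives a private gadget as well, so that a copy $u_j$ can never serve as the witness $y$ for another copy $u_i$: $N(u_j)$ contains a gadget vertex outside $N[u_i]$. Hence the only candidate witnesses for $u_i$ are set vertices $s \adj u_i$, and such an $s$ works exactly when its selector gadget has been detached. Neighbours of $s$ in $U'$ are already in $N[u_i]$ because $U'$ is a clique. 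The budget is set to $k' = 2k$, since two edges per selected set must be deleted.

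\emph{Forward direction.} Given a cover $C$, we delete $s g_s$ and $s g'_s$ for all $s \in C$. We then check vertex by vertex that a witness exists. Each copy $u_i$ uses a covering $s$. The gadget vertices use their triangle partners; for example, $g_s$ uses $g'_s$, since $N(g'_s) \subseteq \{g_s, s\}$.

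\emph{Backward direction.} A modulator of size $\le 2k$ touches the private gadgets or clique edges of at most $4k$ copies of each $u$. Some untouched copy $u_i$ therefore remains, and its only possible witness is a set neighbour $s$ whose gadget edges were deleted. Counting deleted gadget edges then yields a cover of size $\le k$. Editing is handled as in Lemma~\ref{lemma:p-undir-ae}: an added edge affects at most two copies, so choosing many copies defeats it.

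\emph{Main obstacle.} We expect the hardest step to be ensuring that \emph{every} vertex of $G'$ satisfies $\phi$ both before and after the intended deletions, and that no unintended cheap modifications exist. Set vertices are the delicate case: after selection, $s$ needs a witness. Our first attempt is an extra escape vertex $e_s$ adjacent to $s$ and to all neighbours of $s$ in $U'$. Then $N(e_s) \subseteq N[s]$, and $e_s \in N[u_i]$ for every $u_i \adj s$, so the escape vertex does not spoil selection. The escape vertex $e_s$ itself also needs a witness, and the element gadgets need one too. We would iterate on the gadget design, possibly replacing triangles by small cliques or adjusting $\phi$ by a disjunct that exempts gadget vertices via a quantifier-free local condition, until all auxiliary vertices are satisfied robustly.
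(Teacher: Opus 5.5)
\paragraph{Verdict.} There is a genuine gap. The proposal stops at ``we would iterate on the gadget design'', and the pieces you do specify do not work.

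\paragraph{What fails in your construction.} With the triangle gadget, a selected $s$ keeps only copies $v_j$ as neighbours. Each $N[v_j]$ contains the private gadget of $v_j$, which lies outside $N[s]$. So $s$ has no witness, and the forward direction fails, as you suspected.

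The escape vertex makes things worse. Because $U'$ is a clique, $N(e_s)=\{s\}\cup\{v_j : v\adj s\}\subseteq N[u_i]$ for every copy $u_i$ with $u\adj s$. Hence $e_s$ already witnesses $u_i$ in the unmodified graph, and the copies no longer force any selection.

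This is structural, not a matter of tuning. Under your formula, $y$ witnesses $x$ iff $y\adj x$ and $N[y]\subseteq N[x]$. Suppose $s$ witnesses $u_i$ and some $y\neq u_i$ witnesses $s$. Then $y\in N[s]\subseteq N[u_i]$ and $N[y]\subseteq N[s]\subseteq N[u_i]$, so $y$ witnesses $u_i$ as well. Any vertex meant to witness $s$ after selection must therefore be blocked by the same gadget before selection. Witness chains end in true twins, so $s$ would have to become a true-twin pair sharing its blocker, and none of this is worked out.

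Your backward argument has a separate hole. It only considers edits incident to a surviving copy $u_i$, but whether a neighbour $y$ witnesses $u_i$ depends on $N[y]$. Detach the private gadgets of two other copies $u_j,u_l$ of $u$. They become true twins with $N[u_j]\subseteq N[u_i]$, so all copies of $u$ are satisfied without selecting any set. The cost of this is set by a private gadget you never specify. At the least you would need costly element gadgets plus an exchange argument.

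\paragraph{The missing idea.} Choose a formula whose witness condition is local to the pair $(x,y)$, so that nothing propagates into the witness's own neighbourhood. The paper uses $\forall x\exists y\forall z\,\bigl(x\adj y\land\neg(x\adj z\land y\adj z)\bigr)$: every vertex has a neighbour with which it shares no triangle.

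Its construction works as follows. Each set becomes a $4$-cycle $s\,s'\,s''\,s'''$. The $2k+1$ pairwise non-adjacent copies of $u$ are joined to both $s$ and $s'$ whenever $u\adj s$. Selecting $s$ is the single deletion of $ss'$, which destroys every triangle $u_i s s'$ at once. The cycle vertices are witnessed by their cycle neighbours both before and after the deletion.

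For the backward direction, each copy needs $u_i s$, $u_i s'$ or $ss'$ deleted. With $2k+1$ copies against budget $k$, some $ss'$ must be deleted, which yields a cover. For editing, an added edge helps at most two copies, so the same counting applies.
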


\begin{lemma}[$\star$]\label{lemma:p-basic-aee}
  For each $\otimes\in \{\mathrm{del}, \mathrm{add}, \mathrm{edit}\}$, $\PEM{basic}{\otimes}(aee)$ contains a $\Class{W}[2]$-hard problem.
\end{lemma}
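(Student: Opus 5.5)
We reduce from $\PLang{Set-Cover}$, following the six-step template of Lemma~\ref{lemma:p-undir-ae}. The self-loops used there are not available in basic graphs, so a deletable edge incident to a set vertex must play their role. Candidate formula:
\begin{align*}
  \phi_{\ref{lemma:p-basic-aee}} = \forall x \exists y \exists z \bigl(x \adj y \land x \adj z \land y \neq z \land y \nadj z\bigr).
\end{align*}
It has pattern $aee$ and says that every vertex has two distinct non-adjacent neighbours, i.e.\ no vertex has a clique neighbourhood. By Observation~\ref{observation:del-add-same} it suffices to treat deletion, with editing handled afterwards.

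\emph{Construction.} Given $(S \mathbin{\dot\cup} U, E, k)$, build $G'$ as follows.
\begin{itemize}
  \item Make $S$ a clique and add a hub vertex $h$ adjacent to every $s\in S$.
  \item For each $u\in U$, add $2k+1$ pairwise non-adjacent copies $u_1,\dots,u_{2k+1}$. Join each copy to $h$ and to every $s$ with $u\undiradj s$.
\end{itemize}
Then $N(u_i)$ is a clique, so every copy is initially unhappy. All other vertices are happy, since they see two non-adjacent copies (assuming no isolated elements or sets). Deleting $h\undiradj s$ makes $h$ and $s$ non-adjacent neighbours of every copy $u_i$ with $u\undiradj s$.

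\emph{Forward direction.} A cover $C$ with $|C|\le k$ yields the modulator $\{h\undiradj s \mid s\in C\}$. One must check that $h$ and $s$ stay happy, which follows from the independent copies.

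\emph{Backward direction and main obstacle.} The backward direction is the real difficulty. Deleting a clique edge $s\undiradj s'$ also makes every copy adjacent to both $s$ and $s'$ happy. Deleting an edge $u_i\undiradj v$ can make $u_i$ happy, but this helps only one of $2k+1$ copies; as in Lemma~\ref{lemma:p-undir-ae}, the copy argument disposes of such edits, including edge additions in the editing case.

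Clique edges do not spoil things. A copy of $u$ can only be repaired, non-locally, by an edge among $N(u_i)=\{h\}\cup N_G(u)$ disappearing. So every deleted edge $a\undiradj b$ that helps some copy of $u$ has $a$ or $b$ in $N_G(u)$. Mapping each deleted edge $h\undiradj s$ to $s$, and each $s\undiradj s'$ to one of its endpoints, loses too much, because such an edge certifies only those $u$ adjacent to both. Instead, map $s\undiradj s'$ to $s$ alone: any $u$ made happy by it is adjacent to $s$, so $s$ covers $u$. Hence $C=\{$one endpoint in $S$ of each deleted edge$\}$ is a cover of size at most $k$.

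The editing case is the part I expect to need the most care. Additions among the copies (the analogue of the happiness counting in Lemma~\ref{lemma:p-undir-ae}) make at most two copies happy per edit, which is fine. But adding an edge from $u_i$ to a fresh non-neighbour $w$ of some $N(u_i)$-vertex is also only local, and the $2k+1$ copies absorb this.

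For the $\mathrm{add}$ version I would apply Observation~\ref{observation:del-add-same} to $\overline{\phi}$, with the basic-graph complement; this keeps the pattern $aee$.
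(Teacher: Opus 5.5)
Your proposal is correct, but it takes a different route from the paper. The paper uses the monotone property ``every vertex lies in a triangle'', $\forall x\exists y\exists y'(x\adj y\land x\adj y'\land y\adj y')$, and reduces $\PLang{Set-Cover}$ to the \emph{addition} version:
\begin{itemize}
\item each set gets two private triangles, the copies $u_i$ are joined to $s_{i,1}$ and $s'_{i,1}$, and adding $s_{i,1}\undiradj s'_{i,1}$ encodes choosing $s_i$;
\item the backward direction normalizes cross additions $(s,t')$ to $(s,s')$;
\item editing is immediate because deleting edges never creates triangles, and deletion follows by complementation.
\end{itemize}
You instead use the dual, non-monotone property ``no neighbourhood is a clique'' and reduce directly to deletion, via the clique on $S$ plus the hub $h$.

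This buys you a simpler backward direction. Since $N(u_i)=\{h\}\cup N_G(u)$ is a clique and deletions only shrink it, a copy of $u$ can become happy only if some edge with both endpoints in $\{h\}\cup N_G(u)$ is deleted. So charging each deleted edge inside $\{h\}\cup S$ to an arbitrary $S$-endpoint already yields a cover. No exchange step is needed, and edges between two set vertices are handled uniformly.

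Two of your sentences are off, though neither affects the result. Your claim that mapping to ``one of its endpoints'' loses too much is confused: that charging is exactly right, and it is what your final sentence does. Also, deleting an edge incident to $u_i$ can never make $u_i$ happy, since the remaining neighbourhood is still a clique.

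The price of your formula is that editing is not free, unlike in the paper. Additions incident to a copy can repair it, so the $2k+1$ copies become essential through the count you sketch. If no edge inside $\{h\}\cup N_G(u)$ is deleted, every copy of $u$ must be an endpoint of an added edge, which forces more than $k$ additions. In the paper the copies already play this role for the native operation, since an addition incident to a copy can close a triangle. In your construction, the number of copies matters in the backward direction only once additions are allowed.

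Your handling of the addition version is fine: Observation~\ref{observation:del-add-same} with the basic-graph complement preserves the pattern $aee$. Removing sets without elements, so that set vertices stay happy, is also fine.
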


\begin{lemma}[$\star$]\label{lemma:p-basic-aae}
  For each $\otimes\in \{\mathrm{del}, \mathrm{add}, \mathrm{edit}\}$, $\PEM{basic}{\otimes}(aae)$ contains a $\Class{W}[2]$-hard problem.
\end{lemma}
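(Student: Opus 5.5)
We reduce from $\PLang{Set-Cover}$, following the six-step scheme used in Lemma~\ref{lemma:p-undir-ae}. Since basic graphs have no self-loops, a set is marked as ``selected'' by deleting its edge to a special hub vertex~$h$ instead of deleting a loop. The formula is
\begin{align*}
  \phi = \forall x\forall y\exists z\bigl(x\adj y \to (z\adj x \land z\nadj y \land z\neq y)\bigr),
\end{align*}
which has pattern $aae$. It states that for every edge $xy$, the vertex $x$ has a neighbor other than $y$ that is not adjacent to~$y$. In particular, $\phi$ holds on every triangle-free graph of minimum degree at least~$2$, such as a $5$-cycle.

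\emph{The reduction} maps $(S\mathbin{\dot\cup}U,E,k)$ to $k'=k$ and the following basic graph~$G'$.
\begin{itemize}
  \item Add a hub $h$ and every $s\in S$, with edges $h\undiradj s$.
  \item Attach to each $s$ a private $5$-cycle via a single edge from $s$ to one cycle vertex~$c_s$.
  \item For each $u\in U$, add $2k+1$ copies $u_i$. Make $u_i\undiradj s$ whenever $u\undiradj s$ in~$G$, and also $u_i\undiradj h$.
\end{itemize}
We assume w.l.o.g.\ that every element lies in at least two sets; otherwise we duplicate sets, which preserves the optimum.

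The main technical step is checking that $G'$ violates $\phi$ only for the pairs $(x,y)=(u_i,h)$. I would go through the edge types one by one.
\begin{itemize}
  \item Edges inside a cycle, and the pair $(c_s,s)$, are witnessed by another cycle vertex.
  \item $(s,c_s)$ is witnessed by $h$, and $(s,h)$ and $(s,u_i)$ are witnessed by~$c_s$.
  \item $(h,s)$ is witnessed by another set vertex, and $(h,u_i)$ by a copy of a different element or a set not containing~$u$.
  \item $(u_i,s)$ is witnessed by a second set $s'\ni u$, which is non-adjacent to~$s$.
  \item $(u_i,h)$ has no witness, since every neighbor of $u_i$ other than $h$ is a set vertex, and all of these are adjacent to~$h$.
\end{itemize}

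\emph{Forward direction.} Given a cover~$C$, delete $h\undiradj s$ for all $s\in C$. Each $u_i$ then has a neighbor $s\in C$ that is no longer adjacent to~$h$. I would recheck that no previously satisfied pair breaks: for instance, $(s,c_s)$ is still witnessed by some $u_i$ not adjacent to~$c_s$, and $(h,\cdot)$ still has other neighbors.

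\emph{Backward direction.} Let $D$ be a solution with $|D|\le k$.
\begin{itemize}
  \item At most $2k$ vertices are endpoints of edges in~$D$, so for each $u$ some copy $u_i$ is untouched and keeps the edge $u_i h$.
  \item Its violation of $(u_i,h)$ can only be fixed by removing the adjacency between $h$ and some set neighbor~$s$ of~$u_i$.
  \item Hence $\{s\mid h\undiradj s\in D\}$ is a cover of size at most~$k$.
\end{itemize}

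\emph{Editing and the directed case.} For editing, an added edge creates new witnesses only at its two endpoints. The same counting over $2k+1$ copies therefore leaves some copy $u_i$ whose neighborhood is unchanged except possibly via edges at~$h$, so the argument still forces deleting some $h\undiradj s$ with $s\ni u$. I expect the most delicate part to be the exhaustive check that added edges cannot create new violations that are cheaper to exploit, and that no other deletion repairs $(u_i,h)$. If the $5$-cycle gadget causes trouble there, I would first try replacing it with a larger triangle-free $2$-connected gadget. The addition version then follows from Observation~\ref{observation:del-add-same}.
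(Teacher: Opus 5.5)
Your forward direction breaks at the hub. After deleting $\{hs \mid s\in C\}$, a pair $(h,s)$ with $s\notin C$ needs a neighbor of $h$ outside $N(s)\cup\{s\}$. The remaining neighbors of $h$ are $(S\setminus C)\cup U'$, and set vertices are pairwise non-adjacent. So this fails exactly when $S\setminus C=\{s\}$ and $s$ contains every element, and your remark that ``$(h,\cdot)$ still has other neighbors'' does not exclude this case.

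Your own preprocessing produces such instances. From $U=\{u\}$, $S=\{\{u\}\}$, $k=1$ you get $S=\{s_1,s_2\}$ with $s_1=s_2=\{u\}$, which is a yes-instance of set cover. In $G'$, a single deletion can repair all three violated pairs $(u_i,h)$ only if it removes $hs_1$ or $hs_2$, since any other single deletion repairs at most one of them. After deleting, say, $hs_1$ we have $N(h)\setminus\{s_2\}=\{u_1,u_2,u_3\}\subseteq N(s_2)$, so $(h,s_2)$ becomes violated. Thus $(G',1)$ is a no-instance, and the reduction is incorrect as stated.

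The repair is cheap: give $h$ a private neighbor that is adjacent to no set vertex, for instance by attaching a $5$-cycle to $h$ via an edge $hc_h$, exactly as you did for the set vertices. Then every pair $(h,y)$ with $y\neq c_h$ is witnessed by $c_h$, the pair $(h,c_h)$ by any $u_i$, and $(c_h,h)$ by a cycle vertex. Meanwhile $(u_i,h)$ stays violated in $G'$ because $c_h\notin N(u_i)$. Your backward argument is unaffected, also under editing: an untouched copy $u_i$ still forces the deletion of some $hs$ with $u\in s$. You should also state explicitly two small assumptions:
\begin{itemize}
\item the sets in $C$ are non-empty, since your check of $(s,c_s)$ after the deletion needs some copy $u_i\in N(s)$;
\item an element lying in no set yields a trivial no-instance.
\end{itemize}

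With these changes your argument is a valid alternative to the paper's. The paper reduces to edge \emph{addition} towards the property that every edge lies in a triangle, $\forall x\forall x'\exists y\bigl(x\adj x'\to(x\adj y\land x'\adj y)\bigr)$. It uses a hub $c$ adjacent to all copies $u_i$ and selects a set by adding $cs$, and it also notes that hardness follows from the known $\Class{W}[2]$-hardness of adding edges to reach diameter~$2$. You instead use the dual, deletion-based encoding (selecting a set by deleting $hs$) and transfer to addition via Observation~\ref{observation:del-add-same}.
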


\begin{lemma}\label{lemma:p-basic-eae}
  For each $\otimes\in \{\mathrm{del}, \mathrm{add}, \mathrm{edit}\}$, $\PEM{basic}{\otimes}(eae)$ contains a $\Class{W}[2]$-hard problem.
\end{lemma}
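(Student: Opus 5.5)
The plan is to simulate the self-loops from the reduction in Lemma~\ref{lemma:p-undir-ae} by adjacency to a distinguished vertex. The leading existential quantifier of the pattern $eae$ picks that vertex. Concretely, I would take
\begin{align*}
  \phi_{\ref{lemma:p-basic-eae}} = \exists z\, \forall x\, \exists y\, \bigl(x = z \lor x \adj z \lor (x \adj y \land y \neq z \land y \nadj z)\bigr),
\end{align*}
where ``$v \adj z$'' plays the role of ``$v$ has a self-loop''. The formula says: there is a vertex $z$ such that every vertex other than $z$ that is not adjacent to $z$ has a neighbor $y \neq z$ that is not adjacent to $z$. By Observation~\ref{observation:del-add-same}, it suffices to treat deletion and then editing.

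\emph{The reduction.} Given $(S \mathbin{\dot\cup} U, E, k)$, set $k' = k$ and build $G'$ as follows. Keep the vertices of $S$. Add $2k+1$ copies $u_1,\dots,u_{2k+1}$ of each $u \in U$, with $u_i \undiradj s$ whenever $u \undiradj s$. Add a hub $z^*$ adjacent to every $s \in S$. Add a gadget that pins down $z^*$: $2k+2$ pendant paths of length two, $z^* - a_j - b_j$. With $z = z^*$ the $a_j$ are adjacent to $z$, so they are happy. Each $b_j$ is happy because it is adjacent to $a_j$, and $a_j$ itself has a non-hub neighbor $b_j$, which keeps the structure consistent. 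The vertices in $S$ are adjacent to $z^*$, so they are happy.

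\emph{Forward direction.} For a cover $C$, delete the edges $z^* \undiradj s$ for $s \in C$. Each $s \in C$ then needs a neighbor in $U'$ that is not adjacent to $z^*$; it has one, assuming without loss of generality that every $s \in C$ has a neighbor in $U$. Each copy $u_i$ is then happy via a covering $s$.

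\emph{Backward direction.} If $z$ is interpreted as $z^*$, every $u_i$ must have a neighbor $s$ whose hub edge was deleted, because only hub edges and $U'$–$S$ edges exist. This yields a cover of size at most $k$, exactly as in Lemma~\ref{lemma:p-undir-ae}.

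\emph{Main obstacle.} I expect the hardest step to be ruling out every other choice $z' \neq z^*$ after at most $k$ modifications; the pendant gadget is meant to do this. For $z' \neq z^*$, consider the $2k+2$ leaves $b_j$. At most $2k$ of them can be touched by $k$ edits, and at most one of them can be $z'$ or adjacent to $z'$ without an edit (only when $z' = a_j$). So some untouched $b_j$ with $z' \notin \{a_j, b_j\}$ needs $a_j \nadj z'$ and $a_j \neq z'$. That is fine, so this alone does not exclude $z'$. I would therefore strengthen the gadget so that many $b_j$ are forced to see $z'$ directly. The first thing to try is to replace the paths by $2k+2$ disjoint triangles $a_j b_j c_j$ each attached to $z^*$ only through $a_j$, together with $2k+2$ \emph{isolated} vertices. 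An isolated vertex $x$ can be happy only if $x = z'$ or $x \adj z'$. In the deletion version, at most one isolated vertex can satisfy this (namely $x = z'$ itself), so deletion forces failure for every choice of $z$ whenever two isolated vertices exist. Hence I would instead make those vertices pendant to $z^*$, so that only $z' = z^*$ is adjacent to all of them. Since $k$ edits can connect only $2k$ of them to some other $z'$, this forces $z' = z^*$.

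\emph{Editing and directed settings.} For editing, the counting follows Lemma~\ref{lemma:p-undir-ae}. Adding a single edge makes at most two copies $u_i$ happy, for example by joining them to each other or to a vertex not adjacent to $z^*$, so the $2k+1$ copies of each $u$ still force a deleted hub edge at a neighbor of $u$. Added hub edges only make vertices happy by placing them in the ``looped'' role and never help the copies $u_i$. I would also verify that adding edges from $u_i$ to the pendant vertices does not allow more than two copies to be satisfied per edit. No adaptation to directed graphs is needed, since the statement concerns only basic graphs.
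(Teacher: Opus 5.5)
\textbf{There is a genuine gap, and it sits at the step you flag as the main obstacle: your gadget does not pin $z$ to $z^*$, so the reduction sends no-instances to yes-instances.}

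Your formula says exactly that, for some $z$, the graph $G - N[z]$ has no isolated vertex. A vertex $x \notin N[z]$ is happy iff it has a neighbor outside $N[z]$. Your final argument claims that only $z' = z^*$ is adjacent to all pendants, so $z' = z^*$ is forced. This overlooks the third disjunct. A pendant $p$ of $z^*$ is happy for every $z'$ with $z^* \notin N[z']$, because $z^*$ itself serves as the witness $y$; adjacency to $z'$ is never needed.

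Concretely, delete the single edge $p_1 \undiradj z^*$ and choose $z = p_1$. Then $N[p_1] = \{p_1\}$, and every other vertex still has a neighbor, which automatically lies outside $N[p_1]$. So the formula holds after one deletion for every $k \geq 1$, regardless of the Set-Cover instance. Isolating a copy $u_i$ with at most $k$ neighbors in $S$ works just as well.

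With the triangle gadget it is worse: $z = c_j$ satisfies the formula with no edit at all. This assumes, as usual, that every $u$ lies in some set. Then every vertex outside $\{a_j,b_j,c_j\}$ has a neighbor outside that set; for instance $z^*$, each $s$, each $u_i$, and each $a_{j'}$ all qualify.

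Your first path gadget fails in the opposite direction. For $z = z^*$, each leaf $b_j$ is unhappy, because its only neighbor $a_j$ is adjacent to $z^*$ and so cannot be the witness. The $2k+2$ leaves therefore exceed the budget even on yes-instances.

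The underlying issue is that your target property only constrains vertices far from $z$. Any such vertex with a single far neighbor satisfies it for free, so leaves cannot pin the center. In the deletion version, cheaply creating one isolated vertex always yields a valid $z$. Rescuing this formula would need minimum degree above $k$ plus a genuinely new pinning argument, and the proposal supplies neither.

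The paper instead uses $\phi_{\text{radius-2}}$ with edge \emph{additions}, where every vertex must come close to the center:
\begin{itemize}
\item The $2k+1$ pendant paths $c \undiradj c_i \undiradj c_i'$, whose ends have no common neighbors, make every center other than $c$ cost more than $k$ additions.
\item Each $s$ reaches $c$ via $s'$.
\item The $k+1$ copies of each $u$ sit at distance $3$ from $c$ until some neighboring $s$ is joined to $c$.
\item Deletion then follows from Observation~\ref{observation:del-add-same}.
\item Editing follows because deleting edges never decreases the radius.
\end{itemize}
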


\begin{proof}
  \emph{The formula.} We consider the formula
  \begin{align*}
    \phi_{\text{radius-2}} = \exists c \forall x \exists y
    \bigl(x = c \lor x\adj c \lor (x\adj y \land y\adj c)\bigr), %\label{lemma:p-basic-eae-eq} 
  \end{align*}
  which expresses that a graph's radius is at most~$2$.
  
  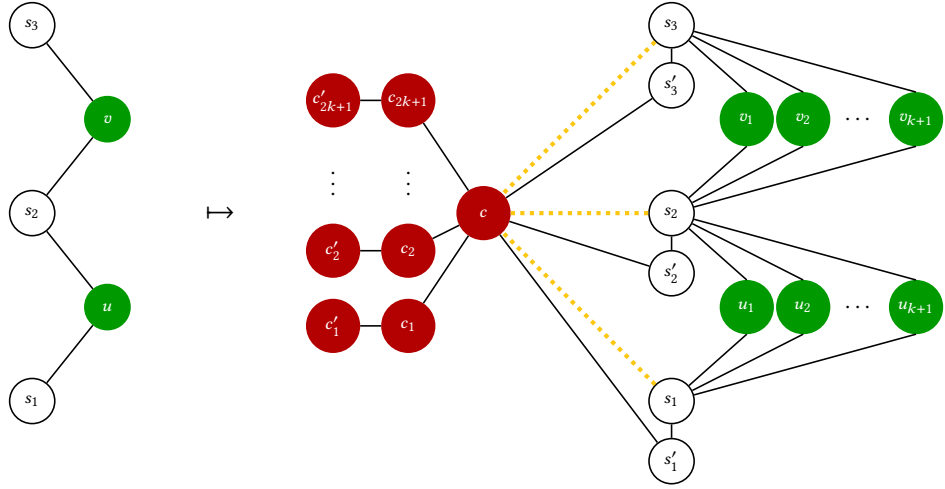
\begin{figure*}
    \centering
    \begin{tikzpicture}[
        right part/.style={xshift=8.5cm},y=2.5cm,
        node/.append style={minimum size=6mm, inner sep=0pt},
        large/.style={minimum size=7mm},
      ]

      \node (c) at (0,2) [red node, right part,shift={( -25mm,0)}, large] {$c$};
      \node (c11) at (-1,1.4) [red node, right part,shift={( -25mm,0)}, large] {$c_1$};
      \node (c12) at (-1,1.8) [red node, right part,shift={( -25mm,0)}, large] {$c_2$};
      \node (c13) at (-1,2.2) [right part,shift={( -25mm,0)}] {$\vdots$};
      \node (c14) at (-1,2.6) [red node, right part,shift={( -25mm,0)}, large] {$c_{2k +
      1}$};
      \node (c21) at (-2,1.4) [red node, right part,shift={( -25mm,0)}, large] {$c_1'$};
      \node (c22) at (-2,1.8) [red node, right part,shift={( -25mm,0)}, large] {$c_2'$};
      \node (c23) at (-2,2.2) [right part,shift={( -25mm,0)}] {$\vdots$};
      \node (c24) at (-2,2.6) [red node, right part,shift={( -25mm,0)}, large] {$c_{2k +
      1}'$};

      \draw 
      (c21) -- (c11)
      (c22) -- (c12)
      (c24) -- (c14)
      (c11) -- (c)
      (c12) -- (c)
      (c14) -- (c);

      \foreach \i in {1,2,3} {
        \node (s\i) at (0,\i) [node] {$s_\i$};
  
        \node (rs\i')    at (0,\i) [node]      [right part,shift={( 0mm,-8mm)}] {$s_\i'$};
        \node (rs\i)    at (0,\i) [node]      [right part] {$s_\i$};
  
        \draw [add edge] (rs\i) -- (c);
        \draw  (rs\i') -- (c);
        \draw  (rs\i) -- (rs\i');
      }

      \foreach \i/\name in {1/u,2/v} {
        \node (\name)   at (0,\i) [green node,shift={(1,.5)}] {$\name$};
  
        \node (\name1)  at (0,\i) [green node, large]  [right part,shift={( 1,.5)}] {$\name_1$};
        \node (\name2)  at (0,\i) [green node, large]  [right part,shift={( 1.75,.5)}] {$\name_2$};
        \node (\name3)  at (0,\i) []  [right part,shift={( 2.5,.5)}] {$\dots$};
        \node (\name4)  at (0,\i) [green node, large]  [right part,shift={( 3.25,.5)}] {$\name_{k+1}$};
      }
  
      \foreach \s/\u/\anc in {1/u/south,2/u/north,2/v/south,3/v/north} {
        \draw (s\s) -- (\u);
  
        \draw 
        (rs\s) -- (\u1.\anc)
        (rs\s) -- (\u2.\anc)
        (rs\s) -- (\u4.\anc);
      }
  
      % arrow
      \node (arrow) at (2.5,2) {\Large$\mapsto$};
    \end{tikzpicture}
    \caption{Visualization of the reduction in Lemma~\ref{lemma:p-basic-eae}.}
    \label{figure:p-basic-eae}
    \Description{An exemplary reduction. We are using
    the same conventions as in the previous figure. For each $s\in S$, we add
    vertices $s, s'$ with $s\undiradj' s'$, and for each element of $U$ we add exactly
    $k+1$ copies. Each edge $s \undiradj u$ gets replaced
    by edges connecting $s$ to every copy $u_i$ of~$u$. Moreover, a vertex $c$ is
    added and connected to every $s'$, vertices $c_1, \dots, c_{2k+1}$ are added
    and connected to $c$, and $c_1', \dots, c_{2k+1}'$ are added such that $c_i \undiradj'
    c_i'$ for all $i\in \{1, \dots, 2k+1\}$. The size-$1$ set cover $C =
    \{s_2\}$ corresponds to the fact that adding the edge $c \undiradj' s_2$ yields a
    graph in which every vertex has distance at most $2$ from $c$. The same is true
    for the size-$2$ set cover $C = \{s_1,s_3\}$. In contrast, $C = \{s_1\}$ is
    not a set cover, as $v$ is not covered and, indeed, $v_1$ has distance $3$
    from $c$, if we add neither $s_2\undiradj' c$ nor $s_3\undiradj' c$. The argument that $c$
    is the only viable center of the graph is given in the proof.}
  \end{figure*}

  \emph{The reduction.} We will reduce to the edge addition problem. Let $(S
  \mathbin{\dot\cup} U, E,k)$ be given as input.  The reduction outputs $k' = k$
  and a basic graph $G' = (V',E')$ constructed as follows:
  \begin{itemize}
    \item For each $s_i \in S$, add vertices named $s_{i}, s_{i}'$ to $V'$ and
    connect them via an edge.
    \item Add a vertex $c$ to $V'$ and connect it to every $s_{i}'$. Add $2k + 1$ vertices $c_i$ to $V'$ and
      connect them to $c$, and add $2k + 1$ vertices $c_i'$ to $V'$ and connect
      $c_i'$ to $c_i$.
    \item For each $u \in U$, add $k+1$ copies $u_1,\dots,
      u_{k+1}$ to $V'$.
    \item Whenever $u \undiradj s_i$ holds, connect all $u_i$ to $s_i$ in the new
      graph, that is, for $i\in\{1,\dots, k+1\}$ let $u_i \undiradj' s_i$. 
  \end{itemize}
  An example for the reduction is shown in Figure~\ref{figure:p-basic-eae}, using
    the same conventions as in the previous figure. For each $s\in S$, we add
    vertices $s, s'$ with $s\undiradj' s'$, and for each element of $U$ we add exactly
    $k+1$ copies. Each edge $s \undiradj u$ gets replaced
    by edges connecting $s$ to every copy $u_i$ of~$u$. Moreover, a vertex $c$ is
    added and connected to every $s'$, vertices $c_1, \dots, c_{2k+1}$ are added
    and connected to $c$, and $c_1', \dots, c_{2k+1}'$ are added such that $c_i \undiradj'
    c_i'$ for all $i\in \{1, \dots, 2k+1\}$. The size-$1$ set cover $C =
    \{s_2\}$ corresponds to the fact that adding the edge $c \undiradj' s_2$ yields a
    graph in which every vertex has distance at most $2$ from $c$. The same is true
    for the size-$2$ set cover $C = \{s_1,s_3\}$. In contrast, $C = \{s_1\}$ is
    not a set cover, as $v$ is not covered and, indeed, $v_1$ has distance $3$
    from $c$, if we add neither $s_2\undiradj' c$ nor $s_3\undiradj' c$. The argument that $c$
    is the only viable center of the graph is given in the proof.

  \emph{Forward direction.}
  Let  $(S \mathbin{\dot\cup} U, E, k) \in \PLang{Set-Cover}$ be given. We show that $(G', k') \in
  \PEM{basic}{\mathrm{add}}(\phi_{\text{radius-2}})$ holds. Let $C
  \subseteq S$ with $|C| \le k$ be a cover of~$U$ and let $C' = \{(s, c) \mid s\in C\}\cup
  \{(c, s) \mid s\in C\}$. We claim that $(V', E' \cup C') \models
  \phi_{\text{radius-2}}$, that is, adding all $s \undiradj' c$ to~$E'$
  yields a graph of radius~$2$. Let us check all vertices in~$V'$: Choose $c$ to be the first existentially-quantified vertex
  (the center). Then, each
  $s'$ is adjacent to $c$, and each $s$ has distance at most $2$ to
  $c$ via $s'$. Furthermore, each $c_i$ is also adjacent to $c$ and each
  $c_i'$ has distance at most $2$ to $c$ via $c_i$. It remains to check $u \in U$. Since $C$ is a cover, there is an $s\in C$ for every $u\in
  U$ with $s\undiradj' c$, and by construction, every vertex in $U$ has distance at most $2$ to $c$
  via $u_i \undiradj' s\undiradj' c$.

  \emph{Backward direction.}
  Conversely, suppose that for $G' = (V', E')$ we are given a set $A \subseteq
  V' \times V'$ with $\|A\| \le k$ such that $(V', E'\cup A) \models
  \phi_{\text{radius-2}}$. 
  First, assume that $c$ is assigned to the first existentially-quantified
  variable. Then, for each $u \in U$, consider the copies $u_i$
  for $i\in \{1,\dots,k+1\}$. Since we are only allowed to add $k$
  edges, we cannot afford to directly connect each $u_i$ to~$c$,$s$ or~$s'$. We are thus forced to add $s \undiradj'c$
  for some $s$ with $u_i \undiradj' s$. Therefore, each $u\in U$
  has a neighbor $s\in S$ such that $s\undiradj' c$, hence, the set $C = \{s \mid s\in
  S \text{ and }
  (s, c) \in A\}$ is a cover with $|C| \leq \|A\| \le k$.

  Now, we argue that we cannot assign the first quantified variable differently. Suppose we chose $x$ to
  be a vertex from~$c_i$. Then each $c_j'$ with $i\neq j$ has distance
  greater than $2$ from $c_i$, and since the $c_j'$ have no common neighbors, we would have
  to add more than $k$ edges to bound their distances to be at most
  $2$ from $c_i$. The argument when $x$ is chosen to be one of the $s$, $s'$, or
  $u$ is similar.
 
  \emph{Edge editing.}
  To see that the reduction also works for edge editing, observe that
  deleting edges cannot reduce the radius of a graph.

  \emph{Adaption to the directed setting.} 
  The reduction can be performed in exactly the same way. Choosing a set $S_i$ now
  corresponds to adding the directed edge from $s_i$ to $c$.
\end{proof}

As a corollary of the proof of Lemma~\ref{lemma:p-basic-eae}, we obtain two hardness results that were previously unknown.

\begin{corollary}
  The problem $\PLang{Edge-Adding To Radius 2}$ as well as $\PLang{Edge-Editing To Radius 2}$ is $\Class{W}[2]$-hard.
\end{corollary}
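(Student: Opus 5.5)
This corollary follows essentially directly from the proof of Lemma~\ref{lemma:p-basic-eae}: the formula $\phi_{\text{radius-2}}$ used there defines exactly the class of basic graphs of radius at most~$2$. Hence $\PEM{basic}{\mathrm{add}}(\phi_{\text{radius-2}})$ \emph{is} the problem $\PLang{Edge-Adding To Radius 2}$, and $\PEM{basic}{\mathrm{edit}}(\phi_{\text{radius-2}})$ is $\PLang{Edge-Editing To Radius 2}$. The plan is therefore to reread the reduction in that proof as a reduction from $\PLang{Set-Cover}$ to these two named problems, rather than to a problem in an abstract class.

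The key steps, in order, are as follows. First, check the semantics: $\exists c\,\forall x\,\exists y\,(x=c\lor x\adj c\lor(x\adj y\land y\adj c))$ states that some vertex $c$ has every vertex within distance~$2$. So $\mathcal G\models\phi_{\text{radius-2}}$ iff $\mathrm{rad}(\mathcal G)\le 2$, and the output graph $G'$ is a basic graph. Second, recall that the map $(S\mathbin{\dot\cup}U,E,k)\mapsto(G',k)$ is computable in $\Para\Class{AC}^0$ and keeps the parameter unchanged, since it only copies vertices $2k+1$ or $k+1$ times and adds fixed gadgets. Third, invoke the forward and backward directions of that proof. The forward direction turns a cover $C$ into the edges $s\undiradj' c$ for $s\in C$. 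The backward direction shows that $c$ must be the center and that the $k+1$ copies of each $u$ force some neighbouring $s$ to be joined to $c$. Together these give equivalence with the addition problem. Since $\PLang{Set-Cover}$ is $\Class W[2]$-hard, $\Class W[2]$-hardness of the adding version follows.

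For the editing version, the proof of Lemma~\ref{lemma:p-basic-eae} notes that deleting edges never decreases the radius, since distances can only grow. Hence any edit set of size at most $k$ achieving radius at most $2$ can be replaced by its addition part, which is no larger and still achieves radius at most~$2$. The editing instance is thus equivalent to the adding instance, and hardness transfers.

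The only point needing care is the backward step's claim that no vertex other than $c$ can serve as center within budget~$k$. I would rely on the lemma's argument here: the $2k+1$ pendant paths $c\undiradj' c_i\undiradj' c_i'$ ensure that any other center leaves at least $2k$ vertices $c_j'$ at distance greater than $2$. These vertices have pairwise disjoint neighbourhoods, so each one needs its own added edge, which exceeds~$k$. Once that is accepted, the corollary is immediate.
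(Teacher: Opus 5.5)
Your proposal is correct and takes essentially the paper's route: the corollary is read off from the proof of Lemma~\ref{lemma:p-basic-eae}, because $\phi_{\text{radius-2}}$ defines exactly the basic graphs of radius at most~$2$. The Set-Cover reduction there gives the adding case, and the observation that deleting edges never decreases the radius transfers hardness to the editing case. Your charging remark for centers other than $c$ (each far vertex $c_j'$ needs its own added edge) matches the paper's argument and tacitly assumes $k \ge 1$, which is harmless.
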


These statements can be extended to arbitrary fixed radii.

\begin{lemma}\hfil
  \begin{enumerate}
    \item $\PLang{Edge-Adding To Radius 1}$ as well as $\PLang{Edge-Editing}$ $\Lang{To Radius
    1}$ lie in $\Para\Class{AC}^0$.
    \item For every fixed $r \geq 2$, both
    $\PLang{Edge-Adding To Radius $r$}$ and
    $\PLang{Edge-Editing To Radius $r$}$ are $\Class{W}[2]$-hard.
  \end{enumerate}  
\end{lemma}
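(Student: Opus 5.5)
For part~1, I would argue directly that a graph has radius at most~$1$ exactly when it has a \emph{universal} vertex, meaning one adjacent to all other vertices. This is the basic-graph setting. In the addition case, making a fixed vertex $v$ universal costs exactly $n-1-\deg(v)$ added edges, and no cheaper addition modulator works for $v$ as center. So $(G,k)$ is a yes-instance iff some vertex has degree at least $n-1-k$. For editing the same criterion holds, since deletions never decrease distances and thus never help. Checking whether some vertex has at most $k$ non-neighbours is an OR over all $v$ of a threshold gate with threshold parameter $k$. By the simulation of parameter-bounded threshold gates~\cite[Lemma 3.3]{BannachST15}, this is in $\Para\Class{AC}^0$. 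I would phrase the condition as ``$v$ misses at most $k$ vertices'' so that the threshold is bounded by $k$ and not by $n$.

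For part~2, I would generalize the reduction of Lemma~\ref{lemma:p-basic-eae} from $\PLang{Set-Cover}$. Keep the same layout, but replace every distance-$1$ gadget by a path so that the intended center $c$ reaches everything within $r$ except the element copies.
\begin{itemize}
\item Each $s\in S$ gets a path $s=s^{(0)},s^{(1)},\dots,s^{(r-1)}$ whose last vertex is adjacent to $c$. Then $s$ is at distance $r$ from $c$, and adding the edge $s\undiradj' c$ is the intended modification.
\item Each $u\in U$ gets $k+1$ copies adjacent to the corresponding $s$ vertices. These copies are at distance $r+1$ unless some neighbouring $s$ is joined to $c$, in which case they are at distance $2\le r$.
\item To pin the center, attach $2k+1$ pendant paths of length $r$ to $c$.
\end{itemize}
The forward direction is immediate: adding $s\undiradj' c$ for every $s$ in the cover gives every vertex eccentricity-distance at most $r$ from $c$.

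For the backward direction with center $c$, the same counting as before applies. Each $u$ has $k+1$ copies, each at distance $r+1$. A single added edge can shorten the distance of only one copy, unless the edge is incident to a common neighbour $s$ of several copies, or to a vertex on an $s$-path. I would normalise such a solution: an edge from $c$, or from something near $c$, to some $s^{(j)}$ can be replaced by the edge $s\undiradj' c$ without increasing any distance. This yields a cover of size at most $k$. For editing, deletions again never help reduce the radius.

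The main obstacle is ruling out any other center. Suppose the center is some vertex $x\neq c$. Then at least $2k+1-1$ of the pendant paths end at distance more than $r$ from $x$, because reaching their far endpoints requires passing through $c$, giving distance $d(x,c)+r>r$. Their far endpoints have pairwise distant neighbourhoods. So each added edge can repair at most two of them: one endpoint through each end of the edge, and I would need to check this carefully for length-$r$ paths. Since $k$ edges then fix at most $2k<2k$ endpoints, too many remain, a contradiction. If this bound is tight, I would enlarge the number of pendant paths to $3k+1$ to get slack. I expect this center-uniqueness counting to be the delicate step, along with verifying that the added path lengths do not let $s$-paths or element copies act as shortcuts.
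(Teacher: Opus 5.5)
Part~1 is fine and is exactly the paper's argument: for every candidate center, count its non-neighbours against the parameter. For $r=2$ your construction also coincides with the one from Lemma~\ref{lemma:p-basic-eae}.

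\textbf{The gap is in the generalization to $r\ge 3$.} You put all the padding on the set side, so each set vertex $s$ is at distance $r$ from $c$. But every copy $u_i$ remains \emph{directly} adjacent to all set vertices containing $u$, so a copy is itself a hub. Adding the single edge $\{c,u_1\}$ places every $s\ni u$ at distance~$2$ from~$c$. Hence every copy of every element sharing a set with~$u$ ends up at distance $3\le r$. Your claim that one added edge shortens the distance of only one copy, unless it touches an $s$ or an $s$-path, misses exactly this. Your normalisation also cannot remove such an edge: it only handles edges ending at some $s^{(j)}$, and replacing $\{c,u_1\}$ by $\{c,s\}$ for a single $s\ni u$ covers much less.

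Concretely, take $r=3$, $S=\{s_1,s_2\}$ with $s_j=\{u,v_j\}$, and $k=1$. No single set covers $U$. Yet in your $G'$, adding $\{c,u_1\}$ gives $c$ eccentricity $3$:
\begin{itemize}
\item $u_1$ and $s_j^{(2)}$ are at distance $1$;
\item $s_j$ and $s_j^{(1)}$ are at distance $2$;
\item $u_2$ and all copies of $v_1,v_2$ are at distance $3$;
\item pendant vertices are at distance at most $3$.
\end{itemize}
So the backward direction is false for every $r\ge 3$, and the reduction does not establish hardness there.

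\textbf{The remedy}, which is what the paper does, is to put the padding on the element side. Keep the set gadget $s, s', c$, so every set vertex is at distance $2$ from $c$ and only the edge $\{c,s\}$ can bring it to distance~$1$. Join each copy $u_i$ to the sets containing $u$ through its own private path $u_i,u_i^1,\dots,u_i^{r-2},s$, and make the pendant paths at $c$ have length~$r$ as you do. Then copies sit at distance $r+1$, and adding $\{c,s\}$ brings them to $r$.

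If no edge $\{c,s\}$ with $s\ni u$ is added, consider the last added edge on a shortest $c$--$u_i$ path. Its continuation endpoint must lie on $u_i$'s private path. Two copies cannot share this edge, since shortest paths from $c$ would have to traverse it in opposite directions. So $k+1$ copies need $k+1$ edges, and the set-cover extraction goes through.

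Finally, your center-uniqueness count as written (``$2k<2k$'') does not close, but your own fallback works. With $3k+1$ pendant paths, every path whose end is farther than $r$ from a center $x\neq c$ must contain an endpoint of an added edge. Since $k$ edges touch at most $2k<3k$ such paths, some path stays too far.
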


\begin{proof}
  To show (1), notice that we can test for all
  potential centers $c$ in parallel and count for every $c$ how many vertices are not
  adjacent to it. This yields a $\Para\Class{AC}^0$ algorithm.

  To show (2), extend the construction from Lemma~\ref{lemma:p-basic-eae} by replacing the
  paths $c\undiradj' c_i\undiradj' c_i'$ by paths $c\undiradj' c_i^1\undiradj' c_i^2
  \undiradj' \cdots \undiradj' c_i^{r-1}$ and instead of connecting $u_i$ directly
  to $s$ if $u\undiradj s$, add the path $u_i \undiradj' u_i^1\undiradj' u_i^2\undiradj' \cdots
  \undiradj' u_i^{r - 2} \undiradj' s$.
\end{proof}

\subsection{Monadic Structures}

To complete the classification of the parameterized complexity, we finally turn to structures over monadic
vocabularies.

\begin{theorem}\label{theorem:p-monadic}
  Let $p \in \{a,e\}^*$ be a pattern. For each modification operation $\otimes\in \{\mathrm{del},
  \mathrm{add}, \mathrm{edit}\}$, we have $\PEM{mon}{\otimes}(p) \subseteq \Para\Class{AC}^0$.
\end{theorem}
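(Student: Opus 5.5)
Over a monadic vocabulary $\tau=\{P_1,\dots,P_r\}$, a structure is determined up to isomorphism by how many elements realize each of the $2^r$ \emph{colors}, i.e., each subset of $\tau$. The plan is to use Ehrenfeucht--Fraïssé games. Let $q$ be the quantifier rank of $\phi$ (its number of quantifiers). Then two monadic structures that agree on each color count truncated at $q$ (``$\min(n_\chi,q)$'') satisfy the same sentences of rank $q$. Consequently, whether $\mathcal A\models\phi$ depends only on the \emph{truncated color profile} $t(\mathcal A)=(\min(n_\chi(\mathcal A),q))_{\chi\subseteq\tau}$, which ranges over a finite set $T$ of size $(q+1)^{2^r}$. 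Let $T_\phi\subseteq T$ be the set of profiles for which $\phi$ holds. This set is a fixed, hard-wired table, independent of the input.

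Next I would describe modifications in the same terms. In the monadic case a modulator changes the color of single elements. Each tuple edited in $S_i$ toggles $P_i$ at one element, so changing an element from color $\chi$ to color $\chi'$ costs $|\chi\triangle\chi'|$. For deletion we require $\chi'\subseteq\chi$, and for addition $\chi'\supseteq\chi$. The question thus becomes whether some \emph{transfer matrix} $(m_{\chi,\chi'})$ exists such that:
\begin{itemize}
\item the number of elements leaving $\chi$ satisfies $\sum_{\chi'\ne\chi}m_{\chi,\chi'}\le n_\chi$;
\item the total cost $\sum m_{\chi,\chi'}|\chi\triangle\chi'|\le k$;
\item the resulting counts, truncated at $q$, form a profile in $T_\phi$.
\end{itemize}
Every moved element costs at least~$1$, so all $m_{\chi,\chi'}\le k$. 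Hence there are at most $(k+1)^{4^r}$ candidate matrices, a number depending only on $k$.

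For the circuit, in parallel over all candidate matrices I would compute, for each color $\chi$, whether $n_\chi\ge j$ for each $j\le k+q$. This is a threshold with threshold bounded by a function of the parameter, so it is available in $\Para\Class{AC}^0$ by the cited Lemma~3.3 of Bannach et al. These bits determine both feasibility (whether $n_\chi\ge$ outflow) and the truncated value of $n_\chi-\text{out}_\chi+\text{in}_\chi$ up to $q$, because we only need $\min(n_\chi,k+q)$ exactly. A constant-depth lookup into $T_\phi$ then decides each candidate, and a final big OR over candidates finishes. The size is $f(k)\cdot n^{O(1)}$ and the depth is constant.

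The main obstacle is the correctness of the truncation claim, namely that rank-$q$ equivalence of monadic structures is captured by counts capped at $q$. I would prove it with a standard EF argument: Duplicator answers each move with an element of the same color, choosing a fresh one when Spoiler picks a fresh one. This is possible because either both counts are equal, or both are at least $q$ while at most $q-1$ elements have been used so far. A second subtlety is the cost accounting (each tuple in $S_i$ is a unary tuple, so it toggles exactly one membership), and one should check that it matches $\|S\|$. Empty universes can be ignored, since the universe is nonempty and unchanged.
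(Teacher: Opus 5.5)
Your proof is correct, but it takes a different route from the paper's at the decisive step.

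The paper also groups elements by type histogram. It enumerates what are essentially your transfer matrices, which it calls ``lexicographically minimal type modulators''; there are at most $k^{2^{2r}}$ of them. It then realizes each candidate on concrete elements, namely the smallest elements of each type. Finally it model-checks the fixed formula $\phi$ directly on $\mathcal{A} \triangleright S_T$. For this it relies only on two facts: monadic structures with equal histograms are isomorphic, and first-order model checking is in $\Class{AC}^0$.

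You instead prove, via an Ehrenfeucht--Fra\"{\i}ss\'e argument, that the truth of $\phi$ depends only on the counts capped at the quantifier rank $q$. This lets you compile $\phi$ into a finite table $T_\phi$ and never construct the modified structure at all. The only non-trivial gates are then the bounded thresholds $[n_\chi \ge j]$ for $j \le k+q$. Your observation that $\min(n_\chi,k+q)$ determines both feasibility and the capped post-modification counts is exactly what makes this work.

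Your route costs one extra (standard) lemma. In return it makes the circuit fully explicit, and it sidesteps a step the paper leaves implicit: selecting in $\Para\Class{AC}^0$ the first $m \le k$ elements of a given type, which itself requires bounded-threshold counting. The paper's route needs no game argument and stays within the guess-a-modulator-and-verify template used throughout the paper.

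Both arguments implicitly treat the order on the universe as an encoding device only, not as a symbol available to $\phi$. If $\phi$ could use the order, neither histogram invariance nor capped-count invariance would hold.
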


\begin{proof}
  For a fixed monadic vocabulary $\tau = \{R_1^1, \dots, R_r^1\}$ and a given
  structure $\mathcal{A} = (A, R_1^{\mathcal{A}}, \dots, R_r^{\mathcal{A}})$,
  let the \emph{type} of an element $a\in A$ be defined as
  $\operatorname{type}(a) \coloneq \{R_i\in \tau \mid a\in R_i\}$. Let $T_\tau$ denote the set
  of all possible types of elements in $\mathcal{A}$ and observe that $|T_\tau| = 2^r$. We can order $T_\tau$ by letting $t_1
  <_{\text{lex}} t_2$ if $\sum_{R_i\in t_1} 2^{i-1} < \sum_{R_i\in t_2}
  2^{i-1}$ for $t_1, t_2 \in T_{\tau}$.
   The \emph{type histogram} of the
  structure $\mathcal{A}$ is then defined as $\operatorname{hist}(\mathcal{A})
  \coloneq \left(e_t\right)_{t \in T_\tau}$, where $e_t$ is the number of
  elements in $\mathcal{A}$ that have type $t$.

  A \emph{type modulator} is a set $S_T$ consisting of triples $(t_1, t_2, a)$,
  which states that we modify element $a$ from having type $t_1$ to having type $t_2$. A
  type modulator is \emph{admissible for a structure $\mathcal{A}$}, if for
  every tuple $(t_1, t_2, a)$, the element $a$ has type $t_1$ in $\mathcal{A}$ and every element
  occurs in at most one triple of the type modulator. Note that each admissible type modulator $S_T$
  gives rise to a modulator $S$ and vice versa. Thus, we let $\|S_T\| \coloneq
  \|S\|$ and write $\mathcal{A} \triangleright S_T$ for the structure obtained by applying the modulator $S$ that corresponds to the type modulator
  $S_T$ to $\mathcal{A}$. We can define a lexicographic order on type modulators
  by letting $(t_1, t_2, a) <_{\text{lex}} (t_1', t_2', a')$ if (1) $t_1
  <_{\text{lex}} t_1'$ or if (2) $t_1 = t_1'$ and $t_2 <_{\text{lex}} t_2'$ or if
   (3) $t_1 = t_1'$, $t_2 = t_2'$, and $a < a'$, where $a < a'$ is the order we assume on the
  universe. Two type modulators $S_T, S_T'$ are \emph{congruent}, if for every
  $(t_1, t_2, a) \in S_T$ and $(t_1', t_2', a')\in S_T'$ at the same position in
  the order, we have that $t_1 = t_1'$ and $t_2 = t_2'$. Now, for two congruent
  type modulators $S_T$ and $S_T'$, we have $S_T <_{\text{lex}} S_T'$ if $a <
  a'$ for all pairs of tuples $(t_1, t_2, a) \in S_T$ and $(t_1', t_2', a')\in
  S_T'$ at the same position in the order.

  Note that two monadic structures are isomorphic if they have the same type
  histogram. This in turn means that two congruent type modulators yield the
  same structure up to isomorphism, so it suffices to consider
  lexicographically minimal type modulators. There are only
  $k^{|T_\tau|^2} = k^{2^{2r}}$ lexicographically minimal type modulators of size $k$, so the number of candidate modulators that need to be considered is polynomial in $k$.

  Now, with a $\Para\Class{AC}^0$-circuit, we can check in parallel, if for any of these candidates, it holds that $\mathcal{A} \triangleright
  S_T \models \phi$. If this is the case, we accept, otherwise, we reject.
\end{proof}

%%%%%%%%%%%%%%%%%%%%%%%%%%
% Classical
%%%%%%%%%%%%%%%%%%%%%%%%%%
\section{Classical Complexity of Relation Modification Problems}
\label{section:classical}
% \tcsautomoveaddto{main}{\subsection{Proofs for
% Section~\ref{section:classical}}}

The results from the previous section largely also apply to the
classical setting and help us to establish the following tractability frontier:
Relation modification problems are either 
solvable in $\Class{TC}^0$ or contain intractable problems.

\subsection{Undirected Graphs, Directed Graphs, and Arbitrary Structures}
As in the parameterized setting, we first provide
the tractability landscapes of relation modification problems for undirected and
directed graphs and arbitrary structures. Again, we find that these landscapes coincide and exhibit the following complexities.

% {Pattern-Based Complexity Dichotomy for $\EM{undir}{\otimes}(p)$ and $\EM{dir}{\otimes}(p)$}
\begin{theorem}\label{theorem:undir}
  Let $p \in \{a,e\}^*$ be a pattern. For each modification operation $\otimes\in \{\mathrm{del},
  \mathrm{add}, \mathrm{edit}\}$, we have
  \begin{enumerate}
  \item $\EM{arb}{\otimes}(p) \subseteq \Class{AC}^0$, if $p
    \preceq e^*$.
  \item $\EM{arb}{\otimes}(p) \subseteq \Class{TC}^0$, if
  $ea\preceq p
    \preceq e^*a$.
  \item $\EM{undir}{\otimes}(p)$ and $\EM{dir}{\otimes}(p)$ both contain an $\Class{NP}$-hard
    problem, if $ae \preceq p$ or $aa \preceq p$.
  \end{enumerate}
\end{theorem}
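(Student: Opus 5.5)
We prove the three items separately; throughout, $\phi$, its vocabulary and all arities are fixed constants.

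\emph{Item 1 ($p \preceq e^*$).} Write $\phi = \exists x_1\cdots\exists x_c\,\psi$ with $\psi$ quantifier-free. Whether $\mathcal{A}\triangleright S\models\phi$ depends only on the witnesses $v_1,\dots,v_c$ and on the relations restricted to $\{v_1,\dots,v_c\}$. So the circuit guesses in parallel all $|A|^c$ assignments and all $2^{\sum_i c^{a_i}}$ target relation patterns on the induced substructure. For each pair it checks two things: that $\psi$ holds under the pattern, and that the number of tuples on $\{v_1,\dots,v_c\}$ that must change is at most $k$. The second check is legal for the operation (only deletions or only additions where required). The count is a constant, so comparing it with $k$ needs only constantly many bits of $k$. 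The whole circuit is a constant-depth OR over polynomially many constant-size checks, hence in $\Class{AC}^0$. Tuples outside the witness set are irrelevant and are left unmodified, so no further modulator is needed.

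\emph{Item 2 ($ea\preceq p\preceq e^*a$).} Write $\phi=\exists\bar x\,\forall y\,\psi$ and again guess $\bar v$ and a pattern on $\bar v$ in parallel, with modulator $S^{(0)}$. Fixing $\bar v$ and the pattern, consider each candidate $w\in A$. The truth of $\psi[\bar v,w]$ depends only on the relations on tuples over $\{\bar v,w\}$ that contain $w$. Because $y$ is the only universal variable, these tuple sets are disjoint for distinct $w$ (apart from tuples inside $\bar v$, which are already fixed). Hence the edits needed for different $w$ are independent. For each $w$ we therefore compute the cheapest local repair cost $\mathrm{cost}(w)$: the minimum number of admissible tuple changes on $\{\bar v,w\}$ involving $w$ that makes $\psi[\bar v,w]$ true, or $\infty$. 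This is a constant-size brute force. The instance is accepted iff, for some $\bar v$ and pattern, every cost is finite and $\|S^{(0)}\|+\sum_{w}\mathrm{cost}(w)\le k$. Each cost is bounded by a constant, so this sum is an iterated addition of small numbers compared against $k$, which lies in $\Class{TC}^0$.

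For undirected and basic graphs the measure counts unordered pairs. This causes no trouble, because the pair $\{v_j,w\}$ is still owned by $w$. The cost of repeated $w\in\bar v$ is absorbed into $S^{(0)}$.

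\emph{Item 3 ($ae\preceq p$ or $aa\preceq p$).} For $ae$, reuse the reduction of Lemma~\ref{lemma:p-undir-ae}, which is a polynomial-time (indeed $\Class{AC}^0$) reduction from Set-Cover. Set-Cover is NP-hard, so the $\Class{W}[2]$-hardness proof immediately gives NP-hardness for undirected and directed graphs and every $\otimes$. For $aa$, we need a fresh universal-only formula. The plan is to encode Vertex Cover by deletion: take $\phi=\forall x\forall y\,\bigl(x\adj y\wedge x\neq y\to x\adj x\vee y\adj y\bigr)$ on a graph in which every vertex initially carries a self-loop. Wait—this encodes nothing under deletion, since we would delete loops. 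So instead we switch to addition: every vertex starts without a loop, and we add loops to the cover vertices, with $\|S\|$ counting the loops added. Deleting the non-loop edges would also satisfy $\phi$. To block this, replace each edge $uv$ by $k+1$ parallel structures, for instance subdivided paths, or attach $k+1$ pendant copies. Alternatively we can first pass to the addition version via Observation~\ref{observation:del-add-same}, which excludes deletions except in the editing case.

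The editing case for $aa$ is the expected main obstacle. The plan there is to replicate each original vertex into $k+1$ twins, so that removing edges becomes too expensive and any cheap solution must add loops to a vertex cover. Patterns containing $aa$ or $ae$ then follow by padding $\phi$ with dummy quantifiers. The directed variant is obtained as in Lemma~\ref{lemma:p-undir-ae}.
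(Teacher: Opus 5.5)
Your items 1 and 2 and the $ae$ part of item 3 are exactly the paper's arguments (Lemmas~\ref{lemma:undir-e*}, \ref{lemma:undir-e*a} and~\ref{lemma:undir-ae}). For $aa$, the addition case is also fine: map Vertex Cover to itself, add loops at the cover vertices, and note that added non-loop edges only create constraints. Deletion then follows via Observation~\ref{observation:del-add-same}, which is what the paper does.

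\textbf{The gap is the editing case of $aa$.} You correctly flag it as the remaining issue, but leave it as an unverified plan, and the gadgets you propose do not do what you need.
\begin{itemize}
\item Replacing an edge $uv$ by $k+1$ two-edge paths through new vertices $w_1,\dots,w_{k+1}$ forces every cheap solution to loop \emph{both} $u$ and $v$. This destroys the correspondence with vertex covers.
\item For the twin blow-up you give neither a budget nor an argument excluding mixed strategies, such as looping some twins of $u$ and $v$ and deleting the edges between the unlooped ones. ``Removing edges becomes too expensive'' is precisely the claim that needs a proof.
\end{itemize}

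\textbf{The missing idea is a short exchange argument, which makes any gadget unnecessary.} Consider an editing solution for $\forall x\forall y\,(x\adj y\to x\adj x\lor y\adj y)$ on a loop-free input.
\begin{itemize}
\item Every added non-loop edge only introduces a new constraint, so it can be dropped.
\item Every deleted edge $\{u,v\}$ can be replaced by adding a loop at $v$ (or simply dropped if $v$ is already looped) at no extra cost. The loop satisfies the constraint of $\{u,v\}$ and of every other edge at $v$.
\end{itemize}
Hence some optimal editing solution only adds loops, and its looped vertices form a vertex cover of size at most $k$. So the identity reduction from Vertex Cover already works for editing. This is exactly the paper's argument in Lemma~\ref{lemma:undir-aa}. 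It carries over to directed graphs, since a deleted arc and an added loop each cost one.
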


\subsubsection*{Upper Bounds.} We show
that the only two tractable fragments for unrestricted inputs can be solved even
by $\Class{AC}^0$-circuits or $\Class{TC}^0$-circuits, respectively.

\begin{lemma}\label{lemma:undir-e*}
  For each $\otimes\in \{\mathrm{del}, \mathrm{add}, \mathrm{edit}\}$,
  $\EM{arb}{\otimes}(e^*) \subseteq \Class{AC}^0$.
\end{lemma}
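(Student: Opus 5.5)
Fix $\phi = \exists x_1 \cdots \exists x_c\, \psi$ with $\psi$ quantifier-free over $\tau = \{R_1,\dots,R_r\}$, where $R_i$ has arity $a_i$. The point is that a purely existential formula only inspects the structure induced on the $c$ witnesses. So a modulator only needs to touch tuples over those $c$ elements, and at most $m \coloneq \sum_{i=1}^r c^{a_i}$ such tuples exist, a constant.

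\textbf{Key observation.} Suppose $\mathcal A \triangleright S \models \psi[v_1,\dots,v_c]$. Let $S'$ be the restriction of $S$ to tuples all of whose entries lie in $\{v_1,\dots,v_c\}$. Then $\mathcal A \triangleright S' \models \psi[v_1,\dots,v_c]$, because $\psi$ is quantifier-free and only reads atoms over the $v_j$. Moreover $\|S'\| \le \|S\|$ under either cost measure, and $S'$ is still a deletion, addition, or edit modulator if $S$ was. Hence $(\mathcal A,k)$ is a yes-instance if and only if some $(v_1,\dots,v_c) \in A^c$ and some modulator $S'$ supported on $\{v_1,\dots,v_c\}$ with $\|S'\|\le \min(k,m)$ satisfy $\mathcal A\triangleright S' \models \psi[\vec v]$.

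\textbf{Circuit.} The circuit takes an OR over all $|A|^c$ assignments $\vec v$, which is polynomially many. For each $\vec v$ it takes an OR over all $2^m$ ways to fix the truth values of the atoms $R_i(\vec w)$ with $\vec w$ drawn from the variable positions; this amounts to choosing the target induced structure on $\vec v$, and equalities among the $v_j$ are handled by the fixed assignment.

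For a fixed $\vec v$ and a fixed choice of target atoms, the circuit checks three things in constant depth:
\begin{enumerate}
\item $\psi$ evaluates to true under the chosen atoms, which is a constant-size check.
\item The choice is consistent with the operation: for deletion, chosen atoms may only turn true atoms false; for addition, only false atoms true; for editing, anything goes.
\item The number of atoms differing from $\mathcal A$ is at most $k$. Identical tuples arising from repeated $v_j$ are counted once, and for undirected and basic graphs the pairs $\{u,v\}$ are counted, with the chosen atoms required to respect the graph type.
\end{enumerate}
The difference count in the third check is at most the constant $m$. It is therefore a constant-size function of the constantly many input bits $R_i^{\mathcal A}(\vec v)$. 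Comparing it with $k$ only requires knowing whether $k \ge j$ for each $j\le m$, which is an AC$^0$ test on the binary encoding of $k$.

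\textbf{Main obstacle.} The only delicate step is the bookkeeping in the third check. Coinciding $v_j$ and the pair-based cost measure for undirected and basic graphs must not double-count. Symmetric edits must be enforced so that the modified structure remains a valid undirected or basic graph, which is needed for the formula conjuncts $\phi_{\text{undir}}$ and $\phi_{\text{basic}}$. Since all of this concerns constantly many atoms, it can be hard-wired as a finite table, so the whole circuit has constant depth and polynomial size.
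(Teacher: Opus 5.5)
Your proposal is correct and follows essentially the same approach as the paper. Both arguments guess the $c$ witnesses in parallel, enumerate the constantly many target structures induced on them, keep those satisfying $\psi$, and compare the constant-bounded modification distance against $k$ in $\Class{AC}^0$. Your version also makes two points explicit that the paper leaves implicit: the restriction argument showing that any solution can be cut down to tuples over the witnesses, and the bookkeeping for coinciding witnesses.
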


\begin{proof}
  Let $\tau = \{R_1^{a_1}, \dots, R_r^{a_r}\}$ and let $\phi$ be a $\tau$-formula
  of the form $\exists x_1 \cdots \exists x_c \psi$, where $\psi$ is
  quantifier-free. Given a structure $\mathcal{A} = (A, R_1^{\mathcal{A}},
  \dots, R_r^{\mathcal{A}})$, an $\Class{AC}^0$-circuit can consider all $|A|^c$
  assignments to the $x_i$ in parallel. For each assignment, we can go through
  the $2^{\sum_{i=1}^{r} c^{a_i}}$ possible interpretations of relations, and
  for each interpretation that satisfies the formula, determine the distance~$d$ to
  the structure induced by the $x_i$. Since $d$ is a constant, we
  can check with an $\Class{AC}^0$-circuit if $d\leq k$ holds for one of the
  modifications to one of the assignments, and, if this is the case, accept.
\end{proof}

\begin{lemma}\label{lemma:undir-e*a}
  For each $\otimes\in \{\mathrm{del}, \mathrm{add}, \mathrm{edit}\}$,
  $\EM{arb}{\otimes}(e^*a) \subseteq \Class{TC}^0$.
\end{lemma}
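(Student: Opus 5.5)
Fix $\phi = \exists x_1\cdots\exists x_c\forall y\,\psi$ with $\psi$ quantifier-free, and a $\tau$-structure $\mathcal A$ with budget~$k$. As in Lemma~\ref{lemma:undir-e*}, the $\Class{TC}^0$-circuit first considers all $|A|^c$ assignments $\bar v=(v_1,\dots,v_c)$ to the existential variables in parallel. For each $\bar v$ it also considers all $2^{\sum_i c^{a_i}}$ ways of fixing the relations on the substructure induced by $\{v_1,\dots,v_c\}$; each choice has a constant cost $d_0$ relative to~$\mathcal A$. The algorithm accepts if some $(\bar v,\text{choice})$ can be completed within budget $k-d_0$.

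The key observation is that, once $\bar v$ and the relations on it are fixed, the universal variable $y$ is single. Every constraint $\psi[\bar v,w]$ therefore depends only on tuples over $\{v_1,\dots,v_c,w\}$. We split the elements $w$ into two kinds.

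\begin{itemize}
  \item If $w\in\{v_1,\dots,v_c\}$, the truth of $\psi[\bar v,w]$ is already determined by the fixed choice, and the circuit just checks it.
  \item If $w\notin\bar v$, the relevant tuples are those containing $w$ and otherwise only elements of $\bar v$. These tuple sets are \emph{pairwise disjoint} for distinct $w$, and disjoint from the tuples over $\bar v$ alone.
\end{itemize}

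By this disjointness, the minimum total cost is $d_0+\sum_{w\notin\bar v}\mathrm{cost}(w)$. Here $\mathrm{cost}(w)$ is the minimum number of modifications among the tuples containing $w$ (over $\bar v\cup\{w\}$) that make $\psi[\bar v,w]$ true. This is $\infty$ if no such modification exists; for deletion and addition only removals or only insertions are allowed. Each of these local tuple sets has constant size, bounded by $\sum_i (c+1)^{a_i}$. Hence $\mathrm{cost}(w)$ is a constant in $\{0,\dots,C\}\cup\{\infty\}$, and an $\Class{AC}^0$ subcircuit computes it by enumerating all local interpretations. For undirected and basic graphs we use the $\|\cdot\|$ measure counting unordered pairs and restrict to symmetric (loopless) local interpretations. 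This again yields a per-$w$ constant, because each pair $\{v_j,w\}$ belongs to exactly one~$w$.

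The main step is to decide $\sum_w \mathrm{cost}(w)\le k-d_0$ and to reject if any cost is $\infty$. This is an iterated addition of $n$ numbers, each bounded by a constant, compared against $k$. We implement it with threshold gates. For each $j\in\{1,\dots,C\}$, feed the value $\mathrm{cost}(w)$ in unary, as $C$ bits $[\mathrm{cost}(w)\ge j]$, into one threshold gate. That gate tests whether the total number of ones is at most $k-d_0$, where $k$ is given in binary but is at most polynomial after capping at $Cn$. The threshold value is hardwired by taking a disjunction over all possible values of~$k$. Combining the answers with an $\Class{AC}^0$ disjunction over all $\bar v$ and choices gives a constant-depth, polynomial-size threshold circuit.

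The point I expect to need the most care is the independence claim: no tuple may be shared between two different $w$'s. This holds because any tuple involving two distinct non-$\bar v$ elements never appears in any $\psi[\bar v,w]$, so it is irrelevant and never modified. Correctness of the decomposition in both directions follows: an optimal modulator restricted to each local block is locally feasible, and the union of local optima is globally feasible.
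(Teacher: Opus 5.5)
Your proposal is correct and follows essentially the same route as the paper. Both fix the existential witnesses and the relations among them, use the disjointness of the tuple blocks belonging to distinct $w$ outside the witnesses to compute minimal per-element repair costs independently, and compare $d_0 + \sum_w \mathrm{cost}(w)$ against $k$ with threshold gates; you are slightly more careful than the paper, which leaves implicit both the check of $\psi$ when $y$ is interpreted by one of the witnesses and the case where some element admits no local repair.
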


\begin{proof}
  We prove the upper bound for the general case ${\otimes} =
  \mathrm{edit}$; more special cases follow by restricting 
  all editing operations to additions or deletions only and adapting the way we count modifications.
  
  Let $\phi$ be of the form $\exists x_1\exists x_2\cdots \exists x_c \forall y\
  (\psi)$, where $\psi$ is quantifier-free. We construct a $\Class{TC}^0$-circuit that on input $(\mathcal{A},
  k)$ correctly decides whether we can edit at most $k$ tuples
  such that for the resulting structure $\mathcal{A}^\dagger$, we have
  $\mathcal{A}^\dagger\models \phi$. For this, we consider all $|A|^c$ possible
  assignments to the~$x_i$ in parallel. We call the set $A'$ of elements in such an
  assignment a \emph{certificate} and denote the substructure induced by $A'$ as~$\mathcal{A}'$. For each certificate $A'$, we consider in parallel, all
  $2^{\sum_{i = 1}^{r} c^{a_i}}$ possible structures~$\mathcal{A}^* = (A',
  R_1^{\mathcal{A}^*}, \dots, R_r^{\mathcal{A}^*})$ defined on $A'$. Let
  $S$ denote the modulator that turns $\mathcal{A}'$ into $\mathcal{A}^*$. Next,
  once more in parallel, consider each element $v\in A\setminus A'$. There are up to $2^{\sum_{i=1}^{r}((c+1)^{a_i}) - c^{a_i}}$ possible ways of how~$v$
  interacts with the elements in~$A'$ and thus as many possible structures
  $\mathcal{A}_v = ((A' \times v)\cup (v\times A'), R_1^{\mathcal{A}_v}, \dots,
  R_r^{\mathcal{A}_v})$ with corresponding modulators $S_v$ that turn the substructure
  induced by $A'$ and $v$ into $\mathcal{A}_v$. Now, for
  some~$R_i^{\mathcal{A}_v}$, the formula~$\psi$ may hold when we interpret $y$
  by~$v$ and interpret the relations as $R_i^{\mathcal{A}^*}$ on~$A'$ and as
  $R_i^{\mathcal{A}_v}$ if they contain elements from $A'$ and~$v$. If there is such a
  structure~$\mathcal{A}_v$, pick one for which $\|S_v\|$ is minimal and note that, crucially,
  \emph{for a fixed certificate~$A'$, all~$S_v$ are disjoint and can be computed
  independently.} This implies that we can accept $(\mathcal{A},k)$ if and only if for some
  certificate $A'$, we have $\|S\| + \sum_{v\in A\setminus A'} \|S_v\|\leq k$.
  This computation can be performed by a $\Class{TC}^0$-circuit using majority
  gates.
\end{proof}

\subsubsection*{Lower Bounds.} 

We start by showing that the fragment considered in Lemma~\ref{lemma:undir-e*a} is not contained in $\Class{AC}^0$. We
will do so by reducing from the $\Lang{Majority}$ problem:
\begin{problem}
\problemtitle{$\Lang{Majority}$}
  \probleminput{A bitstring $s$.}
  \problemquestion{ Are at least half of the bits in $s$ set to $1$?} 
\end{problem}
The problem $\Lang{Majority}$ is known to not be contained in $\Class{AC}^0$~\cite[Chapter 3]{Vollmer99}.

\begin{lemma}\label{lemma:undir-a}
  For each $\otimes\in \{\mathrm{del}, \mathrm{add}, \mathrm{edit}\}$,
  $\EM{undir}{\otimes}(a) \not\subseteq \Class{AC}^0$.
\end{lemma}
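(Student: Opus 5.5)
We give an $\Class{AC}^0$-reduction from $\Lang{Majority}$ to $\EM{undir}{\mathrm{del}}(\phi)$ for a single fixed formula $\phi$ with pattern $a$. Since $\Lang{Majority}\notin\Class{AC}^0$ and $\Class{AC}^0$ is closed under $\Class{AC}^0$-reductions, this shows $\EM{undir}{\mathrm{del}}(a)\not\subseteq\Class{AC}^0$. We then adapt the argument to addition and editing.

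\emph{The formula and the reduction.} We take $\phi = \forall x\,(\neg x\adj x)$, which states that there are no self-loops; this is permitted on undirected graphs. Given a bitstring $s = s_1\cdots s_n$, the reduction outputs the undirected graph $G_s$ with vertex set $\{1,\dots,n\}$ and a self-loop at $i$ exactly when $s_i = 1$. It also outputs the budget $k = \lfloor n/2\rfloor$.

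\emph{Correctness for deletion.} Deleting edges to satisfy $\phi$ means removing every self-loop, so the minimum number of deletions is exactly the number of ones in $s$. Hence $(G_s,k)$ is a yes-instance precisely when $s$ has at most $\lfloor n/2\rfloor$ ones. That is the complement of a threshold question, so we must match it to $\Lang{Majority}$ exactly.

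\emph{Fixing the threshold.} The cleanest route is to complement the bits: put a loop at $i$ iff $s_i = 0$. Then the instance is a yes-instance iff $\#0(s)\le\lfloor n/2\rfloor$, i.e.\ iff $\#1(s)\ge\lceil n/2\rceil$, i.e.\ iff at least half of the bits are $1$. This is exactly $\Lang{Majority}$. Computing the loops and writing $k$ in binary are both $\Class{AC}^0$ operations.

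\emph{Addition and editing.} For editing, adding edges never helps remove a loop, so the same reduction works unchanged. For addition, we use Observation~\ref{observation:del-add-same}: the formula $\overline{\phi} = \forall x\,(x\adj x)$ with the complemented graph gives the same count. Directly, we put a loop at $i$ iff $s_i = 1$; then $\#0(s)$ loops must be added, and we keep $k = \lfloor n/2\rfloor$.

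The only real care needed is the threshold and rounding bookkeeping in the step that fixes the threshold. One should also confirm that the budget counts each self-loop $\{u,u\}$ as one modification under the measure $\|\cdot\|$ for undirected graphs; it does, because $\{u,u\}$ is a single set.
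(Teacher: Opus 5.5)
Your proof is correct and follows the paper's argument: the same formula $\forall x\,(\neg x\adj x)$, one self-loop per $0$-bit, and a budget of about $|s|/2$. Your $\lfloor n/2\rfloor$ handles odd lengths, which the paper's $k=|s|/2$ glosses over, and you make explicit the addition case (via $\forall x\,(x\adj x)$) and the editing case (additions never help), which the paper leaves implicit; in the write-up, drop the initial loop-at-$1$-bits construction, since you correct it anyway.
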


\begin{proof}
  Consider the formula $\phi = \forall x (\neg x\adj x)$. We reduce from the problem
  $\Lang{Majority}$ to $\EM{undir}{\mathrm{del}}(\phi)$ as follows: We create a
  graph $G = (V, E)$ from the given bitstring $s$ by adding a vertex
  $v$ with $(v, v) \in E$ for each bit that is set to $0$ in $s$ and set $k = |s|/2$. Clearly, we can make the graph
  satisfy the formula if and only if at least half of the bits in $s$ were set
  to~$1$.
\end{proof}

Since we have shown in Lemma~\ref{lemma:undir-e*a} that all problems expressible
with the quantifier pattern $e^*a$ are tractable, natural candidates to study next are the patterns $aa$ and $ae$.
We will show that both suffice to express
$\Class{NP}$-hard problems.

\begin{lemma}\label{lemma:undir-aa}
  For each modification operation $\otimes\in \{\mathrm{del}, \mathrm{add}, \mathrm{edit}\}$,
  $\EM{undir}{\otimes}(aa)$ and $\EM{dir}{\otimes}(aa)$ each contain an
  $\Class{NP}$-hard problem.
\end{lemma}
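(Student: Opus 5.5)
We reduce from $\Lang{Vertex-Cover}$, which is $\Class{NP}$-hard. The idea is to exploit self-loops as vertex markers, exactly as in Lemma~\ref{lemma:p-undir-ae}, but now with a purely universal formula. Consider
\begin{align*}
  \phi_{aa} = \forall x \forall y \bigl( (x \adj y \land x \neq y) \to (\neg x\adj x \lor \neg y \adj y)\bigr),
\end{align*}
which has pattern $aa$. It states that for every edge between distinct vertices, at least one endpoint has no self-loop. Given a graph $H=(V,E)$ and a number~$k$, the reduction outputs $G'=(V,E\cup\{(v,v)\mid v\in V\})$ and $k'=k$. Since the problem is defined via deletion, we use $\EM{undir}{\mathrm{del}}(\phi_{aa})$, and Observation~\ref{observation:del-add-same} transfers hardness to the addition variant. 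The reduction is clearly computable in $\Class{AC}^0$.

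\emph{Forward direction.} If $C$ is a vertex cover with $|C|\le k$, deleting the self-loops of all $v\in C$ makes every edge have an endpoint without a loop. This uses $|C|\le k$ modifications.

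\emph{Backward direction.} Take a deletion set $D$ with $\|D\|\le k$. Deleting a non-loop edge $\{u,v\}$ removes one violation, but deleting the loop of $u$ instead does at least as well. So we may replace each deleted non-loop edge by the loop of one of its endpoints without increasing the cost. The vertices whose loops are deleted then form a vertex cover of size at most~$k$.

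\emph{Editing.} Editing additionally permits adding edges and loops. Adding loops or adding edges between distinct vertices can only create violations, never remove them. Any editing solution therefore restricts to a deletion solution that is no larger, and the same argument applies.

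\emph{Directed version.} Use the same formula, reading $x\adj y$ as directed. Direct each edge of $H$ arbitrarily, or include both directions; with both directions, cost is counted per tuple. The simplest fix is to orient each edge once, so that every edge of $H$ is a single tuple. Then deleting an edge tuple costs $1$, and the exchange argument goes through verbatim.

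The main obstacle is the exchange argument in the backward direction: a deleted ordinary edge has to be traded for a loop deletion of equal cost. In the undirected cost measure both are a single set $\{u,v\}$ or $\{u\}$, so the trade is cost-neutral. In the directed setting the same holds once each edge is represented by a single tuple.
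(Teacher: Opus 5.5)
Your proof is correct and is essentially the paper's argument with the roles of present and absent self-loops swapped. The paper reduces $\Lang{Vertex-Cover}$ by the identity map to $\EM{undir}{\mathrm{add}}$ for $\forall x_1\forall x_2\,(x_1\adj x_2 \to (x_1\adj x_1\lor x_2\adj x_2))$ and adds loops at the cover vertices, whereas you start with all loops present and delete them at the cover vertices; the editing case uses the same exchange of an edge modification for a loop modification, and your explicit directed variant (orienting each edge once) fills in a step the paper leaves implicit.
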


\begin{proof}
  We can reduce $\Lang{Vertex-Cover}$ to 
  $\EM{undir}{\mathrm{add}}(\phi_{\ref{lemma:undir-aa}})$. Here,
  $\Lang{Vertex-Cover}$ is the $\Class{NP}$-complete problem that given a basic
  graph
  $(V,E,k)$ asks whether there is a set $C \subseteq V$ with $|C| \le k$
  such that for all $(u,v) \in E$ we have $\{u,v\} \cap C \neq
  \emptyset$; and we define
  $
    \phi_{\ref{lemma:undir-aa}} = \forall x_1 \forall x_2 \penalty-100
    (x_1 \adj x_2 \to (x_1 \adj x_1
    \lor x_2 \adj x_2))
  $,
  which expresses that every edge is incident to at least one vertex with a
  self-loop. To see that this reduction also works for edge  editing, note that
  we can turn any solution that deletes some edge $(u,v)$ into a
  non-larger solution that adds a self-loop at~$v$ instead. 
%
%%   \emph{The reduction.} We reduce from the $\Class{NP}$-complete problem
%%   $\Lang{vertex-cover}$ \emph{for basic graphs} to the edge completion problem. Let $(V, E, k)$ be
%%   given as
%%   input. The reduction is simply the identity function.
%
%%   \emph{Forward direction.}
%%   Let  $(V, E, k) \in \Lang{vertex-cover}$ be given. We need
%%   to show that $(G', k') \in \EM{undir}{\mathrm{add}}(\phi_{\ref{lemma:undir-aa}})$
%%   holds. Let $C \subseteq V$ with $|C| \le k$ be a vertex cover. Then add the
%%   edges $E' = \{(v, v)\mid v\in C\}$. Since $C$ is a vertex cover of size at most
%%   $k$, we have $|E'|\leq k$ and every edge has at least one vertex with a self-loop.
%
%%   \emph{Backward direction.}
%%   Conversely, suppose that for $G' = (V', E')$ we are given a set $A \subseteq
%%   V'\times V'$ with $\|A\| \le k$ such that $(V', E'\cup A) \models
%%   \phi_{\ref{lemma:undir-aa}}$. Take $C = \{v \mid (v, v) \in A\}$. Then, $C$ is a vertex cover of size at
%%   most $k$ of $G$, since by the formula, we have to add a loop to at least one vertex of every
%%   edge in the graph.
% 
%%   \emph{Edge editing.}
%%   To see that the reduction also works for the edge editing problem, notice that
%%   we cannot add self-loops by deleting edges. Deleting an edge does not yield a
%%   more cost-effective strategy, since instead of deleting an edge $(v, u) \in
%%   G'$, which solves only one violation we can always add the self-loop $(v, v)$,
%%   fixing $\operatorname{deg}(v) \geq 1$ violations, where
%%   $\operatorname{deg}(v)$ is the degree of $v$.
\end{proof}

\begin{lemma}\label{lemma:undir-ae}
  For each modification operation $\otimes\in \{\mathrm{del}, \mathrm{add}, \mathrm{edit}\}$,
  $\EM{undir}{\otimes}(ae)$ contains an $\Class{NP}$-hard problem. 
\end{lemma}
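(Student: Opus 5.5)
The plan is to reuse the formula and reduction from Lemma~\ref{lemma:p-undir-ae}. The key observation is that the reduction given there from $\PLang{Set-Cover}$ is computable in $\Class{AC}^0$, and it is also a classical many-one reduction: the budget $k' = k$ is simply part of the input. The unparameterized $\Lang{Set-Cover}$ (equivalently, $\Lang{Red-Blue Dominating Set}$) is $\Class{NP}$-complete. So I would take
\[
  \phi_{\ref{lemma:undir-ae}} = \forall x \exists y \bigl(\neg x\adj x \to (x\adj y \land \neg y\adj y)\bigr),
\]
which has pattern $ae$, and map an instance $(S \mathbin{\dot\cup} U, E, k)$ to the same graph $G'$ as before. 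Every $s\in S$ receives a self-loop, each $u\in U$ is replaced by $2k+1$ copies adjacent to the sets containing $u$, and the budget stays $k$.

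For deletion, the forward and backward directions carry over verbatim from the proof of Lemma~\ref{lemma:p-undir-ae}. Those arguments never used that $k$ is a parameter. Deleting the self-loops of a cover makes every copy of every $u$ happy. Conversely, each copy $u_i$ can only become happy through a neighbor $s$ whose self-loop was deleted, and these neighbors form a cover of size at most $k$.

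Editing needs the most care, and again I would reuse the earlier counting argument. An added self-loop or an added edge inside $U'$ makes at most two copies happy. Hence $k$ edits cannot make all $2k+1$ copies of a single $u$ happy unless some neighbor $s$ of $u$ loses its self-loop. Deleting edges $u_i \undiradj s$ never helps. The set of $s$ whose loops were deleted therefore still covers $U$, with size at most $k$.

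The one subtle point is the size of the blow-up. With $2k+1$ copies per element the output has size polynomial in the input only because $k \le |S|$; we may assume this without loss of generality, since otherwise we can cap $k$ at $|S|$. This keeps the reduction polynomial and even computable in $\Class{AC}^0$, as each copy is addressed by a pair (element, index) with index at most $2|S|+1$.

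The addition version follows by Observation~\ref{observation:del-add-same}, applying the reduction with $\overline{\phi}$ and the complement graph. The directed case follows exactly as in Lemma~\ref{lemma:p-undir-ae}, orienting each edge from $u_i$ to $s$.
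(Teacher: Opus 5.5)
Your proposal is correct and matches the paper, whose proof just says the lemma follows directly from the proof of Lemma~\ref{lemma:p-undir-ae}. Your extra remark is a detail the paper leaves implicit: $k$ can be capped (say at the number of input vertices) so that the $(2k+1)$-fold blow-up stays polynomial. One small tightening for the editing case: your list of added self-loops and edges inside $U'$ misses an added edge from $u_i$ to a non-neighbor $s$ whose loop was deleted, so argue instead that every copy $u_i$ made happy while the loops of all original neighbors of $u$ stay intact needs an edit incident to $u_i$, and each edit is incident to at most two copies.
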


\begin{proof}
  This follows directly from the proof of Lemma~\ref{lemma:p-undir-ae}.
\end{proof}

\subsection{Basic Graphs}
% {Pattern-Based Complexity Dichotomy for $\EM{basic}{\otimes}(p)$}
Finally, we consider the restriction of input and target structures to basic graphs and prove the following tractability frontier.
\begin{theorem}\label{theorem:basic}
  Let $p \in \{a,e\}^*$ be a pattern. For each modification operation $\otimes\in \{\mathrm{del},
  \mathrm{add}, \mathrm{edit}\}$, we have
  \begin{enumerate}
  \item $\EM{basic}{\otimes}(p) \subseteq \Class{AC}^0$, if $p \preceq e^*$ or $p
    \preceq a$.
  \item $\EM{basic}{\otimes}(p) \not\subseteq \Class{AC}^0$, but
  $\EM{basic}{\otimes}(p) \subseteq \Class{TC}^0$, if $aa \preceq p$ or $ae
  \preceq p$ or $ea \preceq p$ and $p \preceq e^*a$ or $p
    \preceq aa$ or $p \preceq ae$.
  \item $\EM{basic}{\otimes}(p)$ and $\EM{dir}{\otimes, }(p)$ each contain an $\Class{NP}$-hard
    problem, if $aaa\preceq p$ or $aea\preceq p$ or $aee\preceq p$ or $aae\preceq p$ or $eae\preceq p$ or $eaa \preceq p$.
  \end{enumerate}
\end{theorem}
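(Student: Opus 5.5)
Part~1 is almost immediate. For $p \preceq e^*$ we invoke Lemma~\ref{lemma:undir-e*}. The enumeration there carries over to basic graphs if we only range over symmetric, loop-free interpretations on the certificate. This is a local restriction, so we do not need to conjoin $\phi_{\text{basic}}$, which would spoil the pattern. For $p \preceq a$, the matrix of $\forall x\,\psi(x)$ only contains atoms $x=x$ and $x\adj x$. On basic graphs these are constant, so the problem is trivial (accept everything or nothing) and lies in $\Class{AC}^0$.

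For the upper bound in Part~2 we distinguish three cases.
\begin{enumerate}
\item For $p\preceq e^*a$ we use Lemma~\ref{lemma:undir-e*a}, again restricting the local choices $\mathcal A^*$ and $\mathcal A_v$ to symmetric loop-free ones. The minimal local cost $\|S_v\|$ is then counted per unordered pair.
\item For $p\preceq ae$ we reuse the case analysis of Lemma~\ref{lemma:p-basic-ae}. Every such problem is trivial, or asks whether there are at most $2k$ universal vertices, or (for editing or addition) at most $2k$ isolated vertices. Since $k$ is now part of the input, these tests become threshold comparisons and lie in $\Class{TC}^0$.
\item For $p\preceq aa$ we normalise $\psi(x,y)$ analogously. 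The case $x=y$ contributes only a constant, and for $x\neq y$ the matrix is a Boolean function of the single atom $x\adj y$ (symmetric on basic graphs). Hence the target is trivial, the complete graph, or the edgeless graph. The question becomes whether the number of missing (respectively present) edges is at most $k$, which is a threshold gate.
\end{enumerate}

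For the lower bounds in Part~2 we reduce $\Lang{Majority}$ in $\Class{AC}^0$, as in Lemma~\ref{lemma:undir-a}.
\begin{enumerate}
\item For $aa$ we take $\forall x\forall y\,(x\nadj y)$. Each $0$-bit becomes a disjoint edge, and we set $k=|s|/2$.
\item For $ae$ we take $\forall x\exists y\,(x\adj y)$ under addition. Each $0$-bit becomes an isolated vertex and each $1$-bit an edge; $k$ is chosen so that about $|s|/4$ added edges cover the isolated vertices.
\item For $ea$ we take a ``star centre'' formula $\exists z\forall y\,(y=z\lor y\adj z)$. We attach a gadget that forces the centre uniquely, in the spirit of the pendant paths of Lemma~\ref{lemma:p-basic-eae}, and let $1$-bits be pre-existing spokes.
\end{enumerate}
Padding adjusts parity and the threshold in each case, and Observation~\ref{observation:del-add-same} transfers between deletion and addition.

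For Part~3, the reductions of Lemmas~\ref{lemma:p-basic-aea}, \ref{lemma:p-basic-aee}, \ref{lemma:p-basic-aae} and~\ref{lemma:p-basic-eae} are polynomial-time with $k'=k$. Since $\Lang{Set-Cover}$ is $\Class{NP}$-hard, they already give $\Class{NP}$-hardness for $aea$, $aee$, $aae$ and $eae$, including the directed adaptations. For $aaa$ we use triangle deletion, $\forall x\forall y\forall z\,\neg(x\adj y\land y\adj z\land x\adj z)$, which is $\Class{NP}$-hard by Yannakakis. Editing gives nothing more, because adding edges never destroys triangles.

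The remaining case, $eaa$, is the main obstacle. I would use $\exists c\forall x\forall y\,(x=c\lor y=c\lor x\adj c\lor y\adj c\lor x\nadj y)$, which says that the neighbourhood of some vertex $c$ is a vertex cover. We reduce $\Lang{Vertex-Cover}$ under addition as follows. Add a new vertex $c$, and attach many pendant edges ($2k+1$ disjoint edges, each joined to $c$ via a path) so that any other choice of centre leaves more than $k$ uncovered edges. The added edges $c\undiradj v$ then correspond exactly to a vertex cover. For editing, deleting an edge $uv$ can be replaced by adding $c\undiradj u$ at no greater cost, and we must verify that deletions cannot make another centre viable.
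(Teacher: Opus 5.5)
Your proposal is correct. Parts~1 and~2 and the Set-Cover-based hardness for $aea$, $aee$, $aae$, $eae$ use the same ingredients as the paper. Your observation that for $x\neq y$ the matrix is a Boolean function of the single atom $x\adj y$ is a tidier form of the DNF case split in Lemma~\ref{lemma:basic-aa}. Your remark that one should restrict the local choices rather than conjoin $\phi_{\text{basic}}$ addresses a point the paper glosses over. The routes differ for $aaa$, $eaa$ and the $ea$ lower bound.

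\emph{The case $aaa$.} You use triangle deletion instead of cluster deletion/editing. Both work. Yours makes editing immediate, since triangle-freeness is closed under edge deletion; addition goes through Observation~\ref{observation:del-add-same} in either version.

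\emph{The case $eaa$.} The paper takes the formula ``$N(c)$ is independent'', $\exists c\forall x\forall y((x\adj y\land x\neq c\land y\neq c)\to(x\nadj c\lor y\nadj c))$. It reduces $\Lang{Vertex-Cover}$ to deletion, joining $c$ and a $(k+2)$-clique to all of $V$. Because $\adj$ occurs only negatively, additions can simply be dropped, so editing comes for free. Your ``$N(c)$ is a vertex cover'' formula has mixed polarity, so you pay with the check you left open, but it goes through:
\begin{itemize}
\item With centre $c$: since $N(c)\cap V=\emptyset$ initially, every edge of $G$ is either deleted or gets an endpoint joined to $c$. Taking one endpoint per such edit yields a cover of size at most $k$.
\item With any other centre $c^*$: at least $2k$ gadget edges avoid $c^*$ and have no endpoint adjacent to it. Each needs a private edit (delete it, or join an endpoint to $c^*$), so more than $k$ edits are needed when $k\geq 1$.
\end{itemize}
Deletion then follows via Observation~\ref{observation:del-add-same}.

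Three details to pin down:
\begin{itemize}
\item The $eaa$ gadget must be $c\undiradj a_i\undiradj b_i$, with $c$ adjacent to an endpoint of each pendant edge. A longer connecting path leaves $a_ib_i$ uncovered under centre $c$ and breaks the forward direction.
\item For $ea$, your formula $\exists z\forall y\,(y=z\lor y\adj z)$ is the right one on loop-free graphs; the paper's $\exists x\forall y\,(x\adj y)$ in Lemma~\ref{lemma:basic-ae-aa-ea} is unsatisfiable there. No centre-forcing gadget is needed, since any other centre already requires about $|s|-1>|s|/2$ additions. Copying the length-two pendant paths of Lemma~\ref{lemma:p-basic-eae} literally would make $c$ itself fail.
\item The directed cases of $aaa$ and $eaa$ follow from Lemma~\ref{lemma:undir-aa}, since $aa$ is a subsequence of both.
\end{itemize}
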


\subsubsection*{Upper Bounds.} As in the parameterized setting, some fragments
become trivial for basic graphs. The main contributor to this fact is that some
terms, for example $x\adj x$, can now be trivially evaluated to false and the problems
defined by the remaining formulas are easy to solve. In the following, we analyze the tractable fragments individually.

\begin{lemma}\label{lemma:basic-a}
  For each $\otimes\in \{\mathrm{del}, \mathrm{add}, \mathrm{edit}\}$,
  $\EM{basic}{\otimes}(e^*) \subseteq \Class{AC}^0$ and $\EM{basic}{\otimes}(a)
  \subseteq \Class{AC}^0$.
\end{lemma}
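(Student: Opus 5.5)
The first claim, $\EM{basic}{\otimes}(e^*) \subseteq \Class{AC}^0$, needs no new argument. A basic-graph instance is a special arbitrary structure, and the basic restriction only adds the conjunct $\phi_{\text{basic}}$ to the target formula. The proof of Lemma~\ref{lemma:undir-e*} goes through verbatim if we restrict the enumerated interpretations on the $c$ chosen vertices to symmetric, loop-free ones. Edges outside the chosen tuple never need to be touched, since the input is already basic and the formula only inspects the chosen vertices. We measure distance by unordered pairs.

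For the pattern $a$, I would write $\phi = \forall x\, \psi(x)$ with $\psi$ quantifier-free in the single variable $x$. On basic graphs every atom is $x = x$ or $x \adj x$, and these evaluate to true and false respectively. Hence $\psi$ is equivalent to a constant $b \in \{\text{true}, \text{false}\}$ that can be computed from $\phi$ alone. The target property is then either all basic graphs or none; more precisely, the formula is vacuously true on the empty universe, but graphs are nonempty by definition.

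If $b$ is true, every instance with $k \ge 0$ is a yes-instance, since the input basic graph already satisfies $\phi \land \phi_{\text{basic}}$ with the empty modulator. If $b$ is false, every instance is a no-instance, because modifications never change the universe and no basic graph with a nonempty vertex set satisfies $\phi$. In both cases the circuit family additionally checks that the input is a basic graph, as the definition intersects with $\Lang{models}(\phi_{\text{basic}})$. That check is a first-order property, $\phi_{\text{basic}}$, and therefore lies in $\Class{AC}^0$. So the circuit outputs either this check or constant rejection.

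I do not expect a real obstacle. The only points needing care are the bookkeeping of the basic-graph restriction, namely that the symmetric counting measure and the requirement that the target be basic are respected, and the observation that reflexive atoms are constant on loop-free graphs. This holds uniformly for del, add, and edit.
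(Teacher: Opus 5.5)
Your proposal is correct and follows the paper's own argument. The $e^*$ case is inherited from Lemma~\ref{lemma:undir-e*}, and for the pattern $a$ the only atoms $x = x$ and $x \adj x$ are constant on loop-free graphs, so the target property is either all basic graphs or none; your extra bookkeeping (restricting the enumerated interpretations to symmetric loop-free ones, since $\phi \land \phi_{\text{basic}}$ is no longer an $e^*$-formula, and checking that the input is basic) just makes explicit what the paper's two-line proof leaves implicit.
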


\begin{proof}
  The first statement follows from Lemma~\ref{lemma:undir-e*}. For the second
  statement, since we do not allow self-loops in basic graphs, all formulas that
  only use one universal quantifier are either trivially true
  or false.
\end{proof}

\begin{lemma}\label{lemma:basic-ae}
  For each $\otimes\in \{\mathrm{del}, \mathrm{add}, \mathrm{edit}\}$, $\EM{basic}{\otimes}(ae) \subseteq \Class{TC}^0$.
\end{lemma}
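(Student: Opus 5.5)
We reuse the case analysis from the proof of Lemma~\ref{lemma:p-basic-ae}, which is purely syntactic and does not depend on the budget being a parameter. Let $\phi = \forall x \exists y\,\psi$ with $\psi$ in disjunctive normal form. On basic graphs, every disjunct reduces to one of $x = y$, $x \neq y$, $x \nadj y$, $x \adj y$, or $x\neq y \land x \nadj y$, or to true or false. Hence, on graphs with at least two vertices, $\phi$ is equivalent to one of four things: a trivially true or false formula, $\forall x\exists y(x\adj y)$ (``no isolated vertices''), or $\forall x \exists y(x\neq y \land x\nadj y)$ (``no universal vertices''). Every case is fixed by the formula and can be hardwired into the circuit family. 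Graphs with at most one vertex can be handled by a constant-size lookup. The trivially true and false cases are clearly in $\Class{AC}^0$, so it remains to give $\Class{TC}^0$ procedures for the two nontrivial properties under all three operations.

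\emph{No universal vertices.} Deletion is impossible only if $n \le 1$. Otherwise we use a counting argument: every deleted edge makes at most two universal vertices non-universal, and if $m$ vertices are universal, pairing them up (plus one extra edge to an arbitrary vertex if $m$ is odd) achieves this with $\lceil m/2\rceil$ deletions. So the instance is a yes-instance iff $\lceil m/2 \rceil \le k$. Here $m$ is computed with threshold gates (counting, for each vertex, whether its degree equals $n-1$, then summing), and the comparison with $k$ is in $\Class{TC}^0$. Adding edges never destroys universality, so the addition problem is accepted iff there are no universal vertices, which is in $\Class{AC}^0$. For editing, additions are useless, so the answer coincides with deletion.

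\emph{No isolated vertices.} By Observation~\ref{observation:del-add-same}, this is dual to the case above. Deletion cannot remove isolation, so we only check that there is no isolated vertex. For addition, with $m$ isolated vertices, the optimum is $\lceil m/2\rceil$ when $n \ge 2$: pair them up, and attach a leftover vertex to any other vertex. We check $\lceil m/2\rceil \le k$ via threshold gates. For editing, deletions never help, so the answer equals that of addition.

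The main obstacle is justifying the exact optimum $\lceil m/2 \rceil$ in the borderline configurations, such as an odd $m$ with $m = n$. For example, if all vertices are isolated and $n$ is odd, the leftover vertex must attach to an already-paired vertex, which still costs one extra edge and preserves the bound. Once these configurations are checked, the statement follows because counting up to $n$ and comparing binary numbers with $k$ are $\Class{TC}^0$ operations.
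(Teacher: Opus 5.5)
Your proposal is correct and takes essentially the same route as the paper. Both reuse the syntactic case analysis from Lemma~\ref{lemma:p-basic-ae}, which leaves only trivial cases plus the two counting properties ``no isolated vertices'' and ``no universal vertices'', and both perform the resulting count-and-compare step with threshold gates in $\mathrm{TC}^0$. You add detail the paper's brief sketch omits: the exact $\lceil m/2\rceil$ optimum, the monotonicity argument that editing collapses to deletion or addition, and the $n \le 1$ case.
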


\begin{proof}
  We can reuse the argument from the proof of Lemma~\ref{lemma:p-basic-ae}, where we showed that the parameterized versions of the fragment lie in
  $\Para\Class{AC}^0$. The key observation was that for
  most formulas, deciding the edge editing problem is trivial; and for the non-trivial cases we only needed the power of $\Para\Class{AC}^0$ to
  count the number of universal and\,/\,or isolated vertices. In the non-parameterized setting, this counting can be done
  using $\Class{TC}^0$-circuits.
\end{proof}

\begin{lemma}\label{lemma:basic-aa}
  For each $\otimes\in \{\mathrm{del}, \mathrm{add}, \mathrm{edit}\}$, $\EM{basic}{\otimes}(aa) \subseteq \Class{TC}^0$.
\end{lemma}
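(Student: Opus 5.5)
Our plan mirrors the proof of Lemma~\ref{lemma:p-basic-ae}: simplify the quantifier-free part of an arbitrary $aa$-formula on basic graphs until only a handful of nontrivial cases remain, and then show that each remaining case amounts to a counting question that a $\Class{TC}^0$-circuit can decide.

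Let $\phi = \forall x \forall y\, \psi$ with $\psi$ quantifier-free. As in Lemma~\ref{lemma:p-basic-ae}, we first eliminate all atoms of the form $x \adj x$, $y \adj y$, $x = x$ and $y = y$, since on basic graphs their truth values are fixed. We then use symmetry to identify $y \adj x$ with $x \adj y$ and $y = x$ with $x = y$. After this step, $\psi$ is a Boolean combination of just two atoms, $x = y$ and $x \adj y$.

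Because $x = y$ forces $x \nadj y$, there are only three realisable ``pair types'': equal, adjacent distinct, and non-adjacent distinct. Therefore $\psi$ is equivalent, on basic graphs, to the statement that the pair type of $(x,y)$ lies in some fixed set $T$ of these three types. The formula $\forall x\forall y\,\psi$ then states that every pair realised in the graph has a type in $T$. We assume the graph has at least two vertices; smaller graphs are handled by a constant-size check.

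We split into cases according to $T$.
\begin{itemize}
  \item If ``equal'' $\notin T$, the formula is false on every nonempty graph, so the problem is trivial.
  \item Otherwise, if both distinct types lie in $T$, the formula is always true.
  \item If $T = \{\text{equal}, \text{adjacent}\}$, the formula says the graph is complete.
  \item If $T = \{\text{equal}, \text{non-adjacent}\}$, the formula says the graph is edgeless.
  \item If $T = \{\text{equal}\}$, no graph with at least two vertices satisfies it.
\end{itemize}
Note that for a fixed $T$ the target property does not depend on the modification operation; only the admissible modulators do.

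The two nontrivial cases reduce to counting. For deletion towards edgelessness, we need $|E| \le k$. For addition towards completeness, we need $\binom{n}{2} - |E| \le k$. For editing, the optimal modulator is exactly the set of violating pairs, so the same two inequalities apply. Addition towards edgelessness is possible iff the graph is already edgeless, and deletion towards completeness is possible iff it is already complete; both are $\Class{AC}^0$ checks.

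Comparing the number of edges, or of non-edges, with $k$ is an iterated-addition and comparison task, which lies in $\Class{TC}^0$. We expect no real obstacle here. The only point that requires care is the first step: we must argue that no literal involving only one variable survives on basic graphs, so that the classification into pair types is exhaustive. Observation~\ref{observation:del-add-same} could also be used to halve the case analysis.
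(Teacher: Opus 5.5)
Your proof is correct and is essentially the paper's argument. Both reduce $\psi$ on basic graphs to a Boolean combination of $x=y$ and $x \adj y$, find that the only nontrivial targets are complete graphs ($x=y \lor x\adj y$) and edgeless graphs ($x=y\lor x\nadj y$, or $x\nadj y$), and decide these by comparing the number of edges or non-edges with $k$ using threshold gates; your classification by the set $T$ of admissible pair types is a cleaner, manifestly exhaustive version of the paper's case split on the first DNF disjunct.
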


\begin{proof}
  Let $\phi$ be of the form $\forall x\forall y (\psi)$, where $\psi$ is
  quantifier-free. Below, we prove the statement for edge editing -- our arguments then transfer to the
  deletion and addition versions by allowing only the
  respective operations (this will be made more specific later). Again, we may assume that the input graph has at
  least two vertices and that $\psi$ is in disjunctive normal form and may contain the following disjuncts, without loss of generality in the order that we list them below:
  \vspace{-0.2cm}
  \begin{multicols}{2}
  \begin{enumerate}
  \item $x \neq  y$,
  \item $x = y$,
  \item $x \adj y$,
  \item $x \nadj y$,
  \item $x \neq y \land x\adj y$, and
  \item $x \neq y \land x\nadj y$.
  \end{enumerate}
  \end{multicols}
    \vspace{-0.2cm}
  In each of the following paragraphs, we argue that it holds that
  $\EM{basic}{\otimes}(\forall x\forall y (\psi)) \in \Class{TC}^0$ when the
  first disjunct in~$\psi$ is a particular one of the above.
  
  Suppose $x \neq y$ is the first disjunct of~$\psi$. Then $\forall
  x \forall y (\psi)$ is automatically true for all choices of
  distinct vertices that get assigned to~$x$ and~$y$. When the same vertex is assigned to $x$ and $y$,
  the formula will evaluate to true if the disjunct $x \nadj y$ or
  $x=y$ is present; otherwise it will evaluate to false and no modification
  of a basic graph can change this fact.

  Suppose $x = y$ is the first disjunct of~$\psi$. Observe that $x =
  y \lor (x \neq y \land x\adj y) \equiv\penalty-100 x =y \lor x \adj y$ and also
  $x = y \lor (x \neq y \land x\nadj y) \equiv x =y \lor x \nadj y$,
  so $\psi$ is equivalent to a disjunction of $x=y$ and a subset of the
  two literals $x\adj y$ and $x \nadj y$. Clearly, if $\psi$ is just
  $x=y$, the formula $\phi$ is trivially false (on graphs with at least two
  vertices); and if $\psi$ is $x =y \lor x\adj y \lor x \nadj y$, it
  is trivially true. The remaining cases are $x=y \lor x \adj y$ and
  $x=y \lor x\nadj y$. The first formula is satisfied if the input graph is a clique
  (without self-loops) or can be turned into a clique by adding $k$ edges (in the deletion case, this operation is of course not allowed). Clearly, counting the number of missing edges can be done in $\Class{TC}^0$. The second formula is satisfied if the input graph is edge-less or can be made edge-less by deleting up to $k$ edges (in the addition case, this operation is of course not allowed). Clearly, checking whether there are at most $k$ edges in the input graph can also be done in $\Class{TC}^0$.

  Suppose $x \adj y$ is the first disjunct of~$\psi$. If it is the
  only disjunct, then $\phi$ is false (no self-loops are allowed) and this
  cannot be fixed by modifying the graph. If $x\nadj y$ is also
  present, $\phi$ is trivially true. The disjunct $x\neq y \land x
  \adj y$ logically implies $x\adj y$ and can thus be removed. The
  remaining case is $\psi = x\adj y \lor (x \neq y \land x\nadj y)\equiv x\adj y
  \lor x\neq y$
  and this is always false in basic graphs that contain at least two vertices.

  Suppose $x \nadj y$ is the first disjunct of~$\psi$. We can then
  remove $x \neq y \land x\nadj y$, if present, as it logically
  implies $x \nadj y$. This leaves $\psi = x \nadj y$, which is once
  more the case that the target graph must be edge-less, meaning that at
  most $k$ edges may be present in the input graph, and $\psi = x
  \nadj y \lor (x \neq y \land x\adj y)$, which is always true in
  basic graphs.

  The remaining cases are that $\psi$ is some combination of
  disjunctions of $x \neq y \land x\adj y$ and $x \neq y \land x\nadj
  y$, but neither can be made true when $x$ and $y$ are interpreted
  identically. Thus, for such combinations, $\phi$ always evaluates to false.
\end{proof}

\subsubsection*{Lower Bounds.}

We again start by showing the circuit lower bounds.

\begin{lemma}[$\star$]\label{lemma:basic-ae-aa-ea}
  For each $\otimes\in \{\mathrm{del}, \mathrm{add}, \mathrm{edit}\}$, we have that
  $\EM{undir}{\otimes}(ae),$ $\EM{undir}{\otimes}(ea), \EM{undir}{\otimes}(aa) \not\subseteq \Class{AC}^0$.
\end{lemma}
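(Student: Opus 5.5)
The plan is to reduce $\Lang{Majority}$ to one concrete problem in each of the three classes, using reductions that are computable in $\Class{AC}^0$. Since $\Class{AC}^0$ is closed under $\Class{AC}^0$-many-one reductions and $\Lang{Majority}\notin\Class{AC}^0$~\cite{Vollmer99}, each class then contains a problem outside $\Class{AC}^0$. Note that the $\Class{NP}$-hardness of Lemma~\ref{lemma:undir-ae} does not suffice here, since it gives no unconditional circuit lower bound.

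I choose the formulas so that they already make sense on basic graphs and all constructed inputs are basic (loop-free) graphs. This way the same reductions also serve part~2 of Theorem~\ref{theorem:basic}. For each pattern I treat one of deletion or addition directly. The other follows from Observation~\ref{observation:del-add-same}: complementing the constructed graph is an $\Class{AC}^0$ map, and $\overline{\phi}$ has the same quantifier pattern. Editing is then handled by showing that the opposite operation never helps. Throughout, let $s$ have length $n$, let $z$ be the number of $0$-bits, and set $k=\lfloor n/2\rfloor$. Then $s\in\Lang{Majority}$ iff $z\le k$.

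\emph{Pattern $aa$.} I take $\phi=\forall x\forall y\,(x=y\lor x\nadj y)$, which says the graph is edgeless apart from possible self-loops. For every $0$-bit, the graph contains a fresh pair of vertices joined by an edge; for every $1$-bit, it contains a fresh pair of isolated vertices. Under deletion the optimal cost is exactly $z$. For editing, adding edges (including self-loops) never removes a violation, so the optimum is unchanged.

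\emph{Pattern $ae$.} I take $\phi=\forall x\exists y\,(x\adj y)$, which says there is no isolated vertex, and use addition. For every $1$-bit, the graph contains a fresh edge on two new vertices; for every $0$-bit, it contains two fresh isolated vertices. The isolated vertices come in $z$ pairs, so pairing them up costs exactly $z$ added edges. No solution is cheaper, because each added pair $\{u,v\}$ (or self-loop) removes at most two isolated vertices. Deletions only create isolated vertices, so editing behaves the same.

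\emph{Pattern $ea$.} I take $\phi=\exists x\forall y\,(x=y\lor x\adj y)$, which asks for a dominating vertex, and use addition. The graph has a center $c$, one vertex $v_i$ per bit with $c\undiradj v_i$ iff bit $i$ is $1$, and $n$ extra padding vertices adjacent only to $c$. Making $c$ dominating costs exactly $z$. This is the step I expect to be the main obstacle: I must rule out cheaper solutions with a different center. Every other vertex has degree at most $1$, so it needs at least $2n-1>k$ added edges; this is exactly why the padding is there. Deleting edges never creates domination, so editing behaves the same. All three constructions are clearly computable by constant-depth circuits, which completes the plan.
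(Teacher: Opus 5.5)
Your proposal is correct and follows the same route as the paper: reductions from \Lang{Majority} for each pattern, using no-isolated-vertex ($ae$), dominating-vertex ($ea$) and edgelessness ($aa$) formulas with essentially the same gadgets. Your version is more careful than the paper's sketch. The paper treats only one operation per pattern, and in the $ea$ case it uses $\exists x\forall y\,(x\adj y)$, which forces a self-loop at the center, without ruling out other centers; your $x=y$ disjunct, the padding vertices, the use of Observation~\ref{observation:del-add-same}, and the monotonicity arguments for editing close these points.
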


In the previous section, we have shown that for basic graphs, the quantifier patterns $e^*a$, $ae$,
and $aa$ are tractable. Now, we turn our attention to the remaining fragments
and show that they contain $\Class{NP}$-hard problems. For some fragments, we
can reuse reductions that we provided in Section~\ref{section:parameterized}.

\begin{lemma}\label{lemma:basic-aea}
  For each modification operation $\otimes\in \{\mathrm{del}, \mathrm{add},
  \mathrm{edit}\}$, $\EM{basic}{\otimes}(aea)$, $\EM{basic}{\otimes}(aee)$,
  $\EM{basic}{\otimes}(aae)$ and $\EM{basic}{\otimes}(eae)$
  each contain an $\Class{NP}$-hard problem.
\end{lemma}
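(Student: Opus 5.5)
The plan is to reuse the four parameterized reductions from Section~\ref{section:parameterized}, namely Lemmas~\ref{lemma:p-basic-aea}, \ref{lemma:p-basic-aee}, \ref{lemma:p-basic-aae}, and~\ref{lemma:p-basic-eae}. Each of them is a $\Para\Class{AC}^0$-many-one reduction from $\PLang{Set-Cover}$ to $\PEM{basic}{\otimes}(\phi)$ for a formula~$\phi$ with the respective pattern.

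The first observation is that these reductions never use the parameter for anything beyond constructing the output instance. The new budget is $k'=k$ (or a fixed function of~$k$), and the graph has polynomially many copies: for example $k+1$ or $2k+1$ copies of each universe element, plus the gadget vertices $c_i,c_i'$. Since we may assume $k\le |S|$, the number of copies is bounded by the input size, so every construction runs in polynomial time. In fact each is computable by an $\Class{AC}^0$-circuit, because it only copies vertices and edges according to index ranges bounded by the input length.

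The correctness proofs (forward direction, backward direction, and the editing argument) show that $(G,k)$ is a yes-instance of $\PLang{Set-Cover}$ if and only if $(G',k')$ is a yes-instance of the modification problem. This equivalence is a statement about the underlying unparameterized languages. Therefore the same maps are also $\Class{AC}^0$-many-one (hence polynomial-time) reductions from the classical $\Lang{Set-Cover}$ problem to $\EM{basic}{\otimes}(\phi)$.

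$\Lang{Set-Cover}$ is $\Class{NP}$-complete (it is classical Dominating Set / Set Cover on incidence graphs). So each of the four classes contains an $\Class{NP}$-hard problem, and by Observation~\ref{observation:del-add-same} this holds for deletion, addition, and editing alike. This uses the same pattern as the proof of Lemma~\ref{lemma:undir-ae}, which was derived from Lemma~\ref{lemma:p-undir-ae}.

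The main step to verify is the claim that the reductions behind the starred lemmas, which we have not seen in the main text, are genuinely polynomial. Specifically, the blow-up must depend on~$k$ only polynomially, as in the $k+1$ and $2k+1$ copies used in Lemma~\ref{lemma:p-basic-eae}, and never exponentially. If some gadget were exponential in~$k$, I would first try to replace it by one with $2k+1$ copies. The argument would then rely only on the counting fact that a budget of~$k$ modifications cannot repair more than $2k$ independent copies, which is all the backward directions shown so far actually use.
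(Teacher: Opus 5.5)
Your proposal is correct and is essentially the paper's proof, which simply observes that the $\PLang{Set-Cover}$ reductions of Lemmas~\ref{lemma:p-basic-aea}, \ref{lemma:p-basic-aee}, \ref{lemma:p-basic-aae} (using its explicit triangle-edge-cover reduction) and~\ref{lemma:p-basic-eae} carry over unchanged to the classical setting; the appendix resolves your worry, since every gadget there uses only $k+1$ or $2k+1$ copies. One slip: Observation~\ref{observation:del-add-same} relates deletion and addition but not editing, so the editing case rests on each lemma's own editing paragraph, which you do cite anyway.
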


\begin{proof}
  Observe that the reductions from $\PLang[]{Set-Cover}$ to problems in
  $\PEM{basic}{\otimes}(aea)$ (Lemma~\ref{lemma:p-basic-aea}),
  $\PEM{basic}{\otimes}(aee)$ (Lemma~\ref{lemma:p-basic-aee}),
  $\PEM{basic}{\otimes}(aae)$ (Lemma~\ref{lemma:p-basic-aae}), and
  $\PEM{basic}{\otimes}(eae)$ (Lemma~\ref{lemma:p-basic-eae}) work analogously for
  the non-parameterized case, yielding the statement.
\end{proof}

The only cases left to consider are the patterns $eaa$ and $aaa$.
We will show in the following that in both quantifier patterns suffice to express $\Class{NP}$-hard problems.

\begin{lemma}\label{lemma:basic-eaa} For each modification operation $\otimes\in \{\mathrm{del},
  \mathrm{add}, \mathrm{edit}\}$, $\EM{basic}{\otimes}(eaa)$ and
  $\EM{dir}{\otimes}(eaa)$ each contain an
  $\Class{NP}$-hard problem.
\end{lemma}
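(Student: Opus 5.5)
We reduce from $\Lang{Vertex-Cover}$, following the pattern of Lemma~\ref{lemma:undir-aa} but replacing self-loops, which basic graphs do not have, by adjacency to an existentially chosen ``center''. We work with the edge addition version; by Observation~\ref{observation:del-add-same} this suffices for deletion. Take
\[
\phi_{eaa} = \exists c\, \forall x\, \forall y\, \bigl(x \adj y \to (x = c \lor y = c \lor x \adj c \lor y \adj c)\bigr),
\]
which has pattern $eaa$. It says that there is a vertex $c$ such that every edge not incident to $c$ has an endpoint adjacent to $c$. In other words, for a fixed center $c$ the neighbourhood of $c$ must be a vertex cover of the graph with $c$ removed.

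\emph{The reduction.} Given a basic graph $G$ and $k$, let $m = k+2$. Output $G'$, the disjoint union of $m$ copies $G^{(1)},\dots,G^{(m)}$ of $G$ together with one fresh isolated vertex $c^*$, and budget $k' = mk$.

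\emph{Forward direction.} If $C$ is a vertex cover of $G$ with $|C| \le k$, add the edges between $c^*$ and the copies of $C$ in every $G^{(i)}$. These are at most $mk$ edges, and afterwards every edge of $G'$ has an endpoint adjacent to $c^*$.

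\emph{Backward direction.} Suppose at most $mk$ edges are added so that $\phi_{eaa}$ holds with some center $c$.
\begin{itemize}
\item If $c = c^*$: inside each copy $G^{(i)}$, the endpoints of original edges that become adjacent to $c^*$ form a vertex cover of $G$. Each such endpoint needs its own added edge, so some copy uses at most $k$ of them, which gives $\mathrm{vc}(G) \le k$.
\item If $c$ lies in some copy, say $G^{(1)}$: every other copy $G^{(i)}$ with $i \ne 1$ has no existing edge to $c$. So the vertices of $G^{(i)}$ joined to $c$ form a vertex cover of $G$, and each costs one added edge. Hence $(m-1)\,\mathrm{vc}(G) \le mk$, i.e.\ $\mathrm{vc}(G) \le mk/(m-1) = k + k/(k+1) < k+1$. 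By integrality, $\mathrm{vc}(G) \le k$.
\end{itemize}
Choosing $m = k+2$ is exactly what makes this counting go through. The only remaining subtlety is that added edges inside a copy could create new edges that need covering. But the covering condition only ever concerns original edges when we extract the cover, and added edges can only add constraints, so the bounds still hold. (If $G$ has no edges, the instance is trivial.)

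\emph{Edge editing.} Deleting an edge $\{x,y\}$ removes only one violated pair. We can instead add the edge $\{c,x\}$ at the same cost, which covers that edge and possibly more, so an optimal edit set only adds edges and the argument above applies. This exchange argument is the step needing the most care. One must check that deleting edges incident to $c$ never helps; this holds because such deletions only shrink $N(c)$ and never remove a violation.

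\emph{Directed setting.} Use the same construction with both orientations of every edge and the same formula read over directed edges, where adjacency to $c$ may be realized in either direction. An arc $x \adj c$ then costs one tuple, and the counting argument carries over verbatim.
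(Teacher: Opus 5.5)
Your proof is correct, but it takes a different route from the paper.

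The paper uses the dual property $\exists c\,\forall x\,\forall y\bigl((x\adj y\land x\neq c\land y\neq c)\to(x\nadj c\lor y\nadj c)\bigr)$. It reduces $\Lang{Vertex-Cover}$ directly to the \emph{deletion} version on a single copy of $G$ with unchanged budget $k$. The construction adds a center $c$ adjacent to all of $V$, together with a clique $K$ on $k+2$ vertices that is also completely joined to $V$. A cover $C$ then corresponds to deleting the edges $\{c,v\}$ with $v\in C$.

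Alternative centers are excluded by a gadget argument. For a center in $V$ or in $K$, its neighbourhood contains a clique on $k+2$ vertices: namely $K$, resp.\ $K$ minus the center plus any vertex of $V$. Such a clique cannot be handled with $k$ deletions. Addition then follows from Observation~\ref{observation:del-add-same}, and editing from the remark that additions never help.

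You instead take the complementary addition formulation with an isolated center. You amplify by using $k+2$ disjoint copies and budget $(k+2)k$. Every alternative center is then eliminated uniformly by the inequality $(m-1)\,\mathrm{vc}(G)\le mk$ plus integrality. You obtain editing by an explicit exchange argument.

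The paper's reduction is more economical: linear size and budget preserved. Yours avoids any structural case analysis of where the center could sit and is easier to verify in full rigour. Its polynomial blow-up is irrelevant for $\Class{NP}$-hardness. You also spell out the directed adaptation, which the paper leaves implicit.

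One small point: you should assume $k\le|V(G)|$ (otherwise output a trivial yes-instance), since only then are $k+2$ copies polynomially many. To match the paper's convention of $\Class{AC}^0$-reductions, it is cleaner to always use $m=|V(G)|+2$ copies after replacing $k$ by $\min(k,|V(G)|)$. Your counting only needs $m-1>k$.
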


\begin{proof}
  \emph{The formula.} Consider the formula
  \begin{align*}
    \phi_{\ref{lemma:basic-eaa}} = \exists c \forall x \forall y ((x\adj y \land x\neq c \land y\neq c) \to (x\nadj c \lor y\nadj c)), %\label{lemma:basic-eaa-eq}
  \end{align*}
  which states that there is a vertex that is not adjacent to both endpoints of any edge.

  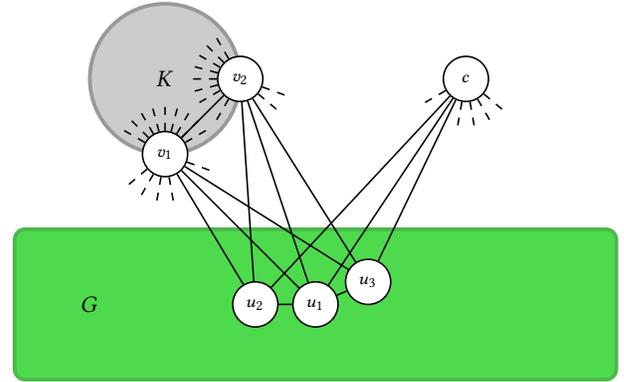
\begin{figure}
    \centering
    \begin{tikzpicture}[node/.append style={minimum size=6mm}]

      % G
      \draw[rounded corners, line width=.5mm, green!60!black,
      fill=green!80!black, opacity=0.7] (0, 0)
      rectangle (8, 2);
      \node (labelG) at (1, 1) {$G$};

      % c
      \node (c) at (6, 4) [node] {$c$};

      % K
      \draw[line width=.5mm, black!40, fill=black!20] (2, 4) circle (1cm);
      \node (labelK) at (2, 4) {$K$};

      % Vertices
      \node (cliqueNode) at (2, 3) [node] {$v_{1}$};
      \node (cliqueNode1) at (3, 4) [node] {$v_{2}$};

      \node (graphNode1) at (4, 1) [node] {$u_{1}$};
      \node (graphNode2) at (3.2, 1) [node] {$u_{2}$};
      \node (graphNode3) at (4.7, 1.3) [node] {$u_{3}$};

      % Solid Edges
      \draw (graphNode1) -- (graphNode2);
      \draw (graphNode1) -- (graphNode3);
      \draw (c) -- (graphNode1);
      \draw (c) -- (graphNode2);
      \draw (c) -- (graphNode3);
      \draw (cliqueNode) -- (cliqueNode1);
      \draw (cliqueNode) -- (graphNode1);
      \draw (cliqueNode) -- (graphNode2);
      \draw (cliqueNode) -- (graphNode3);
      \draw (cliqueNode1) -- (graphNode1);
      \draw (cliqueNode1) -- (graphNode2);
      \draw (cliqueNode1) -- (graphNode3);

      % Indicate other edges
      % Clique to clique
      \draw[dashed] (cliqueNode) -- ++(30:0.7);
      \draw[dashed] (cliqueNode) -- ++(45:0.7);
      \draw[dashed] (cliqueNode) -- ++(60:0.7);
      \draw[dashed] (cliqueNode) -- ++(75:0.7);
      \draw[dashed] (cliqueNode) -- ++(90:0.7);
      \draw[dashed] (cliqueNode) -- ++(105:0.7);
      \draw[dashed] (cliqueNode) -- ++(120:0.7);
      \draw[dashed] (cliqueNode) -- ++(135:0.7);
      \draw[dashed] (cliqueNode) -- ++(150:0.7);

      \draw[dashed] (cliqueNode1) -- ++(120:0.7);
      \draw[dashed] (cliqueNode1) -- ++(135:0.7);
      \draw[dashed] (cliqueNode1) -- ++(150:0.7);
      \draw[dashed] (cliqueNode1) -- ++(165:0.7);
      \draw[dashed] (cliqueNode1) -- ++(180:0.7);
      \draw[dashed] (cliqueNode1) -- ++(195:0.7);
      \draw[dashed] (cliqueNode1) -- ++(210:0.7);
      \draw[dashed] (cliqueNode1) -- ++(240:0.7);

      % Clique to G
      \draw[dashed] (cliqueNode) -- ++(220:0.7);
      \draw[dashed] (cliqueNode) -- ++(240:0.7);
      \draw[dashed] (cliqueNode) -- ++(260:0.7);
      \draw[dashed] (cliqueNode) -- ++(280:0.7);
      \draw[dashed] (cliqueNode) -- ++(340:0.7);

      \draw[dashed] (cliqueNode1) -- ++(320:0.7);
      \draw[dashed] (cliqueNode1) -- ++(340:0.7);

      % c to G
      \draw[dashed] (c) -- ++(210:0.7);
      \draw[dashed] (c) -- ++(260:0.7);
      \draw[dashed] (c) -- ++(280:0.7);
      \draw[dashed] (c) -- ++(300:0.7);
      \draw[dashed] (c) -- ++(320:0.7);

    \end{tikzpicture}
    \caption{Example for the reduction in
    Lemma~\ref{lemma:basic-eaa}.}
    \label{figure:basic-eaa}
  \end{figure}

  \emph{The reduction.} We reduce the problem $\Lang{Vertex-Cover}$ to the
  edge deletion problem as follows: On input $(G,
  k)$ with $G = (V, E)$, we construct $(G', k)$ with $G' = (V', E')$ by starting
  with $G$ and adding ($1$) a single vertex $c$ that we
  connect to every vertex in $V$ and ($2$) adding a clique $K$ of size $k + 2$,
  where we also connect every vertex in $K$ to every vertex in $V$. We claim
  that $(G, k)\in \Lang{Vertex-Cover}$ if and only if $(G', k)\in
  \EM{basic}{\mathrm{del}}(\phi_{\ref{lemma:basic-eaa}})$.
  
  An example is shown in Figure~\ref{figure:basic-eaa}. For a graph $G$ and number $k\in
  \mathbb{N}$, we display $u_1, u_2$, and $u_3$ with edges $u_1\undiradj' u_2$
  and $u_1\undiradj' u_3$ explicitly. We add a vertex $c$ which we connect to all vertices in~$G$. The edges to the $u_i$ are given as solid lines, while the edges to the
  other vertices in $G$ are indicated as dashed lines. We also add a clique $K$ of
  size $k + 2$, with $v_1$ and $v_2$ drawn explicitly as exemplary
  vertices. They are also connected to every vertex in~$G$.
  
  \emph{Forward direction.}
  Let  $(V, E, k) \in \Lang{Vertex-Cover}$ be given. We need
  to show that $(G', k) \in \EM{basic}{\mathrm{del}}(\phi_{\ref{lemma:basic-eaa}})$
  holds. Let $C \subseteq V$ with $|C| \le k$ be a vertex cover for $G$. Then, delete the
  edges $D = \{(v, c)\mid v\in C\} \cup \{(c, v)\mid v\in C\}$. We have
  $\|D\| = |C|\leq k$. We argue that for the resulting graph
  $G'$, we have $G'\models \phi_{\ref{lemma:basic-eaa}}$: Choose the
  existentially-quantified variable to be assigned with~$c$. Then, each vertex in the
  clique~$K$ is not connected to~$c$ by construction, and since $C$ is a vertex
  cover of $G$, for every edge in~$G$, one end is not connected to~$c$.

  \emph{Backward direction.}
  Conversely, suppose that for $G' = (V', E')$ we are given a set $D \subseteq
  E'$ with $\|D\| \le k$ such that $(V', E'\setminus D) \models
  \phi_{\ref{lemma:basic-eaa}}$. First, assume that $c$ is chosen as the
  assignment to the existentially quantified variable. Then, the deletion of at
  most $k$ edges has to ensure in particular that every edge in $G$ has a vertex
  that is not connected to $c$. Because of this, it does not make sense to
  delete an edge in $G$, since this can only resolve one conflicting edge, while
  deleting an edge from one of the vertices to $c$ resolves the same conflict and additionally
  all conflicts arising from incident edges to the vertex. Formally,
  if $(u,v) \in D$ and $(v,u)\in D$ with $c \notin \{u,v\}$, then
  remove both edges from~$D$ and add $(v,c)$ and $(c,v)$ instead;
  this will not increase the size of~$D$ and will still satisfy the
  formula. Now, $C = \{v \mid (v, c)\in D\}$ is a vertex
  cover of $G$ of size at most $k$. 

  We conclude by arguing that we cannot choose the existentially-quantified variable
  differently. Choosing the variable to be one of the $v\in K$ is not
  viable, since we would have to disconnect every other vertex in the clique
  from $v$, leading to at least $k + 1$ deletions. The same holds for a vertex
  $v\in V$.

  \emph{Edge editing.} Since we want to disconnect a vertex from at least one end point of every edge, adding edges is of no use.
\end{proof}

\begin{lemma}\label{lemma:basic-aaa}
  For each modification operation $\otimes\in \{\mathrm{del}, \mathrm{add}, \mathrm{edit}\}$, $\EM{basic}{\otimes}(aaa)$ contains an $\Class{NP}$-hard problem.
\end{lemma}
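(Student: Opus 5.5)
The plan is to use a well-known NP-hard edge modification problem whose target property is expressible with three universal quantifiers: \emph{triangle deletion}. Take the formula
\[
  \phi_{\triangle} = \forall x \forall y \forall z\, \neg\bigl(x \adj y \land y \adj z \land x \adj z\bigr),
\]
which has pattern $aaa$ and, on basic graphs, says that the graph is triangle-free. Self-loops are absent in basic graphs, so any three vertices satisfying the matrix are automatically pairwise distinct. Hence $\EM{basic}{\mathrm{del}}(\phi_{\triangle})$ is exactly the problem of deleting at most $k$ edges to make a graph triangle-free. This problem is NP-hard by Yannakakis~\cite{Yannakakis81}, so I would invoke that result. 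I would also check that the reduction there is local gadget replacement, computable in $\Class{AC}^0$, so that hardness holds under the weak reductions used in this paper. A cleaner route, if preferred, is to reduce from the vertex-cover-type source problems already used in this section.

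For the editing version, I argue that additions never help. Adding an edge cannot destroy a triangle; it can only create new ones. So from any editing modulator $S$ with $\mathcal{A} \triangleright S \models \phi_{\triangle}$, dropping all added edges gives a deletion modulator of no larger size whose result is a subgraph of $\mathcal{A} \triangleright S$. Triangle-freeness is closed under taking subgraphs, so the smaller result still satisfies $\phi_{\triangle}$. Therefore the identity map reduces triangle deletion to $\EM{basic}{\mathrm{edit}}(\phi_{\triangle})$.

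For the addition version, I apply Observation~\ref{observation:del-add-same} in its basic-graph form. Replacing each $u \adj v$ by $\neg(u = v) \land \neg(u \adj v)$ yields $\overline{\phi_{\triangle}}$. This formula still has pattern $aaa$ and expresses that the graph has no independent set of size three. Mapping $(G,k)$ to $(\overline{G},k)$, where $\overline{G}$ is the basic complement, is computable in $\Class{AC}^0$. By the observation, this map reduces triangle deletion to $\EM{basic}{\mathrm{add}}(\overline{\phi_{\triangle}})$, so that problem is NP-hard as well.

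The only step I expect to need care is the first: certifying that the known NP-hardness of triangle deletion holds under $\Class{AC}^0$-many-one reductions. If Yannakakis' construction is not transparently local, I would instead give a direct reduction. One option is from $\Lang{Vertex-Cover}$ on graphs of bounded degree, or from 3-SAT, using the standard gadget. In it, each variable becomes a ring of triangles whose edge-disjoint hitting patterns encode truth values, and each clause becomes a triangle sharing edges with the rings. Correctness would then be argued in the forward and backward style used in Lemma~\ref{lemma:basic-eaa}.
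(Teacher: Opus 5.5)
Your proof is correct. It follows the paper's overall strategy: exhibit an $aaa$-formula whose basic-graph modification problem is a known $\Class{NP}$-hard problem. Where it differs is the witness problem.

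The paper takes cluster graphs via $\phi_{\text{clusters}}$ and cites the $\Class{NP}$-hardness of \Lang{Cluster-Deletion} and \Lang{Cluster-Editing} (Shamir, Sharan, and Tsur) as two separate facts. It leaves the addition case implicit. That case must go through Observation~\ref{observation:del-add-same} with the complemented formula, since plain cluster completion is polynomial-time solvable.

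Your choice of triangle-freeness buys three things:
\begin{itemize}
\item The property is closed under taking subgraphs, so your argument collapses editing to deletion and one hardness result covers both.
\item You handle the addition case explicitly via the basic complement.
\item As you note, the matrix forces $x,y,z$ to be pairwise distinct on basic graphs.
\end{itemize}
The last point matters. The paper's $\phi_{\text{clusters}}$ as printed has no $x \neq z$ in its antecedent. Instantiating $z$ by $x$ on any edge $x \adj y$ falsifies it, so it is satisfied only by edgeless basic graphs. It has to be repaired to $\forall x\forall y\forall z\bigl((x\adj y \land y\adj z \land x\neq z) \to x\adj z\bigr)$ before the cited hardness applies.

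In return, the paper's editing problem is genuinely non-monotone, so it records a natural $\Class{NP}$-hard editing problem in its own right.

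Your worry about whether the cited reduction is an $\Class{AC}^0$-reduction applies equally to the paper's citation, which does not address it either. Your 3-SAT fallback is only a sketch, but at the paper's level of rigor it is not needed.
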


\begin{proof}
  Consider
  $
    \phi_{\text{clusters}} = \forall x\forall
    y\forall z ((x \adj y \land y \adj z) \to x \adj z)
  $,
  which is satisfied by so-called cluster graphs, that is, graphs that do not contain an induced $P_3$. Therefore, 
  $\EM{basic}{\mathrm{del}}(\phi_{\text{clusters}})$ and
  $\EM{basic}{\mathrm{edit}}(\phi_{\text{clusters}})$ characterize the
  problems \Lang{Cluster-} \Lang{Deletion} and \Lang{Cluster-Editing}, both of which
  are $\Class{NP}$-hard~\cite{Shamir2004173}.
\end{proof}

\subsection{Monadic Structures}

We conclude by considering structures over monadic
vocabularies.

\begin{theorem}\label{theorem:monadic}
  Let $p \in \{a,e\}^*$ be a pattern. For each modification operation $\otimes\in \{\mathrm{del},
  \mathrm{add}, \mathrm{edit}\}$, we have
  \begin{enumerate}
    \item $\EM{mon}{\otimes}(p) \subseteq \Class{AC}^0$, if $p\preceq e^*$.
    \item $\EM{mon}{\otimes}(p) \not \subseteq \Class{AC}^0$, but $\EM{mon}{\otimes}(p)\subseteq \Class{TC}^0$, if $a\preceq p$.
  \end{enumerate}
\end{theorem}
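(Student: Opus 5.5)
Three things need to be shown for monadic structures: $\Class{AC}^0$ membership for $p\preceq e^*$, a $\Class{TC}^0$ upper bound for every pattern~$p$, and the lower bound $\not\subseteq\Class{AC}^0$ as soon as $a\preceq p$.

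For $p\preceq e^*$, I take the first item directly from Lemma~\ref{lemma:undir-e*}. That lemma already covers arbitrary vocabularies, so in particular monadic ones.

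For the $\Class{TC}^0$ upper bound, which covers every pattern at once, I reuse the type-histogram machinery from the proof of Theorem~\ref{theorem:p-monadic}.
\begin{itemize}
\item Every element of a monadic structure over $\tau=\{R_1,\dots,R_r\}$ has one of $2^r$ types, and the structure is determined up to isomorphism by its histogram $(e_t)_{t\in T_\tau}$.
\item By Ehrenfeucht--Fra\"iss\'e games, a fixed first-order sentence $\phi$ of quantifier rank $q$ cannot distinguish counts $e_t\ge q$ from one another. Hence whether a structure satisfies $\phi$ depends only on the truncated histogram $(\min(e_t,q))_t$, and a finite table $T_\phi\subseteq\{0,\dots,q\}^{2^r}$ records the accepted truncated histograms.
\item A modification corresponds to a transfer matrix $(m_{t,t'})$, where $m_{t,t'}$ elements are moved from type $t$ to type $t'$. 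Moving an element from $t$ to $t'$ costs $|t\mathbin{\triangle}t'|$, so the matrix costs $\sum m_{t,t'}|t\mathbin{\triangle}t'|$. In the deletion and addition variants I only allow $t'\subseteq t$, respectively $t'\supseteq t$.
\end{itemize}

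The key reduction is that only transfer matrices with entries at most $q$ are relevant. The target truncated histogram depends only on moves that change counts near the threshold $q$. Any optimal modification can be assumed to move at most $q$ elements between each pair of types: moving more never changes the truncated target but only increases the cost. This follows from an exchange argument on the cost, since costs are nonnegative.

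The circuit then works as follows.
\begin{itemize}
\item It computes all counts $e_t$ with threshold gates.
\item In parallel, it enumerates the constantly many matrices $m$ with entries in $\{0,\dots,q\}$.
\item For each $m$, it checks feasibility, namely $\sum_{t'}m_{t,t'}\le e_t$, using threshold gates.
\item It computes the resulting truncated histogram and checks membership in $T_\phi$, which is a constant table.
\item It compares the cost, a constant, with $k$, which needs only comparisons on the binary encoding of $k$.
\end{itemize}
I expect the main obstacle to be stating this truncation/exchange argument cleanly. The point to get right is that the truncated target depends on the exact $e_t$ minus and plus transfers, so feasibility needs the real counts while acceptance needs only the truncation.

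For the lower bound, I mirror Lemma~\ref{lemma:undir-a} and reduce $\Lang{Majority}$ to a single-$\forall$ monadic formula $\phi=\forall x\,\neg R(x)$. Each $0$-bit becomes an element in $R$ and each $1$-bit an element not in $R$, with $k=|s|/2$. For deletion and editing this is immediate. For addition I use $\forall x\,R(x)$ with the roles of the bits swapped, or equivalently apply Observation~\ref{observation:del-add-same}.

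Finally, for any pattern $p$ with $a\preceq p$, I pad the formula with dummy quantifiers whose variables do not occur, so it attains exactly pattern~$p$ while expressing the same property. The hardness therefore lifts to every such~$p$.
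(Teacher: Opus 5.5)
Your upper-bound argument fails at the ``key reduction''. It is not true that transfer matrices with entries at most $q$ suffice. Moving more than $q$ elements \emph{out of} a type does change the truncated target: the truncation at $q$ records exact counts below $q$, and driving a large count $e_t$ below $q$ requires moving roughly $e_t$ elements. Your exchange argument is valid only on the receiving side, since once a type has received $q$ elements, further arrivals do not change its truncated count.

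Your own lower-bound formula is a counterexample. For $\phi=\forall x\,\neg R(x)$ we have $q=1$. A structure with $N\ge 2$ elements in $R$ is a yes-instance for budget $k=N$, but only via the transfer $m_{\{R\},\emptyset}=N$. Every matrix with entries at most $1$ leaves at least $N-1$ elements in $R$, so your circuit rejects. A sanity check exposes the same problem. With constantly many constant-size matrices, your circuit uses the counts $e_t$ only through comparisons with constants, and it compares a constant cost with $k$. It is therefore really an $\Class{AC}^0$-circuit, which would contradict the $\Class{AC}^0$ lower bound you prove in the same proposal.

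The fix is to let the entries range over $\{0,\dots,|A|\}$. There are only $4^r$ entries, so this still gives polynomially many matrices. For each matrix, the circuit checks:
\begin{itemize}
\item feasibility $\sum_{t'}m_{t,t'}\le e_t$;
\item the resulting counts $e_t-\sum_{t'}m_{t,t'}+\sum_{t'}m_{t',t}$, truncated and looked up in your table $T_\phi$;
\item the now non-constant cost $\sum m_{t,t'}|t\mathbin{\triangle}t'|$ against $k$.
\end{itemize}
All of these are arithmetic on $O(\log|A|)$-bit numbers, and the $e_t$ come from threshold gates, so everything stays in $\Class{TC}^0$.

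This is essentially the paper's argument. It enumerates the polynomially many lexicographically minimal type modulators from Theorem~\ref{theorem:p-monadic}, with entries up to $k$, where $k$ can be capped at $r|A|$, and it uses counting to check admissibility. Your Ehrenfeucht--Fra\"iss\'e table $T_\phi$ is a legitimate alternative to the paper's direct check of $\mathcal{A}\triangleright S_T\models\phi$. The remaining parts are correct and match the paper: the $e^*$ case via Lemma~\ref{lemma:undir-e*}, the $\Lang{Majority}$ reduction, and padding with dummy quantifiers.
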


\begin{proof}
  The first statement follows directly from Lemma~\ref{lemma:undir-e*}.
  
  For the lower bound in the second statement, consider the formula $\phi =
  \forall x \neg R(x)$. We reduce from $\Lang{Majority}$ to
  $\EM{undir}{\mathrm{del}}(\phi)$. For each bit in $s$ that is set to $0$, add
  an element $e$ with $e \in R$ and set $k = |s|/2$. Now, we can satisfy the
  formula by deleting elements of~$R$ if and only if at most $|s|/2$ bits were
  set to $0$ in $s$.

  The upper bound in the second statement follows from
  Theorem~\ref{theorem:p-monadic}. Notice that the number of modulators we
  consider is polynomial in $k$, and checking if a modulator is admissable boils
  down to checking whether we have up to $k$ elements of a given type in the
  universe, which we can do with $\Class{TC}^0$-circuits.
\end{proof}

\section{Conclusion and Discussion}
%== What we have done
We provided a full classification of the parameterized and classical complexity
of relation modification problems towards target properties that are expressible
in different fragments of first-order logic. In our analysis, we considered (1)
different modification operations, namely deleting relations, adding relations,
and editing relations and (2)~different input structures, namely monadic structures, basic graphs,
undirected graphs, directed graphs, and arbitrary structures. 

%The classification is based on the quantifier patterns of the formulas and sheds additional light on the complexityproperties that emerge from these patterns: 
While the tractability frontier coincides for all modification operations, it differs with respect to the type of structure that is
allowed as input. More specifically, we obtain a different complexity landscape
on monadic structures and basic graphs each, whereas modifying all other input structures is equally hard.

The classifications we established are ``strong'' separations in the
sense that all fragments either yield relation modification problems that
are quite easy to solve (that is, they lie in $\Para\Class{AC}^{0\uparrow}$ or
$\Class{TC}^0$) or that are intractable (that is, they contain $\Class W[2]$-hard or
$\Class{NP}$-hard problems).
See Table~\ref{table:summary} for a full summary of our results.
%% %== Directed graphs
%% Note also that the results presented in this paper easily transfer to the
%% setting where instead of measuring the modification steps by the number of
%% undirected pairs and loops $\|D\|$, we use the number of (directed) edges
%% $|D|$. For basic and undirected graphs, the results directly transfer, while for
%% directed graphs, we have to map the parameter $k$ to $2k$ in the
%% Lemma~\ref{lemma:p-basic-aea}, Lemma~\ref{lemma:p-basic-aee},
%% Lemma~\ref{lemma:p-basic-aae}, Lemma~\ref{lemma:p-basic-eae}, and
%% Lemma~\ref{lemma:basic-eaa}.

%\todo{We showed that editing all relations given in the
%input. By our proofs, it is easy to see that the complexity landscape is also the
%same for the \emph{uniform} variant of the problems, where we may edit only a
%single (at least binary) relation.}
%
%\todo{We also saw that in our settings, it never makes a difference if the
%relations are unordered or ordered. It would be interesting to exhibit settings
%in which the order of relations actually makes a difference.}
Upon closer examination of our proofs, we find that considering different restrictions and counting measures for our modification budget yields identical complexity landscapes. More specifically, we observe the following:

\begin{observation}
For each modification operation $\otimes \in \{\mathrm{del}, \mathrm{add},$ $\mathrm{edit}\}$ and type $\mathrm{T} \in \{\mathrm{arb}, \mathrm{dir},
\mathrm{undir}, \mathrm{basic}, \mathrm{mon}\}$,
$\EM{T}{\otimes}$ ($\PEM{T}{\otimes}$, respectively) exhibits the complexity landscape as outlined in Table~\ref{table:summary},
  \begin{itemize}
    \item when up to $k$ tuples (across all relations) can be modified, 
    \item when up to $k$ tuples of a single relation can be modified, and
    \item when up to $k$ tuples can be modified in every relation.
  \end{itemize}
  Moreover, we find that for $\mathrm{T} \in \{\mathrm{arb}, \mathrm{dir}\}$ and modification budget~$k$, the following problem versions are equally hard:
  \begin{itemize}
  \item $\EM{T}{\otimes}$ (or $\PEM{T}{\otimes}$, resp.), where we count the number of modified ordered tuples and compare it against~$k$ and 
  \item $\EM{T}{\otimes}$ (or $\PEM{T}{\otimes}$, resp.), where we count the number of modified unordered tuples and compare it against~$k$.
  \end{itemize}
\end{observation}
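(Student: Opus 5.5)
The observation bundles several claims: robustness of the landscape under three budget regimes, and, for arbitrary structures and directed graphs, under counting ordered versus unordered tuples. I would not reprove anything from scratch. Instead I would go through every upper- and lower-bound lemma of Sections~\ref{section:parameterized} and~\ref{section:classical} and check that each proof is insensitive to the change of budget measure.

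\emph{Upper bounds.} The algorithms of Lemma~\ref{lemma:p-undir-e*a*}, Lemma~\ref{lemma:p-basic-ae}, Theorem~\ref{theorem:p-monadic}, Lemma~\ref{lemma:undir-e*}, Lemma~\ref{lemma:undir-e*a}, Lemma~\ref{lemma:basic-ae} and Lemma~\ref{lemma:basic-aa} all enumerate candidate modifications explicitly and compare a cost against~$k$.
\begin{itemize}
\item For ``single relation'' budgets, I restrict the branching (or the candidate modulators) to that relation.
\item For ``$k$ per relation'' budgets, I keep a vector of $r$ counters. The search tree in Lemma~\ref{lemma:p-undir-e*a*} then has depth at most $rk$, and since $r$ is a constant this stays in $\Para\Class{AC}^{0\uparrow}$.
\item In the $\Class{TC}^0$ argument of Lemma~\ref{lemma:undir-e*a}, the per-element minimal modulators $S_v$ are replaced by per-element Pareto-optimal cost vectors over the relations. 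There are only constantly many choices per element, so I expect the feasibility check against the per-relation thresholds to remain a $\Class{TC}^0$ counting check.
\item Switching from ordered to unordered counting changes each candidate's cost by at most a factor of~$2$, and the cost is computed locally on a constant-size set. So it is again just a different constant-size cost function.
\end{itemize}

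\emph{Lower bounds.} All hardness reductions, namely Lemma~\ref{lemma:p-undir-ae}, Lemma~\ref{lemma:p-basic-eae}, the starred lemmas, Lemma~\ref{lemma:undir-aa}, Lemma~\ref{lemma:basic-eaa}, Lemma~\ref{lemma:basic-aaa} and the Majority reductions, use a vocabulary with a single relation symbol. Consequently the three budget regimes literally coincide on these instances.

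For ordered versus unordered counting on directed graphs:
\begin{itemize}
\item Where the intended solution modifies only self-loops (Lemma~\ref{lemma:p-undir-ae}, Lemma~\ref{lemma:undir-aa}, Lemma~\ref{lemma:undir-a}), the two measures agree.
\item Where the solution modifies symmetric pairs (Lemma~\ref{lemma:p-basic-eae}, Lemma~\ref{lemma:basic-eaa} and the basic-graph $\star$ lemmas), I would either map $k\mapsto 2k$ or use the directed adaptation already sketched in those proofs, which modifies one direction per chosen set.
\end{itemize}

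The main obstacle is the backward direction after rescaling the budget. With budget $2k$ in ordered counting, one must verify that the padding gadgets still exceed the budget. These are the $2k+1$ copies, the clique of size $k+2$, and the $k+1$ copies of each $u$. Where they do not, I would enlarge the gadgets accordingly, for example to $4k+1$ copies or a clique of size $2k+2$. The counting arguments in those proofs then go through verbatim.
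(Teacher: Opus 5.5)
Your strategy matches the paper's, which justifies this observation only by ``closer examination of our proofs''. Your two key points are the right ones: every hardness instance uses a single relation symbol, and every upper-bound algorithm compares a locally computed cost against~$k$. Two steps still need repair.

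First, with a budget of $k$ per relation, the final check in Lemma~\ref{lemma:undir-e*a} becomes a multi-dimensional selection problem: one option per element, with $r$ separate thresholds. ``Constantly many Pareto-optimal choices per element'' does not by itself make this a $\Class{TC}^0$ count. The missing idea is to group elements by type. Fix a certificate $A'$ and a structure $\mathcal{A}^*$. Every $v\in A\setminus A'$ then has one of constantly many local types, namely its original tuples over $A'\cup\{v\}$ together with its set of admissible local targets. So you only need numbers $m_{t,o}\in\{0,\dots,n\}$, one for each of the constantly many type/option pairs $(t,o)$, such that $\sum_o m_{t,o}=n_t$ for each type~$t$ and $\sum_{t,o} m_{t,o}\,c_i(t,o)\le k-|S_i|$ for each relation~$i$. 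Here $c_i(t,o)$ is the number of $R_i$-tuples that option~$o$ edits at an element of type~$t$. There are only $n^{O(1)}$ such vectors, so you can test them all in parallel using iterated addition.

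Second, your ``main obstacle'' is misdiagnosed. Rescaling $k\mapsto 2k$ under ordered counting is sound only when the directed formula forces every edit to come with its reverse, as the $\phi_{\text{undir}}$ conjunct does in the adaptation of Lemma~\ref{lemma:p-basic-aea}. Gadget size is not the issue. In the adaptations of Lemmas~\ref{lemma:p-basic-eae} and~\ref{lemma:p-basic-aee}, one directed edge per chosen set suffices, so a budget of $2k$ would let a solution pick $2k$ sets, and no enlargement of the padding gadgets repairs this.

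More to the point, these reductions are not needed here. The $\mathrm{dir}$ and $\mathrm{arb}$ rows of Table~\ref{table:summary} only require lower bounds for the patterns $a$, $aa$ and $ae$. All of these come from the self-loop reductions of Lemmas~\ref{lemma:undir-a}, \ref{lemma:undir-aa} and~\ref{lemma:p-undir-ae}. For those reductions you must still recheck the backward direction under unordered counting. For example, in Lemma~\ref{lemma:p-undir-ae}, adding both $(u_i,u_j)$ and $(u_j,u_i)$ at unit cost makes at most two copies happy. Hence $2k+1$ copies still force the deletion of a self-loop at some neighbour.

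Finally, for arbitrary structures the ratio between ordered and unordered cost is a constant depending on the arities, not~$2$.
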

The latter raises the following question:
Are there other natural counting versions that are of interest -- for instance in database repair -- under which the complexity of $\EM{\mathrm{arb}}{\otimes}$ and $\EM{\mathrm{dir}}{\otimes}$ (of $\PEM{\mathrm{arb}}{\otimes}$ and $\PEM{\mathrm{dir}}{\otimes}$, resp.) diverges?

In this paper, we only considered relational structures. It would be interesting to extend our framework to vocabularies
that contain function symbols. There, even fundamental questions like \emph{What are sensible
notions of editing on functions?} or \emph{Which problems can we additionally define by allowing editing on functions?} are
yet to be explored.

%== Other edge modifications
Moreover, the relation modification operations we considered in this paper are not
exhaustive: Graph theory studies other edge
modification operations such as \emph{edge
contraction}, \emph{edge splitting}, or \emph{edge flipping}. The study of
edge flipping may be of particular interest as this operation can be used to define a variant of the feedback
arc set problem, where instead of making a graph acyclic by deleting edges, we
have to flip them. This problem was recently used to solve the
$\Lang{One-Sided-Crossing-Minimization}$ problem efficiently in a practical
setting~\cite{BannachCKW24}.

%== Graph types
%Another direction of research could be extending these results to more types of
%graphs: One could consider also directed graphs without self-loops or arbitrary logical
%structures, where ``edge-deletion'' would be interpreted as deleting tuples from a
%relation of arbitrary arity. Even for bounds like Lemma~\ref{lemma:undir-e*a}, the situation
%would be unclear when studying arbitrary logical structures. One could also
%examine the complexity landscape on graph classes with bounded parameters.

%== Stronger logics
Another interesting direction of research is to extend our results to
higher-order logics, such as monadic second-order logic or existential
second-order logic. While these logics allow to express more properties of
interest, such as acyclicity, which allows us to model the feedback arc set
problem as an edge deletion or edge flipping problem, even model checking can
become intractable there. In that regard, standard extensions of first-order
logic could also be of interest, as they may capture more properties, such as
transitive closures, fixed-point operators, or counting.

%== Cluster editing
Finally, a closer inspection of the fragments tractable in the parameterized setting is of
interest: It immediately follows from our classical complexity results that the
fragment $e^*a$ is in $\Para\Class{AC}^0$ for all graph classes and
modification operations.
With respect to the fragment~$a^*$ -- which for instance can be used to describe
the interesting meta-problem $\Lang{$H$-free Deletion}$ for fixed graphs $H$ --
we know that the respective relation modification problems lie in
$\Para\Class{AC}^{0\uparrow}$, but the question of whether they can be placed in
$\Para\Class{AC}^0$ as well is open (see also~\cite[page 52]{Bannach2020}).

%%
%% The acknowledgments section is defined using the "acks" environment
%% (and NOT an unnumbered section). This ensures the proper
%% identification of the section in the article metadata, and the
%% consistent spelling of the heading.
%\begin{acks}
%To Robert, for the bagels and explaining CMYK and color spaces.
%\end{acks}

%%
%% The next two lines define the bibliography style to be used, and
%% the bibliography file.
\bibliographystyle{ACM-Reference-Format}
\bibliography{main}

@book{EbbinghausF05,
  author    = {Heinz{-}Dieter Ebbinghaus and
               J{\"{o}}rg Flum},
  title     = {Finite Model Theory},
  edition   = {2nd},
  publisher = {Springer},
  year      = {2005},
  doi       = {10.1007/3-540-28788-4},
  isbn      = {978-3-540-28787-2},
}

@book{Cygan15,
  author    = {Marek Cygan and
                Fedor V. Fomin and
                {\L}ukasz Kowalik and
                Daniel Lokshtanov and
                Dániel Marx and
                Marcin Pilipczuk and
                Micha{\l} Pilipczuk and
                Saket Saurabh},
  title     = {Parameterized Algorithms},
  publisher = {Springer},
  year      = {2015},
  doi       = {10.1007/978-3-319-21275-3},
  isbn      = {978-3-319-21274-6},
}

@article{GottlobKS04,
  author    = {Georg Gottlob and
               Phokion G. Kolaitis and
               Thomas Schwentick},
  title     = {Existential Second-Order Logic Over Graphs: Charting the Tractability Frontier},
  journal   = {Journal of the ACM},
  volume    = {51},
  number    = {2},
  pages     = {312--362},
  year      = {2004},
  doi       = {10.1145/972639.972646},
}

@inproceedings{Tantau15,
  author    = {Till Tantau},
  title     = {Existential Second-order Logic over Graphs: {A} Complete Complexity-theoretic
               Classification},
  booktitle = {32nd International Symposium on Theoretical Aspects of Computer Science,
               {STACS} 2015},
  pages     = {703--715},
  year      = {2015},
  doi       = {10.4230/LIPIcs.STACS.2015.703},
}

@book{FlumG06,
  author    = {J{\"{o}}rg Flum and
               Martin Grohe},
  title     = {Parameterized Complexity Theory},
  publisher = {Springer},
  year      = {2006},
  doi       = {10.1007/3-540-29953-X},
  isbn      = {978-3-540-29952-3},
}

@inproceedings{BannachT18,
  author    = {Max Bannach and
               Till Tantau},
  title     = {Computing Kernels in Parallel: Lower and Upper Bounds},
  booktitle = {13th International Symposium on Parameterized and Exact Computation,
               {IPEC} 2018},
  pages     = {13:1--13:14},
  year      = {2018},
  doi       = {10.4230/LIPIcs.IPEC.2018.13},
}

@article{BannachT20,
  author    = {Max Bannach and
               Till Tantau},
  title     = {Computing Hitting Set Kernels By {AC}\({}^{\mbox{0}}\)-Circuits},
  journal   = {Theory of Computing Systems},
  volume    = {64},
  number    = {3},
  pages     = {374--399},
  year      = {2020},
  doi       = {10.1007/s00224-019-09941-z},
}

@article{FominGT20,
  author    = {Fedor V. Fomin and
               Petr A. Golovach and
               Dimitrios M. Thilikos},
  title     = {On the Parameterized Complexity of Graph Modification to First-Order
               Logic Properties},
  journal   = {Theory of Computing Systems},
  volume    = {64},
  number    = {2},
  pages     = {251--271},
  year      = {2020},
  url       = {https://doi.org/10.1007/s00224-019-09938-8},
  doi       = {10.1007/s00224-019-09938-8},
}

@inproceedings{FominGT22,
  author    = {Fedor V. Fomin and
               Petr A. Golovach and
               Dimitrios M. Thilikos},
  title     = {Parameterized Complexity of Elimination Distance to First-Order Logic
               Properties},
  booktitle = {{ACM} Transactions on Computational Logic},
  volume       = {23},
  number       = {3},
  pages        = {17:1--17:35},
  year         = {2022},
  doi          = {10.1145/3517129},
}

@article{EiterGG00,
  author    = {Thomas Eiter and
               Yuri Gurevich and
               Georg Gottlob},
  title     = {Existential second-order logic over strings},
  journal   = {Journal of the {ACM}},
  volume    = {47},
  number    = {1},
  pages     = {77--131},
  year      = {2000},
  doi       = {10.1145/331605.331609},
}

@article{LewisY80,
  author    = {John M. Lewis and
               Mihalis Yannakakis},
  title     = {The Node-Deletion Problem for Hereditary Properties is {NP}-Complete},
  journal   = {Journal of Computer and System Sciences},
  volume    = {20},
  number    = {2},
  pages     = {219--230},
  year      = {1980},
  doi       = {10.1016/0022-0000(80)90060-4},
}

@book{DowneyF99,
  author       = {Rodney G. Downey and
                  Michael R. Fellows},
  title        = {Parameterized Complexity},
  series       = {Monographs in Computer Science},
  publisher    = {Springer},
  year         = {1999},
  url          = {https://doi.org/10.1007/978-1-4612-0515-9},
  doi          = {10.1007/978-1-4612-0515-9},
  isbn         = {978-1-4612-6798-0},
}

@inproceedings{BannachCT23,
  author       = {Max Bannach and
                  Florian Chudigiewitsch and
                  Till Tantau},
  editor       = {Neeldhara Misra and
                  Magnus Wahlstr{\"{o}}m},
  title        = {Existential Second-Order Logic over Graphs: Parameterized Complexity},
  booktitle    = {18th International Symposium on Parameterized and Exact Computation,
                  {IPEC} 2023},
  series       = {LIPIcs},
  volume       = {285},
  pages        = {3:1--3:15},
  publisher    = {Schloss Dagstuhl -- Leibniz-Zentrum f{\"{u}}r Informatik},
  year         = {2023},
  url          = {https://doi.org/10.4230/LIPIcs.IPEC.2023.3},
  doi          = {10.4230/LIPICS.IPEC.2023.3},
  bibsource    = {dblp computer science bibliography, https://dblp.org}
}

@inproceedings{Yannakakis78,
  author = {Yannakakis, Mihalis},
  title = {Node-and edge-deletion {NP}-complete problems},
  year = {1978},
  isbn = {9781450374378},
  publisher = {Association for Computing Machinery},
  address = {New York, NY, USA},
  url = {https://doi.org/10.1145/800133.804355},
  doi = {10.1145/800133.804355},
  booktitle = {Proceedings of the Tenth Annual ACM Symposium on Theory of Computing},
  pages = {253--264},
  numpages = {12},
  location = {San Diego, California, USA},
  series = {STOC '78}
}

@article{CRESPELLE2023100556,
  title = {A survey of parameterized algorithms and the complexity of edge modification},
  journal = {Computer Science Review},
  volume = {48},
  pages = {100556},
  year = {2023},
  issn = {1574-0137},
  doi = {https://doi.org/10.1016/j.cosrev.2023.100556},
  author = {Christophe Crespelle and Pål Grønås Drange and Fedor V. Fomin and Petr Golovach},
}

@book{BorgerGG1997,
  author       = {Egon B{\"{o}}rger and
                  Erich Gr{\"{a}}del and
                  Yuri Gurevich},
  title        = {The Classical Decision Problem},
  series       = {Perspectives in Mathematical Logic},
  publisher    = {Springer},
  year         = {1997},
}

@inproceedings{BannachCT24,
  author       = {Max Bannach and
                  Florian Chudigiewitsch and
                  Till Tantau},
  editor       = {Rastislav Kr{\'{a}}lovic and
                  Anton{\'{\i}}n Kucera},
  title        = {On the Descriptive Complexity of Vertex Deletion Problems},
  booktitle    = {49th International Symposium on Mathematical Foundations of Computer
                  Science, {MFCS} 2024},
  series       = {LIPIcs},
  volume       = {306},
  pages        = {17:1--17:14},
  publisher    = {Schloss Dagstuhl -- Leibniz-Zentrum f{\"{u}}r Informatik},
  year         = {2024},
  url          = {https://doi.org/10.4230/LIPIcs.MFCS.2024.17},
  doi          = {10.4230/LIPICS.MFCS.2024.17},
}

@book{Vollmer99,
  author    = {Heribert Vollmer},
  title     = {Introduction to Circuit Complexity -- A Uniform Approach},
  publisher = {Springer},
  year      = {1999},
  isbn      = {978-3-540-64310-4},
  series    = {Texts in Theoretical Computer Science},
}

@article{Yannakakis81,
  author       = {Mihalis Yannakakis},
  title        = {Edge-Deletion Problems},
  journal      = {{SIAM} Journal on Computing},
  volume       = {10},
  number       = {2},
  pages        = {297--309},
  year         = {1981},
  url          = {https://doi.org/10.1137/0210021},
  doi          = {10.1137/0210021},
}

@article{BurzynBD06,
  author       = {Pablo Burzyn and
                  Flavia Bonomo and
                  Guillermo Dur{\'{a}}n},
  title        = {{NP}-completeness results for edge modification problems},
  journal      = {Discrete Applied Mathematics},
  volume       = {154},
  number       = {13},
  pages        = {1824--1844},
  year         = {2006},
  url          = {https://doi.org/10.1016/j.dam.2006.03.031},
  doi          = {10.1016/J.DAM.2006.03.031},
}

@article{Natanzon2001109,
  title = {Complexity classification of some edge modification problems},
  journal = {Discrete Applied Mathematics},
  volume = {113},
  number = {1},
  pages = {109--128},
  year = {2001},
  issn = {0166-218X},
  doi = {https://doi.org/10.1016/S0166-218X(00)00391-7},
  url = {https://www.sciencedirect.com/science/article/pii/S0166218X00003917},
  author = {Assaf Natanzon and Ron Shamir and Roded Sharan},
}

@phdthesis{Mancini08,
  title        = {Graph modification problems related to graph classes},
  author       = {Frederico Mancini},
  year         = 2008,
  month        = {May},
  address      = {Norway},
  school       = {University of Bergen}
}

@article{Shamir2004173,
  title = {Cluster graph modification problems},
  journal = {Discrete Applied Mathematics},
  volume = {144},
  number = {1},
  pages = {173--182},
  year = {2004},
  note = {Discrete Mathematics and Data Mining},
  issn = {0166-218X},
  doi = {https://doi.org/10.1016/j.dam.2004.01.007},
  url = {https://www.sciencedirect.com/science/article/pii/S0166218X04001957},
  author = {Ron Shamir and Roded Sharan and Dekel Tsur},
}

@phdthesis{Bannach2020,
  author = {Max Bannach},
  school = {University of Lübeck},
  address = {Germany},
  title = {Parallel Parameterized Algorithms},
  year = {2020}
}

@inproceedings{BannachCKW24,
  author       = {Max Bannach and
                  Florian Chudigiewitsch and
                  Kim{-}Manuel Klein and
                  Marcel Wien{\"{o}}bst},
  editor       = {{\'{E}}douard Bonnet and
                  Pawel Rzazewski},
  title        = {{PACE} Solver Description: UzL Exact Solver for One-Sided Crossing
                  Minimization},
  booktitle    = {19th International Symposium on Parameterized and Exact Computation,
                  {IPEC} 2024},
  series       = {LIPIcs},
  volume       = {321},
  pages        = {28:1--28:4},
  publisher    = {Schloss Dagstuhl -- Leibniz-Zentrum f{\"{u}}r Informatik},
  year         = {2024},
  url          = {https://doi.org/10.4230/LIPIcs.IPEC.2024.28},
  doi          = {10.4230/LIPICS.IPEC.2024.28},
}

@inproceedings{BannachST15,
  author       = {Max Bannach and
                  Christoph Stockhusen and
                  Till Tantau},
  editor       = {Thore Husfeldt and
                  Iyad A. Kanj},
  title        = {Fast Parallel Fixed-parameter Algorithms via Color Coding},
  booktitle    = {10th International Symposium on Parameterized and Exact Computation,
                  {IPEC} 2015},
  series       = {LIPIcs},
  volume       = {43},
  pages        = {224--235},
  publisher    = {Schloss Dagstuhl -- Leibniz-Zentrum f{\"{u}}r Informatik},
  year         = {2015},
  url          = {https://doi.org/10.4230/LIPIcs.IPEC.2015.224},
  doi          = {10.4230/LIPICS.IPEC.2015.224},
}

@article{BrugmannKM09,
  author       = {Daniel Br{\"{u}}gmann and
                  Christian Komusiewicz and
                  Hannes Moser},
  title        = {On Generating Triangle-Free Graphs},
  journal      = {Electronic Notes in Discrete Mathematics},
  volume       = {32},
  pages        = {51--58},
  year         = {2009},
  url          = {https://doi.org/10.1016/j.endm.2009.02.008},
  doi          = {10.1016/J.ENDM.2009.02.008},
}

@article{ErdosGT96,
  author       = {Paul Erd{\"{o}}s and
                  Tibor Gallai and
                  Zsolt Tuza},
  title        = {Covering and independence in triangle structures},
  journal      = {Discrete Mathematics},
  volume       = {150},
  number       = {1-3},
  pages        = {89--101},
  year         = {1996},
  url          = {https://doi.org/10.1016/0012-365X(95)00178-Y},
  doi          = {10.1016/0012-365X(95)00178-Y},
}

@article{BujtasDDGTY25,
  author       = {Csilla Bujt{\'{a}}s and
                  Akbar Davoodi and
                  Laihao Ding and
                  Ervin Gy{\"{o}}ri and
                  Zsolt Tuza and
                  Donglei Yang},
  title        = {Covering the edges of a graph with triangles},
  journal      = {Discrete Mathematics},
  volume       = {348},
  number       = {1},
  pages        = {114226},
  year         = {2025},
  url          = {https://doi.org/10.1016/j.disc.2024.114226},
  doi          = {10.1016/J.DISC.2024.114226},
}

@inproceedings{ShengX23,
  author       = {Zimo Sheng and
                  Mingyu Xiao},
  editor       = {Weili Wu and
                  Guangmo Tong},
  title        = {A Discharging Method: Improved Kernels for Edge Triangle Packing and
                  Covering},
  booktitle    = {Computing and Combinatorics -- 29th International Conference, {COCOON}   2023},
  series       = {Lecture Notes in Computer Science},
  volume       = {14423},
  pages        = {171--183},
  publisher    = {Springer},
  year         = {2023},
  url          = {https://doi.org/10.1007/978-3-031-49193-1_13},
  doi          = {10.1007/978-3-031-49193-1_13},
}

@article{HuffnerKMN10,
  author       = {Falk H{\"{u}}ffner and
                  Christian Komusiewicz and
                  Hannes Moser and
                  Rolf Niedermeier},
  title        = {Fixed-Parameter Algorithms for Cluster Vertex Deletion},
  journal      = {Theory of Computing Systems},
  volume       = {47},
  number       = {1},
  pages        = {196--217},
  year         = {2010},
  url          = {https://doi.org/10.1007/s00224-008-9150-x},
  doi          = {10.1007/S00224-008-9150-X},
}

@inproceedings{KonstantinidisP19,
  author       = {Athanasios L. Konstantinidis and
                  Charis Papadopoulos},
  editor       = {Peter Rossmanith and
                  Pinar Heggernes and
                  Joost{-}Pieter Katoen},
  title        = {Cluster Deletion on Interval Graphs and Split Related Graphs},
  booktitle    = {44th International Symposium on Mathematical Foundations of Computer
                  Science, {MFCS} 2019},
  series       = {LIPIcs},
  volume       = {138},
  pages        = {12:1--12:14},
  publisher    = {Schloss Dagstuhl -- Leibniz-Zentrum f{\"{u}}r Informatik},
  year         = {2019},
  url          = {https://doi.org/10.4230/LIPIcs.MFCS.2019.12},
  doi          = {10.4230/LIPICS.MFCS.2019.12},
}

@book{Kudelic22,
  author       = {Robert Kudelic},
  title        = {Feedback Arc Set -- {A} History of the Problem and Algorithms},
  series       = {Springer Briefs in Computer Science},
  publisher    = {Springer},
  year         = {2022},
  url          = {https://doi.org/10.1007/978-3-031-10515-9},
  doi          = {10.1007/978-3-031-10515-9},
  isbn         = {978-3-031-10514-2},
}

@article{GaoHN13,
  author       = {Yong Gao and
                  Donovan R. Hare and
                  James Nastos},
  title        = {The parametric complexity of graph diameter augmentation},
  journal      = {Discrete Applied Mathematics},
  volume       = {161},
  number       = {10-11},
  pages        = {1626--1631},
  year         = {2013},
  url          = {https://doi.org/10.1016/j.dam.2013.01.016},
  doi          = {10.1016/J.DAM.2013.01.016},
}

@inproceedings{GuoHN04,
  author       = {Jiong Guo and
                  Falk H{\"{u}}ffner and
                  Rolf Niedermeier},
  editor       = {Rodney G. Downey and
                  Michael R. Fellows and
                  Frank K. H. A. Dehne},
  title        = {A Structural View on Parameterizing Problems: Distance from Triviality},
  booktitle    = {Parameterized and Exact Computation, First International Workshop,
                  {IWPEC} 2004},
  series       = {Lecture Notes in Computer Science},
  volume       = {3162},
  pages        = {162--173},
  publisher    = {Springer},
  year         = {2004},
  doi          = {10.1007/978-3-540-28639-4_15},
}

@book{I98,
	author = {Neil Immerman},
	editor = {},
	publisher = {Springer Verlag},
	title = {Descriptive Complexity},
  booktitle={Graduate Texts in Computer Science},
  year={1998},}

@book{GradelKLMSVVW07,
  author    = {Erich Gr{\"{a}}del and
               Phokion G. Kolaitis and
               Leonid Libkin and
               Maarten Marx and
               Joel Spencer and
               Moshe Y. Vardi and
               Yde Venema and
               Scott Weinstein},
  title     = {Finite Model Theory and Its Applications},
  series    = {Texts in Theoretical Computer Science. An {EATCS} Series},
  publisher = {Springer},
  year      = {2007},
  url       = {https://doi.org/10.1007/3-540-68804-8},
  doi       = {10.1007/3-540-68804-8},
  isbn      = {978-3-540-00428-8},
}

@book{G17,
  author       = {Martin Grohe},
  title        = {Descriptive Complexity, Canonisation, and Definable Graph Structure
                  Theory},
  series       = {Lecture Notes in Logic},
  volume       = {47},
  publisher    = {Cambridge University Press},
  year         = {2017},
  doi          = {10.1017/9781139028868},
  isbn         = {9781139028868},
  bibsource    = {dblp computer science bibliography, https://dblp.org}
}

@article{BSY99,
  author       = {Amir Ben{-}Dor and
                  Ron Shamir and
                  Zohar Yakhini},
  title        = {Clustering Gene Expression Patterns},
  journal      = {Jornal of Computational Biology},
  volume       = {6},
  number       = {3/4},
  pages        = {281--297},
  year         = {1999},
  doi          = {10.1089/106652799318274},
  bibsource    = {dblp computer science bibliography, https://dblp.org}
}

@article{B59,
 author = {Seymour Benzer},
 journal = {Proceedings of the National Academy of Sciences of the United States of America},
 number = {11},
 pages = {1607--1620},
 publisher = {National Academy of Sciences},
 title = {On the Topology of the Genetic Fine Structure},
 urldate = {2025-03-12},
 volume = {45},
 year = {1959}
}

@inproceedings{RBIL02,
author = {Rizzi, Romeo and Bafna, Vineet and Istrail, Sorin and Lancia, Giuseppe},
title = {Practical Algorithms and Fixed-Parameter Tractability for the Single Individual SNP Haplotyping Problem},
year = {2002},
isbn = {3540442111},
publisher = {Springer-Verlag},
address = {Berlin, Heidelberg},
booktitle = {Proceedings of the Second International Workshop on Algorithms in Bioinformatics},
pages = {29–43},
numpages = {15},
series = {WABI '02}
}

@article{BBC04,
  author       = {Nikhil Bansal and
                  Avrim Blum and
                  Shuchi Chawla},
  title        = {Correlation Clustering},
  journal      = {Machine Learning},
  volume       = {56},
  number       = {1-3},
  pages        = {89--113},
  year         = {2004},
  doi          = {10.1023/B:MACH.0000033116.57574.95},
  bibsource    = {dblp computer science bibliography, https://dblp.org}
}

@article{BETT94,
  author       = {Giuseppe Di Battista and
                  Peter Eades and
                  Roberto Tamassia and
                  Ioannis G. Tollis},
  title        = {Algorithms for Drawing Graphs: an Annotated Bibliography},
  journal      = {Journal of Computational Geometry},
  volume       = {4},
  pages        = {235--282},
  year         = {1994},
  doi          = {10.1016/0925-7721(94)00014-X},
  bibsource    = {dblp computer science bibliography, https://dblp.org}
}

@article{CNR89,
  author       = {Hyeong{-}Ah Choi and
                  Kazuo Nakajima and
                  Chong S. Rim},
  title        = {Graph Bipartization and via Minimization},
  journal      = {{SIAM} Journal on Discrete Mathematics},
  volume       = {2},
  number       = {1},
  pages        = {38--47},
  year         = {1989},
  doi          = {10.1137/0402004},
  bibsource    = {dblp computer science bibliography, https://dblp.org}
}

@article{HH87,
title = {A review of graph theory application to the facilities layout problem},
journal = {Omega},
volume = {15},
number = {4},
pages = {291-300},
year = {1987},
issn = {0305-0483},
doi = {https://doi.org/10.1016/0305-0483(87)90017-X},
author = {Mohsen MD Hassan and Gary L Hogg},
}

@article{JM96,
  author       = {Michael J{\"{u}}nger and
                  Petra Mutzel},
  title        = {Maximum Planar Subgraphs and Nice Embeddings: Practical Layout Tools},
  journal      = {Algorithmica},
  volume       = {16},
  number       = {1},
  pages        = {33--59},
  year         = {1996},
  doi          = {10.1007/BF02086607},
  bibsource    = {dblp computer science bibliography, https://dblp.org}
}

@inproceedings{GS76,
  author       = {Georges Gardarin and
                  Stefano Spaccapietra},
  editor       = {G. M. Nijssen},
  title        = {Integrity of Data Bases: {A} General Lockout Algorithm with Deadlock
                  Avoidance},
  booktitle    = {Modelling in Data Base Management Systems, Proceeding of the {IFIP} Working Conference on Modelling in Data Base Management Systems, Freudenstadt,
                  Germany},
  pages        = {395--412},
  publisher    = {North-Holland},
  year         = {1976},
  bibsource    = {dblp computer science bibliography, https://dblp.org}
}

@inproceedings{FSM15,
  author       = {Fedor V. Fomin and
                  Saket Saurabh and
                  Neeldhara Misra},
  editor       = {Jianxin Wang and
                  Chee{-}Keng Yap},
  title        = {Graph Modification Problems: {A} Modern Perspective},
  booktitle    = {Frontiers in Algorithmics - 9th International Workshop, {FAW} 2015},
  series       = {Lecture Notes in Computer Science},
  volume       = {9130},
  pages        = {3--6},
  publisher    = {Springer},
  year         = {2015},
  doi          = {10.1007/978-3-319-19647-3\_1},
  bibsource    = {dblp computer science bibliography, https://dblp.org}
}

@article{XSQ24,
  author       = {Shaobin Xu and
                  Minghui Sun and
                  Jun Qin},
  title        = {A High-Scalability Graph Modification System for Large-Scale Networks},
  journal      = {Computer Graphics Forum},
  volume       = {43},
  number       = {6},
  year         = {2024},
  doi          = {10.1111/CGF.15191},
  bibsource    = {dblp computer science bibliography, https://dblp.org}
}

@inproceedings{LT08,
author = {Liu, Kun and Terzi, Evimaria},
title = {Towards identity anonymization on graphs},
year = {2008},
isbn = {9781605581026},
publisher = {Association for Computing Machinery},
address = {New York, NY, USA},
doi = {10.1145/1376616.1376629},
booktitle = {Proceedings of the 2008 ACM SIGMOD International Conference on Management of Data},
pages = {93–106},
numpages = {14},
location = {Vancouver, Canada},
series = {SIGMOD '08}
}

@inproceedings{BPLRK16,
  author       = {Peter W. Battaglia and
                  Razvan Pascanu and
                  Matthew Lai and
                  Danilo Jimenez Rezende and
                  Koray Kavukcuoglu},
  editor       = {Daniel D. Lee and
                  Masashi Sugiyama and
                  Ulrike von Luxburg and
                  Isabelle Guyon and
                  Roman Garnett},
  title        = {Interaction Networks for Learning about Objects, Relations and Physics},
  booktitle    = {Advances in Neural Information Processing Systems 29: Annual Conference
                  on Neural Information Processing Systems 2016},
  pages        = {4502--4510},
  year         = {2016},
  bibsource    = {dblp computer science bibliography, https://dblp.org}
}

@inproceedings{LSB12,
  author       = {Xuesong Lu and
                  Yi Song and
                  St{\'{e}}phane Bressan},
  editor       = {Stephen W. Liddle and
                  Klaus{-}Dieter Schewe and
                  A Min Tjoa and
                  Xiaofang Zhou},
  title        = {Fast Identity Anonymization on Graphs},
  booktitle    = {Database and Expert Systems Applications - 23rd International Conference,
                  {DEXA} 2012},
  series       = {Lecture Notes in Computer Science},
  volume       = {7446},
  pages        = {281--295},
  publisher    = {Springer},
  year         = {2012},
  doi          = {10.1007/978-3-642-32600-4\_21},
  bibsource    = {dblp computer science bibliography, https://dblp.org}
}

@misc{bodirsky2025,
      title={Hereditary First-Order Logic: the tractable quantifier prefix classes}, 
      author={Manuel Bodirsky and Santiago Guzmán-Pro},
      year={2025},
      eprint={2411.10860},
      archivePrefix={arXiv},
      primaryClass={math.LO},
      url={https://arxiv.org/abs/2411.10860}, 
}

@misc{bodirsky2025-1,
      title={On the Computational Power of Extensional ESO}, 
      author={Manuel Bodirsky and Santiago Guzmán Pro},
      year={2025},
      eprint={2511.08515},
      archivePrefix={arXiv},
      primaryClass={math.LO},
      url={https://arxiv.org/abs/2511.08515}, 
}

@inproceedings{MorelleST25,
  author       = {Laure Morelle and
                  Ignasi Sau and
                  Dimitrios M. Thilikos},
  editor       = {Anne Benoit and
                  Haim Kaplan and
                  Sebastian Wild and
                  Grzegorz Herman},
  title        = {Graph Modification of Bounded Size to Minor-Closed Classes as Fast
                  as Vertex Deletion},
  booktitle    = {33rd Annual European Symposium on Algorithms, {ESA} 2025},
  series       = {LIPIcs},
  volume       = {351},
  pages        = {7:1--7:18},
  publisher    = {Schloss Dagstuhl - Leibniz-Zentrum f{\"{u}}r Informatik},
  year         = {2025},
  url          = {https://doi.org/10.4230/LIPIcs.ESA.2025.7},
  doi          = {10.4230/LIPICS.ESA.2025.7},
  bibsource    = {dblp computer science bibliography, https://dblp.org}
}

@inproceedings{CesatiI98,
  author       = {Marco Cesati and
                  Miriam Di Ianni},
  editor       = {David J. Pritchard and
                  Jeff Reeve},
  title        = {Parameterized Parallel Complexity},
  booktitle    = {Euro-Par '98 Parallel Processing, 4th International Euro-Par Conference,
                  Southampton, UK, September 1--4, 1998, Proceedings},
  series       = {Lecture Notes in Computer Science},
  volume       = {1470},
  pages        = {892--896},
  publisher    = {Springer},
  year         = {1998},
  doi          = {10.1007/BFB0057945},
}

@article{ElberfeldST15,
  author       = {Michael Elberfeld and
                  Christoph Stockhusen and
                  Till Tantau},
  title        = {On the Space and Circuit Complexity of Parameterized Problems: Classes
                  and Completeness},
  journal      = {Algorithmica},
  volume       = {71},
  number       = {3},
  pages        = {661--701},
  year         = {2015},
  doi          = {10.1007/S00453-014-9944-Y},
}

@article{ChenF19,
  author       = {Yijia Chen and
                  J{\"{o}}rg Flum},
  title        = {Some lower bounds in parameterized AC\({}^{\mbox{0}}\)},
  journal      = {Information and Computation},
  volume       = {267},
  pages        = {116--134},
  year         = {2019},
  doi          = {10.1016/J.IC.2019.03.008},
}

@article{Abu-KhzamLSS06,
  author       = {Faisal N. Abu{-}Khzam and
                  Michael A. Langston and
                  Pushkar Shanbhag and
                  Christopher T. Symons},
  title        = {Scalable Parallel Algorithms for {FPT} Problems},
  journal      = {Algorithmica},
  volume       = {45},
  number       = {3},
  pages        = {269--284},
  year         = {2006},
  doi          = {10.1007/S00453-006-1214-1},
}

@article{CheethamDRST03,
  author       = {James Cheetham and
                  Frank K. H. A. Dehne and
                  Andrew Rau{-}Chaplin and
                  Ulrike Stege and
                  Peter J. Taillon},
  title        = {Solving large {FPT} problems on coarse-grained parallel machines},
  journal      = {Journal of Computing and System Sciences},
  volume       = {67},
  number       = {4},
  pages        = {691--706},
  year         = {2003},
}

@article{BannachST23,
  author       = {Max Bannach and
                  Malte Skambath and
                  Till Tantau},
  title        = {On the Parallel Parameterized Complexity of MaxSAT Variants},
  journal      = {Journal of Artificial Intelligence Research},
  volume       = {78},
  year         = {2023},
  doi          = {10.1613/JAIR.1.14748},
}

@inproceedings{BannachT16,
  author       = {Max Bannach and
                  Till Tantau},
  editor       = {Jiong Guo and
                  Danny Hermelin},
  title        = {Parallel Multivariate Meta-Theorems},
  booktitle    = {11th International Symposium on Parameterized and Exact Computation,
                  {IPEC} 2016},
  series       = {LIPIcs},
  volume       = {63},
  pages        = {4:1--4:17},
  publisher    = {Schloss Dagstuhl -- Leibniz-Zentrum f{\"{u}}r Informatik},
  year         = {2016},
  doi          = {10.4230/LIPICS.IPEC.2016.4},
}

@inproceedings{Hegeman0L25,
  author       = {Steef Hegeman and
                  Jan Martens and
                  Alfons Laarman},
  editor       = {Akanksha Agrawal and
                  Erik Jan van Leeuwen},
  title        = {Uniformity Within Parameterized Circuit Classes},
  booktitle    = {20th International Symposium on Parameterized and Exact Computation,
                  {IPEC} 2025},
  series       = {LIPIcs},
  volume       = {358},
  pages        = {27:1--27:16},
  publisher    = {Schloss Dagstuhl - Leibniz-Zentrum f{\"{u}}r Informatik},
  year         = {2025},
  doi          = {10.4230/LIPICS.IPEC.2025.27},
}

@inproceedings{Arenas99,
author = {Arenas, Marcelo and Bertossi, Leopoldo and Chomicki, Jan},
title = {Consistent query answers in inconsistent databases},
year = {1999},
isbn = {1581130627},
publisher = {Association for Computing Machinery},
address = {New York, NY, USA},
url = {https://doi.org/10.1145/303976.303983},
doi = {10.1145/303976.303983},
booktitle = {Proceedings of the Eighteenth ACM SIGMOD-SIGACT-SIGART Symposium on Principles of Database Systems},
pages = {68--79},
numpages = {12},
location = {Philadelphia, Pennsylvania, USA},
series = {PODS '99}
}

@article{CHOMICKI200590,
title = {Minimal-change integrity maintenance using tuple deletions},
journal = {Information and Computation},
volume = {197},
number = {1},
pages = {90-121},
year = {2005},
issn = {0890-5401},
doi = {https://doi.org/10.1016/j.ic.2004.04.007},
url = {https://www.sciencedirect.com/science/article/pii/S0890540105000179},
author = {Jan Chomicki and Jerzy Marcinkowski},
}

@inproceedings{Cali03,
author = {Cal\`{\i}, Andrea and Lembo, Domenico and Rosati, Riccardo},
title = {On the decidability and complexity of query answering over inconsistent and incomplete databases},
year = {2003},
isbn = {1581136706},
publisher = {Association for Computing Machinery},
address = {New York, NY, USA},
url = {https://doi.org/10.1145/773153.773179},
doi = {10.1145/773153.773179},
booktitle = {Proceedings of the Twenty-Second ACM SIGMOD-SIGACT-SIGART Symposium on Principles of Database Systems},
pages = {260–271},
numpages = {12},
location = {San Diego, California},
series = {PODS '03}
}

@article{STAWORKO20101,
title = {Consistent query answers in the presence of universal constraints},
journal = {Information Systems},
volume = {35},
number = {1},
pages = {1-22},
year = {2010},
issn = {0306-4379},
doi = {https://doi.org/10.1016/j.is.2009.03.004},
url = {https://www.sciencedirect.com/science/article/pii/S0306437909000143},
author = {Sławomir Staworko and Jan Chomicki},
}

@InProceedings{Lopatenko07,
author="Lopatenko, Andrei
and Bertossi, Leopoldo",
editor="Schwentick, Thomas
and Suciu, Dan",
title="Complexity of Consistent Query Answering in Databases Under Cardinality-Based and Incremental Repair Semantics",
booktitle="Database Theory -- ICDT 2007",
year="2006",
publisher="Springer Berlin Heidelberg",
address="Berlin, Heidelberg",
pages="179--193",
isbn="978-3-540-69270-6"
}

@InProceedings{arming16,
  author =	{Arming, Sebastian and Pichler, Reinhard and Sallinger, Emanuel},
  title =	{{Complexity of Repair Checking and Consistent Query Answering}},
  booktitle =	{19th International Conference on Database Theory (ICDT 2016)},
  pages =	{21:1--21:18},
  series =	{Leibniz International Proceedings in Informatics (LIPIcs)},
  ISBN =	{978-3-95977-002-6},
  ISSN =	{1868-8969},
  year =	{2016},
  volume =	{48},
  editor =	{Martens, Wim and Zeume, Thomas},
  publisher =	{Schloss Dagstuhl -- Leibniz-Zentrum f{\"u}r Informatik},
  address =	{Dagstuhl, Germany},
  URL =		{https://drops.dagstuhl.de/entities/document/10.4230/LIPIcs.ICDT.2016.21},
  URN =		{urn:nbn:de:0030-drops-57900},
  doi =		{10.4230/LIPIcs.ICDT.2016.21}
}

@inproceedings{10.1145/3452021.3458334,
author = {Koutris, Paraschos and Ouyang, Xiating and Wijsen, Jef},
title = {Consistent Query Answering for Primary Keys on Path Queries},
year = {2021},
isbn = {9781450383813},
publisher = {Association for Computing Machinery},
address = {New York, NY, USA},
url = {https://doi.org/10.1145/3452021.3458334},
doi = {10.1145/3452021.3458334},
booktitle = {Proceedings of the 40th ACM SIGMOD-SIGACT-SIGAI Symposium on Principles of Database Systems},
pages = {215--232},
numpages = {18},
location = {Virtual Event, China},
series = {PODS'21}
}

%%
%% If your work has an appendix, this is the place to put it.

\newpage
\newpage
\section{Technical Appendix}

In the following, we provide the proofs omitted in the main text. In each case, the claim of
the theorem or lemma is stated once more for the reader’s convenience.

\subsection{Proofs for Section~\ref{section:parameterized}}

\begin{claim*}[of Lemma~\ref{lemma:p-basic-aea}]
  For each modification operation $\otimes\in \{\mathrm{del}, \mathrm{add}, \mathrm{edit}\}$, $\PEM{basic}{\otimes}(aea)$ contains a $\Class{W}[2]$-hard problem.
\end{claim*}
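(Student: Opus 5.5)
We plan to reduce $\PLang{Set-Cover}$ to $\PEM{basic}{\mathrm{del}}(\phi)$ for the formula
\begin{align*}
  \phi_{\text{aea}} = \forall x \exists y \forall z \bigl( x \adj y \land \neg( x\adj z \land y \adj z)\bigr).
\end{align*}
It has pattern $aea$ and states that every vertex is incident to an edge that lies in no triangle. Self-loops are unavailable in basic graphs, so we cannot reuse the marking trick of Lemma~\ref{lemma:p-undir-ae}. Instead, a set $s$ counts as \emph{chosen} when a triangle-breaking edge at $s$ is deleted.

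\emph{Construction.} Given $(S \mathbin{\dot\cup} U, E, k)$, we build $G'$ as follows.
\begin{itemize}
  \item For each $u \in U$, add $2k+1$ copies $u_1,\dots,u_{2k+1}$.
  \item For each $s\in S$, add the vertex $s$ and a partner $w_s$ with $s \undiradj' w_s$.
  \item Whenever $u \undiradj s$, add $u_i \undiradj' s$ and $u_i \undiradj' w_s$ for all $i$.
  \item To every vertex $s$ and every $w_s$, attach a private pendant vertex.
\end{itemize}
The pendant edges lie in no triangle, so all vertices except the copies $u_i$ are happy and remain happy under any deletion, provided we never delete pendant edges (which never helps). Each copy $u_i$ is incident only to edges $u_i s$ and $u_i w_s$. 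The edge $u_i s$ lies exactly in the triangle $u_i s w_s$, because the $s$'s are pairwise non-adjacent and the copies are pairwise non-adjacent. Hence $u_i$ is happy if and only if, for some $s$ covering $u$, one of the edges $s w_s$, $u_i w_s$, $u_i s$ is deleted (and the remaining edge among $u_i s$, $u_i w_s$ survives). We set $k'=k$.

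\emph{Forward direction.} Given a cover $C$, we delete $s w_s$ for each $s\in C$. Then every $u_i$ has a triangle-free edge to some $s \in C$.

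\emph{Backward direction.} Each deletion not of the form $s w_s$ touches at most one copy. If some $u$ were not adjacent to any $s$ whose edge $s w_s$ was deleted, then all $2k+1$ copies of $u$ would need private deletions, which exceeds $k$. So $\{s \mid s w_s \text{ deleted}\}$ is a cover of size at most $k$.

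\emph{Editing.} An added edge $xy$ can make at most its two endpoints happy. In addition, it can destroy happiness only by creating triangles, which never helps. Adding an edge also cannot make $s w_s$ triangle-free. So at most $2k$ copies of an uncovered $u$ can be repaired by additions and other deletions, and the $2k+1$ copies force a genuine cover. The addition version follows via Observation~\ref{observation:del-add-same}.

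The main obstacle is making the editing argument rigorous. Added edges between copies or to pendants could create new triangle-free edges for two copies at once. Added edges could also interfere with the pendants' happiness, although additions touching a pendant only cost budget. We will handle this by a careful counting of ``copies repaired per modification'', which is at most $2$ for any modification other than deleting some $s w_s$. This is why $2k+1$ copies are used.
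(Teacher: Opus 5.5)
Your proposal is correct and is essentially the paper's proof: you use the same formula and the same key gadget, where deleting the edge $s w_s$ (the paper's $s s'$) destroys every triangle $u_i s w_s$ at once, with $2k+1$ copies to force this; the only difference is that you keep $s$ and $w_s$ happy with private pendant vertices, whereas the paper embeds $s, s'$ into a $4$-cycle $s s' s'' s'''$. The editing obstacle you flag is settled by the endpoint count you already hint at: take a copy $u_i$ of some $u$ not covered by $\{s \mid s w_s \text{ deleted}\}$ that is not an endpoint of any modified pair; it keeps exactly its original edges, each of which lies in an intact triangle $u_i s w_s$, and $k$ modified pairs have at most $2k$ endpoints, so such a $u_i$ exists.
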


\begin{proof}

  \emph{The formula.} Consider the formula
  \begin{align*}
    \phi_{\ref{lemma:p-basic-aea}} = \forall x \exists y \forall z
    \bigl(x\adj y \land \neg(x\adj z \land y \adj z)\bigr), %\label{lemma:p-basic-aea-eq}
  \end{align*}
  which expresses the property that every vertex has a neighbor with whom it is not part of a
  triangle. We will call such a neighbor $y$ a \emph{witness} for $x$.

  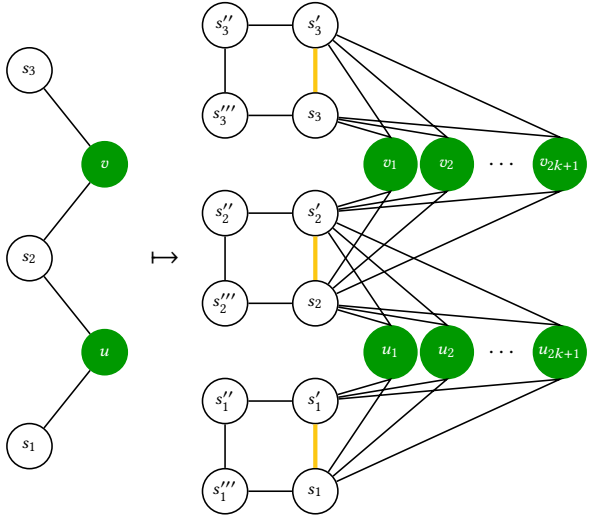
\begin{figure}
    \centering
    \begin{tikzpicture}[
        right part/.style={xshift=3.8cm},y=2.5cm,
        node/.append style={minimum size=6mm, inner sep=0pt},
        large/.style={minimum size=7mm},
      ]
      
      \foreach \i in {1,2,3} {
        \node (s\i) at (0,\i) [node] {$s_\i$};
  
        \node (ns\i)    at (0,\i) [node]      [right part,shift={( 0mm,-6mm)}] {$s_\i$};
        \node (ns\i')   at (0,\i) [node] [right part,shift={( 0mm,6mm)}] {$s_\i'$};
        \node (ns\i'')  at (0,\i) [node] [right part,shift={(-12mm,6mm)}] {$s_\i''$};
        \node (ns\i''') at (0,\i) [node] [right part,shift={(-12mm,-6mm)}] {$s_\i'''$};
  
        \draw [delete edge] (ns\i) -- (ns\i');
        \draw  (ns\i') -- (ns\i'');
        \draw  (ns\i'') -- (ns\i''');
        \draw  (ns\i''') -- (ns\i);
      }

      \foreach \i/\name in {1/u,2/v} {
        \node (\name)   at (0,\i) [green node,shift={(1,.5)}] {$\name$};
  
        \node (\name1)  at (0,\i) [green node, large]  [right part,shift={( 1,.5)}] {$\name_1$};
        \node (\name2)  at (0,\i) [green node, large]  [right part,shift={( 1.75,.5)}] {$\name_2$};
        \node (\name3)  at (0,\i) []  [right part,shift={( 2.5,.5)}] {$\dots$};
        \node (\name4)  at (0,\i) [green node, large]  [right part,shift={( 3.25,.5)}] {$\name_{2k+1}$};
      }
  
      \foreach \s/\u/\anc in {1/u/south,2/u/north,2/v/south,3/v/north} {
        \draw (s\s) -- (\u);
  
        \draw 
        (ns\s) -- (\u1.\anc) -- (ns\s')
        (ns\s) -- (\u2.\anc) -- (ns\s')
        (ns\s) -- (\u4.\anc) -- (ns\s');
      }
  
      % arrow
      \node (arrow) at (1.8,2) {\Large$\mapsto$};
    \end{tikzpicture}
    \caption{Visualization of the reduction in Lemma~\ref{lemma:p-basic-aea}. }
    \label{figure:p-basic-aea}
  \end{figure}

  \emph{The reduction.} We will reduce to the edge deletion problem. Let $(S
  \mathbin{\dot\cup} U, E, k)$ be given as
  input.  The reduction outputs $k' = k$ together with the basic
  graph $G' = (V', E')$ constructed as follows:
  \begin{itemize}
  \item For each $s \in S$, add $s$ to $V'$ and also three more
    vertices $s'$, $s''$,~$s'''$ and connect them in a cycle, that is $s \undiradj' s'
    \undiradj' s'' \undiradj'  s''' \undiradj' s$.
  \item For each $u \in U$, add $2k+1$ copies $u_1,\dots, u_{2k+1}$
    to~$V'$.  
  \item Whenever $u \undiradj s$ holds, let all $u_i$ form a triangle with $s$
    and~$s'$ in the new graph, that is, for $i\in\{1,\dots, 2k+1\}$ let
    $u_i \undiradj' s$ and $u_i \undiradj' s'$.
  \end{itemize}
  An example for the reduction is shown in Figure~\ref{figure:p-basic-aea}, using
  the same conventions as in Figure~\ref{figure:p-undir-ae}. Each $s\in S$ is made part
  of a length-4 cycle, while for each element of $U$ exactly $2k+1$ copies are
  added to the new graph. Each edge $s \undiradj u$ gets replaced by $4k+2$ edges,
  namely $u_i \undiradj' s$ and $u_i \undiradj' s'$ for all copies $u_i$ of~$u$. The
  size-$1$ set cover $C = \{s_2\}$ corresponds to the fact that deleting
  exactly $s_2 \undiradj' s_2'$ from the right graph yields a graph in which each
  vertex has a neighbor with whom it is not part of a triangle. The same is
  true for the size-$2$ set cover $C = \{s_1, s_3\}$. In contrast, $C =
  \{s_1\}$ is not a set cover as $v$ is not covered and, indeed, all four
  neighbors of~$v_1$ (namely $s_2$, $s_2'$, $s_3$, and $s_3'$) are part of
  triangles, if we  delete neither $s_2\undiradj' s_2'$ nor $s_3\undiradj' s_3'$.

  \emph{Forward direction.}
  Let  $(S \mathbin{\dot\cup} U, E, k) \in \PLang{Set-Cover}$ be given. We need
  to show that $(G', k') \in \PEM{basic}{\mathrm{del}}(\phi_{\ref{lemma:p-basic-aea}})$
  holds. Let $C \subseteq S$ with $|C| \le k$ be a cover of~$U$. We claim that $(V', E'
  \setminus C') \models \phi_{\ref{lemma:p-basic-aea}}$ for $C' = \{(s, s') \mid
  s \in C\} \cup \{(s', s) \mid s \in C\}$, that is, removing all $s \undiradj' s'$ from~$E'$
  yields witnesses satisfying $\phi_{\ref{lemma:p-basic-aea}}$.
  Let us check that this is true for all vertices in $V'$: For all $s \in S$, each of the
  vertices $s,s',s'',s'''$ has a neighbor in~$G'$ that is not part of a
  triangle, that is, a witness. This still holds for $s, s',s'',s'''$ in $(V', E'\setminus C')$, because
 all of them have witnesses among each other.

  For each $u \in U$ and $i \in \{1, \dots, 2k+1\}$, there is a neighbor of $u_i$ in $(V',
  E'\setminus C')$ that is not part of a triangle in $(V', E'\setminus C')$:
  Since $C$ is a cover, there must be an $s \in C$ with $u \undiradj' s$. But
  then, there is no other vertex with which $u$ and $s$ form a triangle, since
  the only common neighbor of $u$ and $s$ was $s'$, and
  $s\undiradj s'$ was deleted from $E'$.

  \emph{Backward direction.}
  Conversely, suppose that for $G' = (V', E')$ we are given a set $D \subseteq
  E'$ with $\|D\| \le k$ such that $(V', E'\setminus D) \models
  \phi_{\ref{lemma:p-basic-aea}}$. For each $u \in U$, consider the copies $u_i$
  for $i\in \{1,\dots,2k+1\}$. For every $s \in S$ with $s \undiradj u$ there is a
  triangle $u_i \undiradj' s \undiradj' s' \undiradj' u_i$, but every neighbor
  of~$u_i$ in $G'$ is an $s$ or $s'$ of such a triangle. Hence, in order to ensure that $\phi_{\ref{lemma:p-basic-aea}}$
  holds, we have to (1) delete either~$u_i\undiradj' s$ or $u_i\undiradj' s'$ or (2)
  delete $s \undiradj' s'$. Since there are $2k+1$ copies of~$u$, we cannot use
  option~(1) for all copies of~$u$, so for each $u \in U$ there must be an
  $s \in S$ with $s\undiradj u$ such that $s\undiradj' s'$ is deleted from $G'$. However,
  this means that the set $C = \{s \in S \mid (s, s') \in D\}$ is a set cover
  of $(S \mathbin{\dot\cup} U, E)$ and, clearly, $|C| \le \|D\| \le k$.

  \emph{Edge editing.} Adding edges does not impact the
  reduction: We call a vertex happy if it has a neighbor with whom it is
  not part of a triangle. Naturally,
  $G\models\phi_{\ref{lemma:p-basic-aea}}$ if every vertex is happy. Adding an
  edge between $u_i$ and $u_j$ with $i \neq j$ that stem from the same $u \in U$ does not make $u_i$ and $u_j$ happy, since for all $s\in
  S$ with $s\undiradj u$, we have a triangle $s\undiradj' u_i\undiradj'
  u_j\undiradj' s$. Adding
  an edge between $u_i$ and $v_j$ that are copies of distinct vertices in $U$ only makes $u_i$ and $v_j$ happy if there is no $s$
  with $s\undiradj' u_i$ and $s\undiradj' v_j$. In either case, this operation only makes up to two vertices
  happy, which is not enough to cover the $4k + 2$ vertices $u_i$ and $v_j$ with
  $k$ edge additions.

  \emph{Adaption to the directed setting.}
  We adapt the reduction above by choosing the formula
  \begin{align*}
    \phi_{\ref{lemma:p-basic-aea}, \text{dir}} = \forall x \exists y \forall z
    \bigl(x\adj y \land \neg(x\adj z \land y \adj z) \land \phi_{\text{undir}}(x, z)\bigr), %\label{lemma:p-basic-aea-eq}
  \end{align*}
  
  where $\phi_{\text{undir}}(x, y) = x\adj y \to y\adj x$, and setting the
  modification budget to $2k$. Since we defined undirected graphs as directed graphs with a symmetric edge relation,
  $\phi_{\text{undir}}$ enforces that each edit of an edge entails an edit of the edge oriented in the opposite way.
\end{proof}

\begin{claim*}[of Lemma~\ref{lemma:p-basic-aee}]
  For each modification operation $\otimes\in \{\mathrm{del}, \mathrm{add}, \mathrm{edit}\}$, $\PEM{basic}{\otimes}(aee)$ contains a $\Class{W}[2]$-hard problem.
\end{claim*}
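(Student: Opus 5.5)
We reduce $\PLang{Set-Cover}$ to a modification problem for a formula with pattern $aee$, following the six-step template used for Lemma~\ref{lemma:p-undir-ae} and Lemma~\ref{lemma:p-basic-aea}. On basic graphs we cannot use self-loops as "switches", so the set-choice gadget has to be expressed through a second existential witness. The formula we would use is
\begin{align*}
  \phi_{\ref{lemma:p-basic-aee}} = \forall x \exists y \exists z
  \bigl(x\adj y \land y\adj z \land x \neq z \land x\nadj z\bigr),
\end{align*}
which says that every vertex $x$ has an induced $P_3$ starting at $x$. Equivalently, $x$ has a neighbor $y$ which in turn has a neighbor $z$ outside $N[x]$.

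The construction works in the deletion setting. For each $s\in S$ we add a vertex $s$ together with a private pendant-like gadget, and for each $u\in U$ we add $2k+1$ copies $u_1,\dots,u_{2k+1}$, with $u_i\undiradj' s$ whenever $u\undiradj s$. The design must ensure that a copy $u_i$ can only have a "far" vertex behind a set vertex $s$ once one specific edge at $s$ has been deleted. A natural way to achieve this is to add, for each $s$, a vertex $s'$ adjacent to $s$ and to all copies of all $u$. Then $s'$ is initially in $N[u_i]$, and the only other neighbors of $s$ are copies of elements, which lie at distance at most $2$ from $u_i$ anyway. To make paths through $s'$ useless as well, we make all copies and all $s'$ mutually adjacent, so that every $u_i$ sees every vertex except the set vertices it is not incident to. This forces $z$ to be a non-neighbor of $u_i$ reached through some set vertex. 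If this is still too permissive, we would instead give each $s$ a private neighbor $t_s$ that is adjacent to all copies $u_i$ of every element, and argue that deleting $t_s\undiradj' u_i$ is too expensive, while deleting the single edge $s \undiradj' t_s$\,\dots

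At this point we would adjust the gadget until exactly one budget-$1$ operation per set "activates" a far vertex reachable from every $u_i$ adjacent to $s$. We also need every gadget vertex to be happy from the start, which is achieved by attaching a private path of length $2$ that no deletion needs to touch.

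The forward direction is that deleting the activating edges for the sets in a cover of size at most $k$ makes all copies happy. The backward direction uses the $2k+1$ copies: a single deletion can create a new non-adjacency involving at most two copies, so some activating edge adjacent to each $u$ must have been deleted. Edge editing is the main obstacle. Adding an edge $u_i\undiradj' w$ can directly create new $P_3$'s, because a neighbor $w$ may itself have far neighbors. The construction therefore has to make such additions help at most $O(1)$ copies each, and the $2k+1$-copy counting argument then finishes the proof. Finally, for the directed case, we conjoin $\phi_{\text{undir}}$ as in Lemma~\ref{lemma:p-basic-aea} and double the budget.
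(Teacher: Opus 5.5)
There is a genuine gap. The reduction is never specified: your first gadget is abandoned, and the fallback breaks off mid-sentence, with ``adjust the gadget until\dots'' standing in for the construction. The first gadget also fails on its own terms. If all copies and all $s'$ form a clique, then for any set $t$ with $u\notin t$ you already have $u_i\adj t'\adj t$ and $u_i\nadj t$, so the copies are satisfied before any modification.

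More fundamentally, the mechanism you are aiming for cannot exist for your formula in the deletion setting. Call $x$ happy if $\exists y\exists z\,(x\adj y\land y\adj z\land x\neq z\land x\nadj z)$ holds. Deleting edges not incident to $x$ leaves $N(x)$ and the non-neighbours of $x$ unchanged and can only shrink $N(y)$. So such deletions never turn an unhappy $x$ into a happy one, and no ``activating'' deletion at a set gadget (such as $s\undiradj' t_s$) can help any copy $u_i$. Each initially unhappy copy needs a deleted edge incident to itself, and one deletion is incident to at most two copies. With $2k+1$ unhappy copies per element, your output instance is therefore a no-instance as soon as $U\neq\emptyset$, for every choice of gadget. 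The counting argument you plan for the backward direction is exactly what destroys the forward direction.

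Switching to addition does not save this formula either. A vertex is unhappy exactly when it is adjacent to all other vertices of its connected component. So if copies are joined to the vertices of their sets, as in your construction, two unhappy copies of elements sharing a set are forced to be adjacent to each other's set vertices. That erases the incidence structure you need to encode.

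What you need is a formula whose condition on the existential witnesses lives among the neighbours of $x$ and is monotone in the direction of the modification. Then a single edit away from the copies can fix all copies adjacent to one set gadget at once. The paper uses $\forall x\exists y\exists y'(x\adj y\land x\adj y'\land y\adj y')$ (``every vertex lies in a triangle'') with edge \emph{addition}:
\begin{itemize}
\item Each $s_i$ gets two disjoint triangles with distinguished vertices $s_{i,1}$ and $s'_{i,1}$.
\item All $2k+1$ copies of each $u\in s_i$ are joined to both distinguished vertices.
\item Adding the single edge $s_{i,1}\undiradj' s'_{i,1}$ puts all these copies into triangles.
\end{itemize}
In the backward direction, added edges $s\undiradj' t'$ with $s\neq t$ are first normalised to $s\undiradj' s'$. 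The $2k+1$ copies then rule out fixing every copy by incident edges, so some added edge joins two neighbours of a copy of $u$, i.e.\ it is $s\undiradj' s'$ for a set containing $u$. The deletion case follows from Observation~\ref{observation:del-add-same}, and editing works because deleting edges never creates triangles.
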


\begin{proof}

  \emph{The formula.} Consider the formula
  \begin{align*}
    \phi_{\ref{lemma:p-basic-aee}} = \forall x \exists y \exists y'
    (x\adj y \land x\adj y' \land y\adj y'), %\label{lemma:p-basic-aee-eq}
  \end{align*}
  which expresses that every vertex is part of a triangle.

  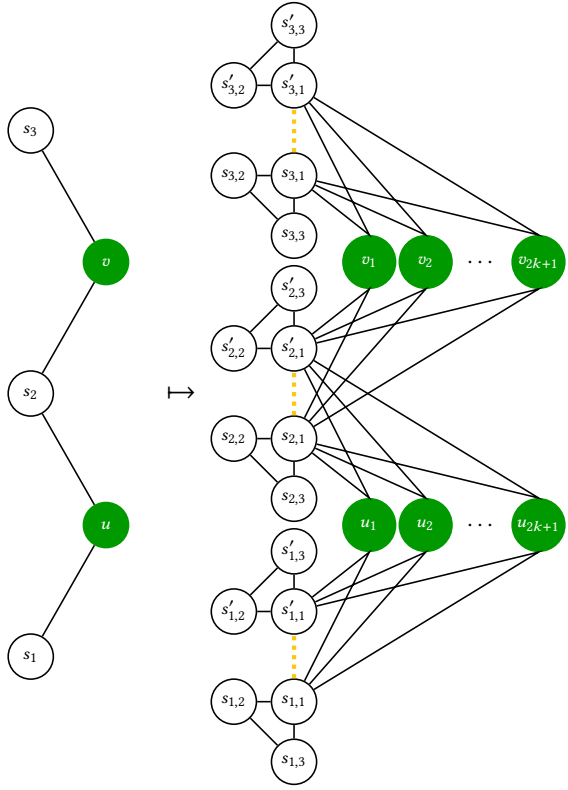
\begin{figure}
    \centering
    \begin{tikzpicture}[
        right part/.style={xshift=3.5cm},y=3.5cm,
        node/.append style={minimum size=6mm, inner sep=0pt},
        large/.style={minimum size=7mm},
      ]
      
      \foreach \i in {1,2,3} {
        \node (s\i) at (0,\i) [node] {$s_\i$};
  
        \node (ns\i)    at (0,\i) [node]      [right part,shift={( 0mm,-6mm)}] {$s_{\i, 1}$};
        \node (ns\i')   at (0,\i) [node] [right part,shift={( 0mm,6mm)}]
        {$s_{\i, 1}'$};
        
        \node (nsc1\i')  at (0,\i) [node] [right part,shift={(-8mm,6mm)}]
        {$s_{\i, 2}'$};
        \node (nsc2\i')  at (0,\i) [node] [right part,shift={(0mm,14mm)}]
        {$s_{\i, 3}'$};
        
        \node (nsc1\i) at (0,\i) [node] [right part,shift={(-8mm,-6mm)}]
        {$s_{\i, 2}$};
        \node (nsc2\i)  at (0,\i) [node] [right part,shift={(0mm,-14mm)}]
        {$s_{\i, 3}$};
  
        \draw [add edge] (ns\i) -- (ns\i');
        \draw  (ns\i') -- (nsc1\i') -- (nsc2\i') -- (ns\i');
        \draw  (ns\i) -- (nsc1\i) -- (nsc2\i) -- (ns\i);
      }

      \foreach \i/\name in {1/u,2/v} {
        \node (\name)   at (0,\i) [green node,shift={(1,.5)}] {$\name$};
  
        \node (\name1)  at (0,\i) [green node, large]  [right part,shift={( 1,.5)}] {$\name_1$};
        \node (\name2)  at (0,\i) [green node, large]  [right part,shift={( 1.75,.5)}] {$\name_2$};
        \node (\name3)  at (0,\i) []  [right part,shift={( 2.5,.5)}] {$\dots$};
        \node (\name4)  at (0,\i) [green node, large]  [right part,shift={( 3.25,.5)}] {$\name_{2k+1}$};
      }
  
      \foreach \s/\u/\anc in {1/u/south,2/u/north,2/v/south,3/v/north} {
        \draw (s\s) -- (\u);
  
        \draw 
        (ns\s) -- (\u1.\anc) -- (ns\s')
        (ns\s) -- (\u2.\anc) -- (ns\s')
        (ns\s) -- (\u4.\anc) -- (ns\s');
      }
  
      % arrow
      \node (arrow) at (2,2) {\Large$\mapsto$};
    \end{tikzpicture}
    \caption{Visualization of the reduction in Lemma~\ref{lemma:p-basic-aee}. }
    \label{figure:p-basic-aee}
  \end{figure}

  \emph{The reduction.} We will reduce to the edge addition problem. Let $(S
  \mathbin{\dot\cup} U, E, k)$ be given as
  input.  The reduction outputs $k' = k$ and a basic
  graph $G' = (V', E')$ constructed as follows:
  \begin{itemize}
  \item For each $s_i \in S$, add vertices named $s_{i, j}, s_{i, j}'$ for $j\in
  \{1, 2, 3\}$ to $V'$ and add edges
  $s_{i, 1} \undiradj' s_{i, 2} \undiradj' s_{i, 3}\undiradj' s_{i, 1}$ as well as
  $s_{i, 1}' \undiradj' s_{i, 2}' \undiradj' s_{i, 3}'\undiradj' s_{i, 1}'$.
  \item For each $u \in U$, add $2k+1$ copies $u_1,\dots, u_{2k+1}$ 
    to~$V'$.  
  \item Whenever $u \undiradj s$ holds, connect all $u_i$ with $s_{i, 1}$
    and~$s_{i, 1}'$, that is, for $i\in\{1,\dots, 2k+1\}$ let
    $u_i \undiradj' s_{i, 1}$ and $u_i \undiradj' s_{i, 1}'$. 
  \end{itemize}
  An example for the reduction is shown in Figure~\ref{figure:p-basic-aee},
    using the same conventions as in the previous figures. For each~$s\in S$, we
    add two triangles, and for each element of $U$ exactly $2k+1$ copies. Each
    edge $s_j \undiradj u$ gets replaced by $4k+2$ edges, namely $u_i \undiradj'
    s_{j, 1}$ and $u_i \undiradj' s_{j, 1}'$ for all copies $u_i$ of~$u$. The
    dotted yellow edges are not present, but candidates for edge addition. The
    size-$1$ set cover $C = \{s_2\}$ corresponds to the fact that adding the
    edge $s_{2, 1} \undiradj' s_{2, 1}'$ to the right graph yields a graph in
    which each vertex is part of a triangle. The same holds for the size-$2$ set
    cover $C = \{s_1, s_3\}$. In contrast, $C = \{s_1\}$ is not a set cover, as
    $v$ is not covered and, indeed, $v_1$ is not part of any triangle, if we do
    not add $s_{2, 1}\undiradj' s_{2, 1}'$ or $s_{3, 1}\undiradj' s_{3, 1}'$.

  \emph{Forward direction.}
  Let  $(S \mathbin{\dot\cup} U, E, k) \in \PLang{Set-Cover}$ be given. We show that $(G', k') \in \PEM{basic}{\mathrm{add}}(\phi_{\ref{lemma:p-basic-aee}})$
  holds. Let $C \subseteq S$ with $|C| \le k$ be a cover of~$U$ and let $C' = \{(s_{i, 1}, s_{i, 1}')
  \mid s_{i, 1}\in C\} \cup \{(s_{i, 1}', s_{i, 1})
  \mid s_{i, 1}\in C\}$. We claim that $(V', E'
  \cup C') \models \phi_{\ref{lemma:p-basic-aee}}$, that is, adding all $s_{i, 1}
  \undiradj' s_{i, 1}'$ for $s_{i, 1}\in C$ to~$E'$
  yields a graph in which every vertex is part of a triangle.
  Let us check that this is true for all vertices in~$V'$: For all $s_i \in S$, both $s_{i, j}$ for $j\in\{1, 2, 3\}$, and $s_{i, j}'$ for $j\in\{1, 2,
  3\}$ constitute triangles. For each $u \in U$ and $i \in \{1, \dots, 2k+1\}$, $u_i$ is part of a triangle in $(V',
  E'\cup C')$: Since $C$ is a cover, there must be an $s_{i, 1}
  \in C$ with $u \undiradj s_{i, 1}$. Then, we also have $u_i\undiradj' s_{i, 1} \undiradj' s_{i, 1}' \undiradj' u_i$ by
  construction.

  \emph{Backward direction.}
  Suppose that for $G' = (V', E')$ we are given a set $A \subseteq
  V'\times V'$ with $\|A\| \le k$ such that $(V', E'\cup A) \models
  \phi_{\ref{lemma:p-basic-aee}}$. Given $A$, if necessary, we modify it as
  follows: If $(s, t') \in A$ for two distinct $s, t\in S$, remove $(s, t')$
  and $(t', s)$ from $A$ and add $(s, s')$ and $(s', s)$ instead. Clearly,
  this does not increase the size of $A$. Crucially, if $(s, t')$ used to be part
  of a witness triangle for some $u_i$, there is now a new witness triangle $u_i \undiradj' s\undiradj' s' \undiradj'
  u_i$.
  
  Now, let $C$ contain all $s\in S$ such that $(s, s')\in A$ holds. Clearly,
  $|C| \leq \|A\| \leq k$. We claim that $C$ is a cover: Let $u\in U$ be given.
  Each of the $2k + 1$ many $u_i$ are part of a triangle in $E'\cup A$, but this is not the case for any of them in $E'$.
  Consequently, for each~$u_i$, we have an
  edge in~$A$ that is either incident to~$u_i$ or that connects two
  neighbors of~$u_i$. Since $\|A\|\leq k$, it is impossible that $A$
  contains an incident 
  edge for each of the $2k + 1$ many $u_i$. Thus, $A$ connects two neighbors
  of~$u_i$, and the only candidates are $s$ and~$s'$ for $s\in S$.
  Since we ensured that the only edges in $A$ between $s$ and~$t'$ vertices have
  $s = t$, we get $s\in C$. Thus, $u$ is covered by~$C$.

  \emph{Edge editing.}
  To see that the reduction works for edge editing, note that deleting edges does not create new triangles.

  \emph{Adaption to the directed setting.} We choose
  \begin{align*}
    \phi_{\ref{lemma:p-basic-aee}, \text{dir}} = \forall x \exists y \exists y'
    (x\adj y \land x\adj y' \land (y\adj y' \lor y'\adj y)),
  \end{align*}
  and perform the reduction in the same way. Since the formula states that the direction of the edge
  between $y$ and $y'$ is not important, it suffices to add one edge between
  $s_{i, 1}$ and $s_{i, 1}'$.
  
\end{proof}

\begin{claim*}[of Lemma~\ref{lemma:p-basic-aae}]
  For each modification operation $\otimes\in \{\mathrm{del}, \mathrm{add}, \mathrm{edit}\}$, $\PEM{basic}{\otimes}(aae)$ contains a $\Class{W}[2]$-hard problem.
\end{claim*}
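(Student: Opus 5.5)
We propose the diameter-two formula
\[
  \phi_{\mathrm{diam}} = \forall x \forall y \exists z \bigl(x = y \lor x\adj y \lor (x\adj z \land z \adj y)\bigr),
\]
which has pattern $aae$ and says that the graph has diameter at most~$2$. We reduce $\PLang{Set-Cover}$ to $\PEM{basic}{\mathrm{add}}(\phi_{\mathrm{diam}})$. The reduction keeps $k' = k$ and builds $G' = (V',E')$ from $(S \mathbin{\dot\cup} U, E, k)$ as follows.
\begin{itemize}
\item Take the vertices $S$, the $2k+1$ copies $u_1,\dots,u_{2k+1}$ of each $u\in U$, and two extra vertices $c$ and $d$.
\item Make $S \cup U'$ a clique, except that $u_i \undiradj' s$ is omitted whenever $u \not\undiradj s$ in~$G$. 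Then $u_i$ is adjacent precisely to the sets containing~$u$ and to all other copies.
\item Make $d$ adjacent to $c$ and to every $s\in S$, but to no copy~$u_i$.
\end{itemize}
Without loss of generality every $u$ lies in some set and $|U'|\ge 2$.

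In $G'$, every pair of vertices has distance at most~$2$ except the pairs $(c,u_i)$. We check the remaining pairs as follows.
\begin{itemize}
\item Two vertices of $S\cup U'$ are adjacent or share a clique neighbour.
\item A pair $(d,u_i)$ uses some $s\ni u$ as common neighbour.
\item Pairs $(c,s)$ and $(d,s)$ go through or are joined by~$d$.
\end{itemize}
Since $N(c)=\{d\}$ and $d\nadj u_i$, every $u_i$ has distance~$3$ from~$c$.

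\emph{Forward direction.} Given a cover~$C$, we add $c \undiradj' s$ for all $s\in C$. Each $u_i$ then reaches $c$ via some $s\in C$ with $u\undiradj s$, and no pair gets farther apart.

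\emph{Backward direction.} Let $A$ be an addition set with $\|A\|\le k$. Fix $u\in U$ and look at the $2k+1$ pairs $(c,u_i)$. Each must be repaired by an added edge of one of three kinds:
\begin{itemize}
\item an edge incident to $u_i$, namely $c\undiradj' u_i$ or $u_i\undiradj' d$;
\item an edge $c \undiradj' w$ with $w$ a neighbour of $u_i$, i.e.\ $w$ a set containing~$u$ or another copy of~$u$;
\item an edge $c \undiradj' s$ with $s \ni u$, which fixes all copies simultaneously.
\end{itemize}
An added edge incident to copies of~$u$ (including $c\undiradj' u_j$, which helps every copy) touches at most two specific copies. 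Suppose we normalise such an edge by replacing it with $c\undiradj' s$ for some $s\ni u$; this keeps every distance bound needed. After normalisation, either some $c\undiradj' s$ with $u\undiradj s$ lies in~$A$, or more than $k$ edges would be needed. Hence $C=\{s\in S \mid (c,s)\in A\}$, after normalisation, is a cover of size at most~$k$.

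The main obstacle is this normalisation step. An edge $c\undiradj' u_j$ is a single edge that serves all copies of~$u$, and adding an edge from~$d$ could reduce many distances at once. We will handle both by an exchange argument: every edge at~$c$ or~$d$ that helps copies of $u$ can be swapped for an edge $c\undiradj' s$ with $s\ni u$ without losing any repaired pair. We expect to verify this carefully; if it fails, we will attach extra pendant paths to~$c$ as in Lemma~\ref{lemma:p-basic-eae}.

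\emph{Editing and directed setting.} For editing, deleting edges never decreases distances, so it never helps and the same argument applies. For the directed setting, we symmetrise the formula as in the adaptation of Lemma~\ref{lemma:p-basic-aea} and set the budget to $2k$.
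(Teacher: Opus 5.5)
Your reduction is not correct, and it breaks exactly at the normalisation step you flagged. You make $S\cup U'$ a clique apart from the non-incidences, so the set $U'$ of \emph{all} copies of \emph{all} elements is itself a clique. Adding the single edge $\{c,u_1\}$, for an arbitrary copy $u_1$, therefore puts every vertex of $U'$ at distance $2$ from $c$. Meanwhile $d$ and all of $S$ are already within distance $2$ of $c$ via $d$. So $G'$ plus one edge always has diameter at most $2$, and $(G',k)$ is a yes-instance for every $k\ge 1$. A concrete failure is $S=\{\{u\},\{v\}\}$ with $k=1$, which has no cover of size~$1$.

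Your exchange of $\{c,u_j\}$ for some $\{c,s\}$ with $s\ni u$ does not preserve the repaired pairs. The edge $\{c,u_j\}$ repairs $(c,v_i)$ for every $v\in U$, whereas $\{c,s\}$ repairs it only for $v\in s$. Dropping the edges between copies of different elements does not help: copies of two elements that share no set are then at distance~$3$, and your forward direction fails. The fallback of pendant paths at~$c$ does not help either. They work in Lemma~\ref{lemma:p-basic-eae} only because the radius formula has a single existential centre, whereas diameter constrains all pairs, so the path ends would themselves be far from~$U'$. The underlying obstacle is that any vertex giving copies of many elements a common neighbour becomes a cheap universal repair once it is joined to~$c$.

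The paper avoids this in two ways. First, it appeals to the known $\Class{W}[2]$-hardness of edge addition to diameter~$2$ due to Gao, Hare, and Nastos. Second, it gives a self-contained reduction to the $aae$-formula $\forall x\forall x'\exists y\,(x\adj x'\to(x\adj y\land x'\adj y))$, which says that every edge lies in a triangle. Because only existing edges are constrained, copies need not be pairwise close, so no hub is needed. In that construction, $c$ is joined to all copies, each $u_i$ is joined to $s$ and $s'$ for every $s\ni u$, and the edge $\{s,s'\}$ puts all set-incidence edges into triangles. Each edge $\{u_i,c\}$ must then be repaired in one of two ways. Either an added edge is incident to $u_i$, and such an edge serves at most two of the $2k+1$ copies. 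Or an added edge joins $c$ to a neighbour of $u_i$, and any $\{c,s'\}$ can be swapped for $\{c,s\}$. Hence the budget $k$ forces a genuine set cover.
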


\begin{proof}
  The statement follows directly from the fact that the problem $\Lang{Edge-Adding to
  Diameter }2$ is $\Class{W}[2]$-hard~\cite{GaoHN13}. The problem can be expressed
  as $\PEM{basic}{\mathrm{add}}(\phi_{\text{diam-$2$}})$ with
  \begin{align*}
    \phi_{\text{diam-$2$}} = \forall x \forall x' \exists y(x\nadj x' \to (x \adj y\land x'\adj y)).
  \end{align*}
  
  Here, we give an alternative proof for the statement by showing
  $\Class{W}[2]$-hardness for $\PLang{Edge-Adding to Triangle Edge
  Cover}$, a problem which is of independent interest in
  discrete mathematics~\cite{BujtasDDGTY25, ErdosGT96}, which is definable as $\PEM{basic}{\mathrm{add}}(\phi_{\text{triangle-edge-cover}})$.
  It asks whether we can add at most $k$ edges to a graph such that
  each edge is covered by a triangle.
  
  \emph{The formula.} Consider the formula
  \begin{align*}
    \phi_{\text{triangle-edge-cover}} = \forall x \forall x' \exists y
    \bigl(x\adj x' \to (x\adj y \land x'\adj y)\bigr), %\label{lemma:p-basic-aae-eq}
  \end{align*}
  which expresses that every edge is part of a triangle (observe the
  similarity to~$\phi_{\text{diam-$2$}}$).

  \begin{figure*}
    \centering
    \begin{tikzpicture}[
      right part/.style={xshift=6cm},y=2.5cm,
      node/.append style={minimum size=6mm, inner sep=0pt},
      large/.style={minimum size=7mm},
    ]
      \begin{scope}[yshift=1cm]
        \node (s1) at (0, 0) [node] {$s_1$};
        \node (s2) at (0, 2) [node] {$s_2$};
        \node (s3) at (0, 4) [node] {$s_3$};
    
        \node (u) at (1, 1) [green node] {$u$};
        \node (v) at (1, 3) [green node] {$v$};

        \draw (s1) -- (u);
        \draw (s2) -- (u);
        \draw (s2) -- (v);
        \draw (s3) -- (v);
      \end{scope}

      \begin{scope}[xshift=5cm]
        \node (s1) at (0, 0) [node] {$s_1$};
        \node (s1') at (0, 1) [node] {$s_1'$};
        \node (s2) at (0, 2) [node] {$s_2$};
        \node (s2') at (0, 3) [node] {$s_2'$};
        \node (s3) at (0, 4) [node] {$s_3$};
        \node (s3') at (0, 5) [node] {$s_3'$};

        \node (c) at (6, 2.5) [red node] {$c$};

        \node (u1) at (3, 1) [green node, large] {$u_1$};
        \node (u2) at (3, 1.5) [green node, large] {$u_2$};
        \node (u31) at (3, 1.725) [circle, fill=black, inner sep=0pt, minimum size=1pt] {};
        \node (u32) at (3, 1.775) [circle, fill=black, inner sep=0pt, minimum size=1pt] {};
        \node (u33) at (3, 1.825) [circle, fill=black, inner sep=0pt, minimum size=1pt] {};
        \node (u4) at (3, 2.05) [green node] {$u_{2k + 1}$};

       % \node (v) at (4, 3) [green node] {$v$};

        \node (v1) at (3, 3) [green node, large] {$v_1$};
        \node (v2) at (3, 3.5) [green node, large] {$v_2$};
        \node (v31) at (3, 3.725) [circle, fill=black, inner sep=0pt, minimum size=1pt] {};
        \node (v32) at (3, 3.775) [circle, fill=black, inner sep=0pt, minimum size=1pt] {};
        \node (v33) at (3, 3.825) [circle, fill=black, inner sep=0pt, minimum size=1pt] {};
        \node (v4) at (3, 4.05) [green node] {$v_{2k + 1}$};

        \draw  (u1) -- (c);
        \draw  (u2) -- (c);
        \draw  (u4) -- (c);

        \draw  (v1) -- (c);
        \draw  (v2) -- (c);
        \draw  (v4) -- (c);

        \draw  (s1) -- (s1');
        \draw  (s2) -- (s2');
        \draw  (s3) -- (s3');

        \draw  (s1) -- (u1);
        \draw  (s1) -- (u2);
        \draw  (s1) -- (u4);
        \draw  (s1') -- (u1);
        \draw  (s1') -- (u2);
        \draw  (s1') -- (u4);

        \draw  (s2) -- (u1);
        \draw  (s2) -- (u2);
        \draw  (s2) -- (u4);
        \draw  (s2') -- (u1);
        \draw  (s2') -- (u2);
        \draw  (s2') -- (u4);

        \draw  (s2) -- (v1);
        \draw  (s2) -- (v2);
        \draw  (s2) -- (v4);
        \draw  (s2') -- (v1);
        \draw  (s2') -- (v2);
        \draw  (s2') -- (v4);

        \draw  (s3) -- (v1);
        \draw  (s3) -- (v2);
        \draw  (s3) -- (v4);
        \draw  (s3') -- (v1);
        \draw  (s3') -- (v2);
        \draw  (s3') -- (v4);

        \draw [add edge] (s1) -- (c);
        \draw [add edge] (s2) -- (c);
        \draw [add edge] (s3) -- (c);
      \end{scope}

      % arrow
      \node (arrow) at (3,2.5) {\Large$\mapsto$};
    \end{tikzpicture}
    \caption{Visualization of the reduction in Lemma~\ref{lemma:p-basic-aae}.
    }
    \label{figure:p-basic-aae}
  \end{figure*}
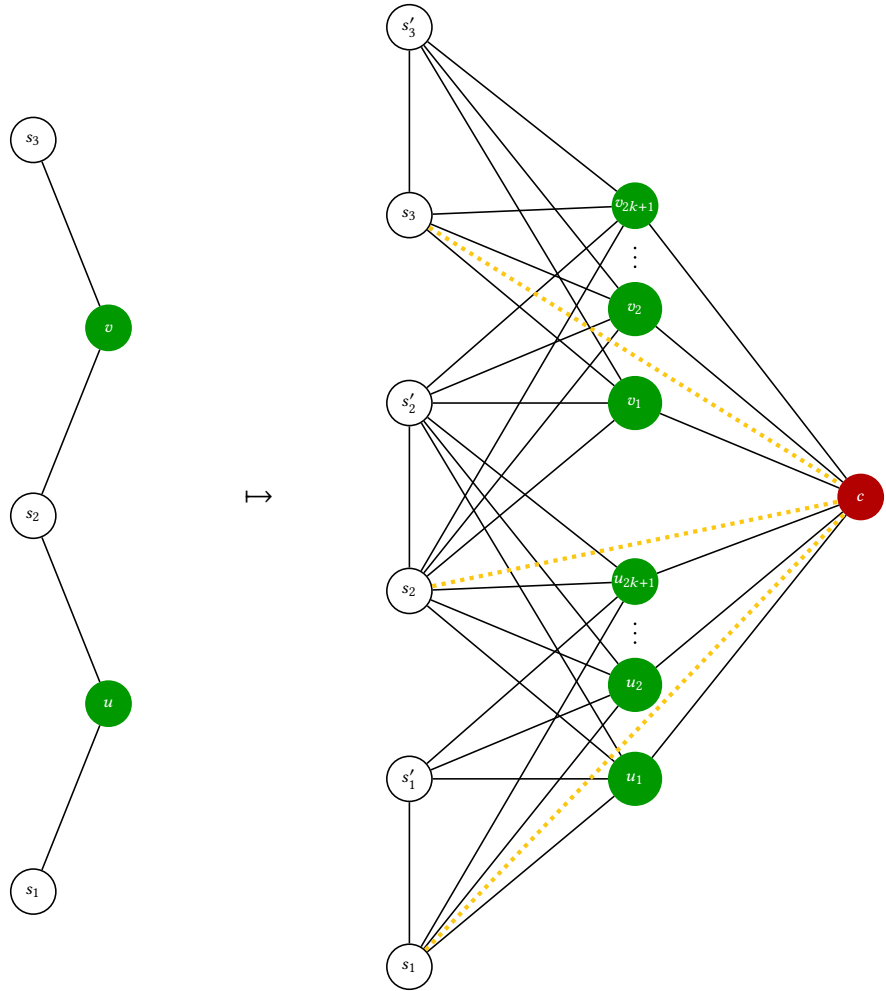

  \emph{The reduction.} We will reduce to the edge addition problem. Let $(S
  \mathbin{\dot\cup} U, E, k)$ be given as
  input.  The reduction outputs $k' = k$ and a basic
  graph $G' = (V', E')$ constructed as follows:
  \begin{itemize}
  \item For each $s_i \in S$, add vertices named $s_{i}, s_{i}'$ to $V'$ and connect them by an edge.
  \item Add a vertex $c$ to $V'$.
  \item For each $u \in U$, add $2k+1$ copies $u_1,\dots,
  u_{2k+1}$ and connect them to $c$.
  \item Whenever $u \undiradj s$ holds, connect all $u_i$ to $s$
    and~$s'$, that is, for $i\in\{1,\dots, 2k+1\}$, let
    $u_i \undiradj' s$ and $u_i \undiradj' s'$. 
  \end{itemize}
  An example for the reduction is shown in Figure~\ref{figure:p-basic-aae}, using
    the same conventions as in the previous figures. Added gadgets
    that depend neither on $S$ nor on $U$ are highlighted in red. For each $s\in S$, we add
    vertices $s, s'$ with $s\undiradj' s'$, and for each element of $U$ we add exactly $k+1$ copies. Each edge $s \undiradj u$ gets replaced by edges connecting
    both $s$ and $s'$ to every copy $u_i$ of~$u$. Finally, a vertex $c$ is added
    and connected to every copy $u_i$ of $u$ for all $u\in U$. The size-$1$ set cover $C = \{s_2\}$ corresponds to the fact that
    adding the edge $c \undiradj' s_2$ yields a graph
    in which each edge is part of a triangle: The edges $s\undiradj' s'$, $s\undiradj' u_i$,
    and $s'\undiradj' u_i$ already form triangles, and now, the edges $c\undiradj' u_i$ are
    part of a triangle with $s\undiradj' u_i$ and $s\undiradj' c$. The same is true for the
    size-$2$ set cover $C = \{s_1, s_3\}$. In contrast, $C = \{s_1\}$ is not a
    set cover, as $v$ is not covered and, indeed, $v_1\undiradj' c$ is not part of any
    triangle, if we add neither $s_2\undiradj' c$ nor $s_3\undiradj' c$.

  \emph{Forward direction.}
  Let  $(S \mathbin{\dot\cup} U, E, k) \in \PLang{Set-Cover}$ be given. We show that $(G', k') \in
  \PEM{basic}{\mathrm{add}}(\phi_{\text{triangle-edge-cover}})$ holds. Let $C
  \subseteq S$ with $|C| \le k$ be a cover of~$U$. Let $C' = \{(s, c) \mid s\in
  C\}\cup \{(c, s) \mid s\in
  C\}$. We claim that $(V', E' \cup C') \models \phi_{\text{triangle-edge-cover}}$,
  that is, adding all $s \undiradj' c$ for $s\in C$ to~$E'$ yields a graph in which every edge is part
  of a triangle. Let us check all edges in $G'$: For edges between
  $s_i$ and $u_j$, the edges $s_i \undiradj' u_j$, $s_i' \undiradj' u_j$, and $s_i \undiradj'
  s_i'$ form a triangle.
  Moreover, each of the copies
  $u_i$ of~$u$ is connected to~$c$, and since $C$ is a cover, by construction, we
  have a triangle consisting of the edges $s \undiradj' u_i$, $s \undiradj' c$, and $u_i
  \undiradj' c$ for $i\in \{1, \dots, 2k+1\}$.

  \emph{Backward direction.}
  Conversely, suppose that for $G' = (V', E')$ we are given a set $A \subseteq
  V' \times V'$ with $\|A\| \le k$ such that $(V', E'\cup A) \models
  \phi_{\text{triangle-edge-cover}}$. If necessary, we modify~$A$ as
  follows: If $(c, s') \in A$, remove $(c, s')$
  and also $(s', c)$ from $A$ and add $(c, s)$ and $(s, c)$ instead. Clearly,
  this cannot increase~$\|A\|$. Crucially, if $(c, s')$ used to be part
  of a witness triangle for some~$u_i$, there is now a witness
  triangle for~$u_i$ that uses $s\undiradj' c$,
  $s\undiradj' u_i$, and $s'\undiradj'
  u_i$.
  
  Now, let $C$ contain all $s\in S$ such that $(c, s)\in A$ holds. Clearly, $|C|
  \leq \|A\| \leq k$. We claim that $C$ is a cover: Let $u\in U$ be given. Each
  of the $2k + 1$ many edges $u_i\undiradj' c$ must be part of a triangle in
  $E'\cup A$. Since none of them are in a triangle in~$E'$, for each $u_i$ we
  must have an edge in $A$ that is either incident to $u_i$ or that connects $c$
  to a neighbor of $u_i$. Since $\|A\|\leq k$, it is impossible that $A$
  contains an incident edge for each of the $2k + 1$ many $u_i$. Thus, $A$ 
  connects $c$ to a neighbor of $u_i$. However, its only neighbors are $s$
  and $s'$ for $s\in S$. Since we ensured that there are no edges between
  $c$ and $s'$ in $A$, we get $s\in C$. Thus, $u$ is covered by $C$.

  \emph{Edge editing.}
  To see that the reduction also works for edge editing, observe that
  for every vertex~$u$ we have $2k+1$ edges $u_i \undiradj' c$ that
  are not part of a triangle. Since we 
  can remove at most $k$ of them, we still have $k+1$ edges $u_i
  \undiradj' c$ left that need to be made part of triangles. This forces us to \emph{add} an edge $c \undiradj' s$ such that
  $s \undiradj u$.

  \emph{Adaption to the directed setting.}
  Similarly to Lemma~\ref{lemma:p-basic-aea}, we can adapt the reduction by choosing the formula
  \begin{align*}
    \phi_{\text{tec, dir}} = \forall x \forall x' \exists y
    \bigl((x\adj x' \to (x\adj y \land x'\adj y)) \land \phi_{\text{undir}}(x, y)\bigr), %\label{lemma:p-basic-aea-eq}
  \end{align*}
  
  where $\phi_{\text{undir}}(x, y) = x\adj y \to y\adj x$, and setting the
  modification budget to $2k$.
\end{proof}

\subsection{Proofs for Section~\ref{section:classical}}

\begin{claim*}[of Lemma~\ref{lemma:basic-ae-aa-ea}]
  For each $\otimes\in \{\mathrm{del}, \mathrm{add}, \mathrm{edit}\}$, we have 
  $\EM{undir}{\otimes}(ae)$, $\EM{undir}{\otimes}(ea)$, $\EM{undir}{\otimes}(aa) \not\subseteq \Class{AC^0}$.
\end{claim*}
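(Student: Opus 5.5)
We reduce from $\Lang{Majority}$ in each case, mirroring the proof of Lemma~\ref{lemma:undir-a}. Since $\Lang{Majority}\notin\Class{AC}^0$ and our reductions are computable by $\Class{AC}^0$-circuits (in fact they are simple projections), membership of the target problem in $\Class{AC}^0$ would put $\Lang{Majority}$ into $\Class{AC}^0$. The lemma sits in the basic-graph part of Theorem~\ref{theorem:basic}, so I will make every construction produce basic graphs. Then the hardness holds for $\EM{basic}{\otimes}$, and also for the undirected version, because basic graphs are undirected and the formulas can be conjoined with $\phi_{\text{basic}}$ without changing the quantifier pattern. For every pattern I first give the deletion version and then argue that addition and editing do not help, or else I switch to the complemented instance via Observation~\ref{observation:del-add-same}.

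\emph{Pattern $aa$.} Take $\phi=\forall x\forall y\,(x=y\lor x\nadj y)$, which says the graph is edgeless. From a bitstring $s$, build a perfect matching with one edge for each $0$-bit, and set $k=|s|/2$. The graph can be made edgeless with at most $k$ deletions iff the number of $0$-bits is at most $|s|/2$. Adding edges never helps reach an edgeless graph, so the editing version is the same problem. For addition, complementing gives the clique formula $\forall x\forall y(x=y\lor x\adj y)$ on the complement graph.

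\emph{Pattern $ae$.} Take $\phi=\forall x\exists y\,(x\neq y\land x\nadj y)$, which says there is no universal vertex. A natural attempt is a graph where every vertex is universal except for a controllable number. Counting is easier, though, if I use editing to isolated vertices: Lemma~\ref{lemma:p-basic-ae} showed that $\forall x\exists y(x\adj y)$ amounts to counting isolated vertices. So for the addition version I use $\phi=\forall x\exists y\,(x\adj y)$. The graph has one isolated vertex per $0$-bit plus a large gadget with no isolated vertices, say a matching on the $1$-bit vertices, with $k=|s|/4$ suitably rounded. A single added edge covers at most two isolated vertices, and can cover exactly two when they are paired with each other. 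Hence a solution exists iff $\#0\le 2k$. Padding the string adjusts the threshold to exactly one half. Deletion and editing then follow by complementation or by the remark that deletions never repair isolation.

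\emph{Pattern $ea$.} Take $\phi=\exists c\forall x\,(x=c\lor x\adj c)$, a dominating vertex (radius $1$). Construct a star centre $c$ joined to the vertices of the $1$-bits; the vertices of the $0$-bits are isolated, with $k=|s|/2$. The main obstacle is ruling out that some other vertex becomes the centre more cheaply. To handle it, I pad with many extra leaves of $c$, for instance $|s|$ of them. Then any other candidate centre is missing more than $k$ edges, so only $c$ is viable, and the cost is exactly the number of $0$-bits. Deletions never help this formula, so editing and addition coincide. Deletion follows by complementation, where again the padding is what pins down the centre.
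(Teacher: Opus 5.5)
Your proposal is correct and follows essentially the same route as the paper: $\Lang{Majority}$ reductions using a matching on the $0$-bits for $aa$, isolated vertices with $\forall x\exists y(x\adj y)$ for $ae$, and a star centred at a vertex $c$ joined to the $1$-bit vertices for $ea$, with the remaining operations obtained via Observation~\ref{observation:del-add-same} and monotonicity. Your variants are, if anything, a bit more careful than the paper's: your $\exists c\forall x(x=c\lor x\adj c)$ is satisfiable on loop-free graphs, unlike the paper's $\exists x\forall y(x\adj y)$, and your padding leaves pin down the centre. On the other hand, the paper's choice of two isolated vertices per $0$-bit with $k=|s|/2$ spares you the rounding and padding in the $ae$ case.
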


\begin{proof}
  For every prefix, we reduce from the $\Lang{Majority}$ problem.

  For $p = ae$, consider the formula $\phi_1 = \forall x \exists y (x\adj y)$. We
  reduce from $\Lang{Majority}$ to $\EM{undir}{\mathrm{add}}(\phi_1)$.
  From the bitstring $s$, we construct a basic graph $G = (V, E)$ by adding two
  isolated vertices for every bit that is set to $0$, and setting $k = |s|/2$. Now, clearly, we can make
  the graph satisfy the formula if and only if at least half of the bits in $s$
  were set to $1$.

  For $p = ea$, consider the formula $\phi_2 = \exists x \forall y (x\adj y)$. We
  reduce from $\Lang{Majority}$ to $\EM{undir}{\mathrm{add}}(\phi_2)$.
  From the bitstring $s$, we construct a basic graph $G = (V, E)$ by adding a
  vertex for every bit. Second, we add another vertex $c$ that is connected to all
  vertices that represent bits set to $1$. Again, we let $k = |s|/2$. To satisfy
  the formula, we must connect all isolated vertices to $c$, which is possible
  if and only if at least half of the bits in $s$
  were set to $1$.

  For $p = aa$, consider the formula $\phi_3 = \forall x \forall y (x\nadj y)$.  We
  reduce from $\Lang{Majority}$ to $\EM{undir}{\mathrm{add}}(\phi_3)$.
  From the bitstring $s$, we construct a basic graph $G = (V, E)$ by adding two
  vertices $v, u$ with $v\undiradj u$ for every bit in $s$ that is set to $0$ and setting $k = |s|/2$.
\end{proof}

\end{document}